\DeclareSymbolFont{symbols2stix}      {LS1}{stixfrak} {m} {n}
\DeclareMathSymbol{\Lbrbrak}                  {\mathopen} {symbols2stix}{"22}
\DeclareMathSymbol{\Rbrbrak}                  {\mathclose}{symbols2stix}{"23}
\DeclareMathSymbol{\lblkbrbrak}               {\mathopen} {symbols2stix}{"36}
\DeclareMathSymbol{\rblkbrbrak}               {\mathclose}{symbols2stix}{"37}
\DeclareSymbolFont{symbolsstix}       {LS1}{stixscr}  {m} {n}
\DeclareSymbolFontAlphabet{\mathscr} {symbolsstix}
\newcommand{\ttrue}{\mathbf{t\!t}}
\newcommand{\ffalse}{\mathbf{f\!f}}
\newcommand{\imc}{\rightarrowtriangle}
\newcommand{\iso}{\cong}
\newcommand{\qu}{\mathbf{q}}
\newcommand{\q}{{{\mathscr{q}}}}
\newcommand{\p}{{{\mathscr{p}}}}
\renewcommand{\r}{{{\mathscr{r}}}}
\newcommand{\mconflict}{\!\!\xymatrix@C=15pt{\, \ar@{~}[r]& \,}\!\!\!\!\!}
\newcommand{\intr}[1]{\llbracket #1 \rrbracket}
\newcommand{\pol}{\mathrm{pol}}
\newcommand{\just}{\mathsf{just}}
\newcommand{\tuple}[1]{\langle #1 \rangle}
\newcommand{\id}{\mathrm{id}}
\DeclareMathOperator{\dom}{dom}
\newcommand{\pred}{\mathsf{pred}}
\newcommand{\suc}{\mathsf{succ}}
\newcommand{\Aug}{\mathsf{Aug}}
\newcommand{\prefix}{\sqsubseteq}
\newcommand{\clone}{\mathrel{\approx}}
\newcommand{\up}{\mathop{\uparrow}}
\newcommand{\tq}{\;|\;}
\newcommand{\bool}{\mathbf{bool}}
\newcommand{\nat}{\mathbf{nat}}
\newcommand{\bq}{\mathbf{q}}
\newcommand{\ev}[1]{|#1|}
\newcommand{\conf}[1]{\mathscr{C}(#1)}
\newcommand{\N}{\mathbb{N}}
\newcommand{\tto}{\Rightarrow}
\newcommand{\Plays}{\mathsf{Plays}}
\newcommand{\pto}{\rightharpoonup}
\newcommand{\Inn}{\mathsf{Inn}}
\newcommand{\deseq}[1]{\Lbrbrak #1 \Rbrbrak}
\newcommand{\x}{\mathsf{x}}
\newcommand{\coll}[1]{\lblkbrbrak #1 \rblkbrbrak}
\newcommand{\Rel}{\mathsf{Rel}}
\renewcommand{\oc}{!}
\newcommand{\pos}[1]{\coll{#1}}
\newcommand{\positions}[1]{\lblkbrbrak\hspace{-2pt}| #1 |\hspace{-2pt}\rblkbrbrak}
\newcommand{\twin}{\mathsf{Fork}}
\newcommand{\caus}[1]{\hat{#1}}
\newcommand{\cod}{\mathsf{cod}}
\newcommand{\init}{\mathsf{init}}
\newcommand{\Trees}{\mathsf{Trees}}
\newcommand{\pview}[1]{\ulcorner\!#1\!\urcorner}
\newcommand{\pviews}[1]{\ulcorner\!\!\ulcorner #1 \urcorner\!\!\urcorner}
\newtheorem{quest}{Question}
\newtheorem{fact}{Fact}
\definecolor{Blue}{rgb}{.2,.2,.6}
\definecolor{Green}{cmyk}{1,0,1,0}
\newcommand{\green}[1]{{\color{Green}#1}}
\definecolor{Red}{rgb}{1,0,0}
\newcommand{\red}[1]{{\color{Red}#1}}
\definecolor{Gray}{rgb}{.7,.7,.7}
\newcommand{\gray}[1]{{\color{Gray}#1}}
\definecolor{Cyan}{cmyk}{1,0,0,0}
\newcommand{\cyan}[1]{{\color{Cyan}#1}}
\newcommand*\circled[1]{\tikz[baseline=(char.base)]{
            \node[shape=circle,draw,inner sep=1pt] (char) {#1};}}
\title{Positional Injectivity for Innocent Strategies}
\author{Lison Blondeau-Patissier}{Univ Lyon, EnsL, UCBL, CNRS, LIP, F-69342, LYON Cedex 07, France}{Lison.Blondeau-Patissier@ens-lyon.fr}{}{}
\author{Pierre Clairambault}{Univ Lyon, EnsL, UCBL, CNRS, LIP, F-69342, LYON Cedex 07, France}{Lison.Blondeau-Patissier@ens-lyon.fr}{}{} 
\authorrunning{L. Blondeau-Patissier and P. Clairambault} 
\keywords{Game Semantics, Innocence, Relational Semantics, Positionality} 
\begin{document}

\maketitle

\begin{abstract}
In asynchronous games, Melli\`es proved that innocent strategies are
\emph{positional}: their behaviour only depends on the position,
not the temporal order used to reach it. This insightful result
shaped our understanding of the link between dynamic
(\emph{i.e.} game) and static (\emph{i.e.} relational) semantics.

In this paper, we investigate the positionality of innocent strategies
in the traditional setting of Hyland-Ong-Nickau-Coquand pointer games.
We show that though innocent strategies are not positional, total finite
innocent strategies still enjoy a key consequence of positionality,
namely \emph{positional
injectivity}: they are entirely determined by their
positions. Unfortunately, this does not hold in general: we show a
counter-example if finiteness and totality are lifted. For finite
partial strategies we leave the problem open; we show however the
partial result that two strategies with the same positions must have the
same P-views of maximal length.
\end{abstract}

\section{Introduction}

\emph{Game semantics} presents higher-order computation interactively as
an exchange of tokens in a two-player game between Player (the
program under study), and Opponent (its execution
environment) \cite{HylandO00,AbramskyJM00}.  Game semantics has had a
strong theoretical impact on denotational semantics, achieving 
full abstraction results for languages for which other tools
struggle. 

At the heart of Hyland and Ong's celebrated model \cite{HylandO00} are
\emph{innocent strategies}, matching \emph{pure}
programs. They matter conceptually and technically: many full
abstraction results rely on innocent strategies and their
definability properties. Accordingly, innocence is perhaps
the most studied notion on the foundational side of game semantics, with
questions including categorical reconstructions
\cite{DBLP:conf/lics/HarmerHM07}, alternative definitions 
\cite{DBLP:journals/tcs/Mellies06,DBLP:journals/cuza/HirschowitzP12},
non-deterministic
\cite{DBLP:conf/lics/TsukadaO15,DBLP:conf/csl/CastellanCW14}, concurrent
\cite{CastellanCW15}, or quantitative
\cite{DBLP:journals/corr/TsukadaO14,DBLP:conf/lics/CastellanCPW18}
extensions. In particular, our modern understanding of
innocence is shaped 
by Melli\`es' homotopy-theoretic reformulation 
in asynchronous games \cite{DBLP:journals/tcs/Mellies06}. In this paper,
Melli\`es also introduced an important result: innocent strategies 
are \emph{positional}. 

\emph{Positionality} is an elementary notion on games on graphs: a
strategy is positional if its behaviour only depends on the current node
-- the ``position'' -- and not the path leading there. In standard game
semantics there is, at first sight, no clear notion of position: plays
are primitive, and it is not clear what is the ambient graph.
In contrast, asynchronous games and relatives (\emph{e.g.} concurrent
games) admit a transparent notion of position:
two plays reach the same
position if they feature the same moves, though not
necessarily in the same order. In investigating positionality,
Melli\`es' motivation was to bridge standard play-based game semantics
with 
more static, \emph{relational}-like semantics
\cite{DBLP:conf/lics/AbramskyM99,DBLP:journals/tcs/Ehrhard12}. Indeed, points of the \emph{web} in
relational semantics correspond to certain positions in game
semantics. Positionality of innocent strategies entails that they are
entirely defined by their positions (a property we shall call
\emph{positional injectivity}), so that collapsing game
to relational semantics corresponds exactly to keeping only
certain positions. See \cite{mall} for a recent
account. 

Now, traditional Hyland-Ong arena games are by no means
disconnected from those developments: bridges with relational semantics
were also investigated there, notably by Boudes
\cite{DBLP:conf/tlca/Boudes09}. There, points of
the web match so-called \emph{thick subtrees}, pomsets
representing partial explorations of the arena with duplications.
This provides
\emph{positions} for Hyland-Ong games.
But then, are innocent strategies still positional? Though it came to us as a
surprise, it is not hard to find a counter-example.
So we focus on the key weakening of the question: are
innocent strategies \emph{positionally injective}? Our main result is
positive, for \emph{total finite} innocent strategies. We first
link Hyland-Ong innocence with an alternative, causal
formulation inspired from concurrent games \cite{mall}, allowing a
transparent link between a strategy and its positions. Drawing
inspiration from the proof of injectivity of the relational model for
MELL proof nets \cite{Carvalho16}, we show how to track down
duplications in certain well-engineered positions to recover a
sufficient portion of the causal structure; and deduce positional
injectivity.  However, we show that in the general case (without
\emph{finiteness} and \emph{totality}), positional injectivity fails.
Finally, for finite (but not total) innocent strategies we show a
partial result, namely that two strategies with the same positions have
the same P-views of maximal length.  

Tsukada and Ong \cite{DBLP:conf/lics/TsukadaO16} show an injective
collapse from a category of innocent strategies onto the relational model.
Their collapse is similar to ours, with an important distinction: they
label moves in each play, coloring contiguous Opponent/Player
pairs identically. Labels survive the collapse, allowing to read back
causal links directly. This is possible because the web of atomic types
is set to comprise countably many such labels -- but then, the
correspondence between positions and points of the web is lost.  In
contrast, our theorem requires us to prove injectivity directly, without
such labeling.

In Section \ref{sec:inn_pos} we introduce the setting and state our main
result. In Section \ref{sec:caus} we reformulate the problem via a \emph{causal}
presentation of game semantics. In Section \ref{sec:pos_inj} we present
the proof of positional injectivity for total finite
innocent strategies. In Section \ref{sec:beyond_total_finite}, we show
some partial results beyond total finite strategies. Finally, in Section
\ref{sec:conclusion}, we conclude. 
Detailed proofs are attached in appendix.

\section{Innocent Strategies and Positions}
\label{sec:inn_pos}

\subsection{Arenas and Constructions}\label{sect:gs-arena}

We start this paper by giving a definition of \emph{arenas}, which
represent \emph{types}. 

\begin{definition}\label{def:arena}
An \textbf{arena} is $A = \tuple{\ev{A}, \leq_A, \lambda_A}$ where
$\tuple{\ev{A}, \leq_A}$ is a partial order, and $\lambda_A : \ev{A} \to
\{-, +\}$ is a \textbf{polarity function}. Moreover, these data must
satisfy:
\[
\begin{array}{rl}
\text{\emph{finitary:}} & \text{for all $a \in \ev{A}$, $[a]_A = \{a'
\in \ev{A} \mid a' \leq_A a\}$ is finite,}\\
\text{\emph{forestial:}} & \text{for all $a_1, a_2 \leq_A a$, then $a_1
\leq_A a_2$ or $a_2 \leq_A a_1$,}\\
\text{\emph{alternating:}} & \text{for all $a_1 \imc_A a_2$, then
$\lambda_A(a_1) \neq \lambda_A(a_2)$,}\\
\text{\emph{negative:}} & \text{for all $a \in \min(A) = \{a \in
\ev{A}\mid \text{$a$ minimal}\}$, $\lambda_A(a)
=-$,}
\end{array}
\]
where $a_1 \imc_A a_2$ means $a_1 <_A a_2$ with no event strictly in
between.
\end{definition}

Though our notations differ superficially, our arenas are similar to
\cite{HylandO00}. They present observable computational events (on a
given type) along with their causal dependencies: positive
moves are due to Player / the program, and negative moves to
Opponent / the environment. We show in Figures
\ref{fig:ar_bool} and \ref{fig:ar_nat}, read from top to bottom, the
representation of the datatypes $\bool$ and $\nat$ as
arenas. Opponent initiates the execution with $\bq^-$,
annotated so as to indicate its polarity, and Player may respond any
possible value, with a positive move.

\begin{figure}[t]
\begin{minipage}{.25\linewidth}
\[
        \xymatrix@C=0pt@R=8pt{
        & \bq^- \ar@{.}[dl] \ar@{.}[dr] & \\
        \ttrue^+ && \ffalse^+
        }
\]
\caption{Arena $\bool$}
\label{fig:ar_bool}
\end{minipage}
\hfill
\begin{minipage}{.25\linewidth}
\[
\xymatrix@C=0pt@R=10pt{
&& \bq^- \ar@{.}[dll] \ar@{.}[dl] \ar@{.}[d]
\ar@{.}[drr] && \\
0^+ & 1^+ & 2^+ && \ldots
}
\]
\caption{Arena $\nat$}
\label{fig:ar_nat}
\end{minipage}
\hfill
\begin{minipage}{.45\linewidth}
\[
\xymatrix@C=10pt@R=-3pt{
(o	\ar@{}[r]|\to&o)\ar@{}[r]|\to&o\ar@{}[r]|\to&o\\
&&&\bq^-\\
&\bq^+	\ar@{.}[urr]
&\bq^+	\ar@{.}[ur]\\
\bq^-	\ar@{.}[ur]
}
\]
\vspace{-15pt}
\caption{Arena $(o \tto o) \tto o \tto o$}
\label{fig:ar_ex_arrow}
\end{minipage}
\end{figure}

We write $1$ for the empty arena and $o$ for the
arena with exactly one (negative) move. More elaborate types involve
matching constructions: the \emph{product} and
the \emph{arrow}.

\begin{definition}
Consider $A_1$ and $A_2$ arenas. Then, we define $A_1
\parallel A_2$ as
\[
\begin{array}{rcl}
\ev{A_1 \parallel A_2} &~~=~~& (\{1\} \times \ev{A_1}) \cup (\{2\}
\times \ev{A_2})\\
(i, a) \leq_{A_1 \parallel A_2} (j, b) &\Leftrightarrow&
	i = j ~~ \wedge ~~ a \leq_{A_i} b\\
\lambda_{A_1 \parallel A_2}(i, a) &=& \lambda_{A_i}(a)\,,
\end{array}
\]
called their \textbf{parallel composition} or
\textbf{product}, and also written $A_1 \times A_2$.
\end{definition}

For any family $(A_i)_{i\in I}$ of arenas, this extends to
$\prod_{i \in I} A_i$ in the obvious way. 
Any arena $A$ decomposes (up to forest
iso) as
$A \iso \prod_{i\in I} A_i$
for some family $(A_i)_{i\in I}$ of arenas which are
\textbf{well-opened}, \emph{i.e.} with \emph{exactly one}
initial (\emph{i.e. minimal})
move. We now define the \emph{arrow}: 

\begin{definition}
Consider $A_1, A_2$ arenas with $A_2$ well-opened. Then $A_1\tto A_2$ has:
\[
\begin{array}{rcl}
\ev{A_1\tto A_2} &~~=~~& (\{1\} \times \ev{A_1}) \cup (\{2\} \times
\ev{A_2})\\
(i, a) \leq_{A_1 \tto A_2} (j, b) &\Leftrightarrow& (i = j \wedge a
\leq_{A_i} b) \vee (i=2 \wedge a \in \min(A_2))\\
\lambda_{A_1 \tto A_2}(i, a) &=& (-1)^i \cdot \lambda_{A_i}(a)
\end{array}
\]

This extends to all arenas with $A \tto \prod_{i\in I} B_i =
\prod_{i\in I} A \tto B_i$ and $A \tto 1 = 1$.
\end{definition}

We will mostly use $A\tto B$ for $B$ well-opened. Figure
\ref{fig:ar_ex_arrow} displays $(o \tto o) \tto o \tto o$, matching the
simple type $(o \to o) \to o \to o$ with atomic type $o$ -- the position
of moves follows a correspondence between those and atoms
of the type. 
These arena constructions describe 
call-by-name computation: once Opponent initiates computation with
$\bq^-$, two Player moves become available. Player may call the second
argument (terminating computation) or evaluate the first argument, 
which in turn allows Opponent to call its argument.

\subsection{Plays and Strategies}

In Hyland-Ong games, players are allowed to \emph{backtrack}, and resume
the play from any earlier stage. This is made formal by the notion of
\emph{pointing strings}:

\begin{definition}
A \textbf{pointing string} over set $\Sigma$ is a string $s \in
\Sigma^*$, where each move may additionally come equipped with a
\textbf{pointer} to an earlier move.
\end{definition}

We often write $s = s_1 \dots s_n$ for pointing strings, leaving pointers
implicit. 

\begin{definition}
A \textbf{play} on arena $A$ is a pointing string $s = s_1 \dots s_n$ over
$\ev{A}$ s.t.:
\[
\begin{array}{rl}
\text{\emph{rigid:}} & \text{If $s_i$ points to $s_j$, then $s_j \imc_A
s_i$,}\\
\text{\emph{alternating:}} & \text{for all $1 \leq i < n$,
$\lambda_A(s_i) \neq \lambda_A(s_{i+1})$,}\\
\text{\emph{legal:}} & \text{for all $1 \leq i \leq n$, either $s_i \in
\min(A)$ or $s_i$ has a pointer.}
\end{array}
\]

A play is \textbf{well-opened} iff it has exactly one initial move. We
write $\Plays(A)$ for the set of plays on $A$, $\Plays^+(A)$ for
even-length plays, and $\Plays_\bullet(A)$ for 
well-opened plays.
\end{definition}

We write $\varepsilon$ for the empty play, $\prefix$ for the prefix, and
$\prefix^+$ if the smaller play has even length. 
\begin{figure}[t]
\begin{minipage}{.32\linewidth}
\[
\xymatrix@C=10pt@R=-1pt{
(o	\ar@{}[r]|\to&
o)	\ar@{}[r]|\to&
o	\ar@{}[r]|\to&
o\\
&&&\bq^-\\
&&\bq^+	\ar@{.}@/^/[ur]\\~\\~
}
\]
\caption{$\lambda f^{o\to o}.\,\lambda x^o.\,x$}
\label{fig:ex_play_1}
\end{minipage}
\hfill
\begin{minipage}{.32\linewidth}
\[
\xymatrix@C=10pt@R=-5pt{
(o	\ar@{}[r]|\to&
o)	\ar@{}[r]|\to&
o	\ar@{}[r]|\to&
o\\
&&&\bq^-\\
&\bq^+	\ar@{.}@/^/[urr]\\
\bq^-	\ar@{.}@/^/[ur]\\
&&\bq^+	\ar@{.}@/^/[uuur]
}
\]
\caption{$\lambda f^{o\to o}.\,\lambda x^o.\,f\,x$}
\label{fig:ex_play_2}
\end{minipage}
\hfill
\begin{minipage}{.32\linewidth}
\[
\xymatrix@C=10pt@R=-8.5pt{
(o	\ar@{}[r]|\to&
o)	\ar@{}[r]|\to&
o	\ar@{}[r]|\to&
o\\
&&&\bq^-\\
&\bq^+	\ar@{.}@/^/[urr]\\
\bq^-	\ar@{.}@/^/[ur]\\
&\bq^+	\ar@{.}@/^/[uuurr]\\
\bq^-	\ar@{.}@/^/[ur]\\
&&\bq^+	\ar@{.}@/^/[uuuuur]
}
\]
\caption{$\lambda f^{o\to o}.\,\lambda x^o.\,f\,(f\,x)$}
\label{fig:ex_play_3}
\end{minipage}
\end{figure}
Plays represent higher-order executions.
Figures \ref{fig:ex_play_1}, \ref{fig:ex_play_2} and \ref{fig:ex_play_3}
show plays on the arena of Figure \ref{fig:ar_ex_arrow}; 
matching typical executions of the corresponding simply-typed
$\lambda$-term. They are read from top to bottom, with 
pointers as dotted lines. As in Figure
\ref{fig:ar_ex_arrow}, the position of moves encodes their
identity in the arena.
Strategies, representing programs, are sets of plays:

\begin{definition}
A strategy $\sigma : A$ on arena $A$ is a non-empty set $\sigma
\subseteq \Plays^+(A)$ satisfying
\[
\begin{array}{rl}
\text{\emph{prefix-closed:}} & \forall s \in \sigma, \forall t \prefix^+
s, t \in \sigma\,,\\
\text{\emph{deterministic:}} & \forall s \in \sigma,~sab, sab' \in
\sigma \implies sab = sab'\,.
\end{array}
\]
\end{definition}

Implicit in the last clause is that $sab$ and $sab'$ also
have the same pointers.

\subsection{Visibility and Innocence}

\emph{Innocence} captures that the
behaviour only depends on which program phrase currently has
control. Intuitively, the ``current program phrase'' is captured by the
\emph{P-view}.

\begin{definition}
For any arena $A$, we set a partial function $\pview{-} : \Plays(A) \pto
\Plays(A)$ as:
\[
\begin{array}{rclcl}
\pview{si} &=& i &\qquad&\text{if $i \in \min(A)$,}\\
\pview{sn^-m^+} &=& \pview{sn}m &&\text{if the pointer of $m$ is in
$\pview{sn}$,}\\
\pview{sn^+tm^-} &=& \pview{sn}m && \text{if $m$ points to $n$,}
\end{array}
\]
undefined otherwise. In the last two cases, $m$ keeps its pointer in the resulting
play.

If defined, $\pview{s}$ is the \textbf{P-view} of $s$. A play $s
\in \Plays(A)$ is \textbf{visible} iff $\forall t \prefix s$,
$\pview{t}$ is defined.
\end{definition}

We say that $s \in \Plays(A)$ is a \textbf{P-view} iff
$\pview{s} = s$.  A strategy $\sigma : A$ is \textbf{visible} iff any $s
\in \sigma$ is visible. In that case, P-views are always
well-defined, so that we may formulate:

\begin{definition}\label{def:innocence}
A strategy $\sigma : A$ is \textbf{innocent} iff it is visible, and
satisfies:
\[
\begin{array}{rl}
\text{\emph{innocence:}} & \text{for all $sab, t \in \sigma$, if $ta \in
\Plays(A)$ and $\pview{sa} = \pview{ta}$, then $tab \in \sigma$.}
\end{array}
\]
where, in $tab$, $b$ points ``as in $sab$'', \emph{i.e.} so
as to ensure that $\pview{sab} = \pview{tab}$.
\end{definition}

An innocent $\sigma : A$ is determined by $\pviews{\sigma} =
\{\pview{s} \mid s \in \sigma\}$, its 
\emph{P-view forest}. Figures
\ref{fig:ex_play_1}, \ref{fig:ex_play_2} and \ref{fig:ex_play_3}
present P-views, each inducing an innocent strategy \emph{via} the
P-view forest obtained by even-length prefix
closure. 
\begin{figure}[t]
\begin{minipage}{.45\linewidth}
\[
\scalebox{.9}{$
\xymatrix@C=2pt@R=-8pt{
((o & \to & o) &\to& o) &\to &o\\
&&&&&& \bq^-  
	\ar@{.}@/_/[dll]
	\ar@{.}@/_/[dddll] \\
&&&&\bq^+ 
	\ar@{.}@/_/[dll]&& \\
&&\bq^- \ar@{.}@/_/[dddll]  & & & &\\     
&&&&\bq^+ 
	\ar@{.}@/_/[dll] & & \\
&&\bq^-\\
\bq^+
}$}
\]
\caption{$K_x : ((o \to o) \to o)\to o$}
\label{fig:ex_str_kx}
\end{minipage}
\hfill
\begin{minipage}{.45\linewidth}
\[
\scalebox{.9}{$
\xymatrix@C=2pt@R=-8pt{
((o & \to & o) &\to& o) &\to &o\\
&&&&&& \bq^-  
	\ar@{.}@/_/[dll]
	\ar@{.}@/_/[dddll] \\
&&&&\bq^+ 
	\ar@{.}@/_/[dll]&& \\
&&\bq^- \\
&&&&\bq^+ 
	\ar@{.}@/_/[dll] & & \\
&&\bq^- \ar@{.}@/_/[dll]  \\
\bq^+
}$}
\]
\caption{$K_y : ((o \to o) \to o) \to o$}
\label{fig:ex_str_ky}
\end{minipage}
\end{figure}
Likewise, Figures \ref{fig:ex_str_kx} and \ref{fig:ex_str_ky} induce
strategies for the so-called simply-typed ``Kierstead terms'' $\lambda
f^{(o \to o)\to o}.\,f\,(\lambda x^o.\,f\,(\lambda y^o.\,x))$ and
$\lambda f^{(o \to o) \to o}.\,f\,(\lambda x^o.\,f\,(\lambda y^o.\,y))$.
P-views are well-opened, so innocent strategies are determined
by their set $\sigma_\bullet$ of well-opened plays.

Innocent strategies form a cartesian closed category
$\Inn$ with as objects arenas, and morphisms from $A$ to $B$ 
the innocent strategies $\sigma : A \tto B$. Composing $\sigma : A \tto B$
and $\tau : B \tto C$ involves a ``parallel interaction plus hiding''
mechanism, which we omit \cite{HylandO00}.

\subsection{Positions}

Boudes' ``thick subtrees'' \cite{DBLP:conf/tlca/Boudes09}, called
\emph{positions} in this paper, are the central concept informing the
link between innocent game semantics and relational semantics. They are
simply desequentialized plays, or in other words prefixes of the
arena with duplications.

To introduce positions, our first stop is the following notion of
\emph{configuration}.

\begin{definition}\label{def:conf}
A \textbf{configuration} $x \in \conf{A}$ of arena $A$ is a tuple $x =
\tuple{\ev{x}, \leq_x, \partial_x}$ such that $\tuple{\ev{x}, \leq_x}$ is a
finite tree, and $\partial_x : \ev{x} \to \ev{A}$,
the \textbf{display map}, is a labeling function s.t.:
\[
\begin{array}{rl}
\text{\emph{minimality-respecting:}} & \text{for all $a \in \ev{x}$, $a$
is $\leq_x$-minimal iff $\partial_x(a)$ is $\leq_A$-minimal,}\\
\text{\emph{causality-preserving:}} & \text{for all $a_1, a_2 \in
\ev{x}$, if $a_1 \imc_x a_2$ then $\partial_x(a_1) \imc_A \partial_x(a_2)$.}
\end{array}
\]
\end{definition}

We call \textbf{events} the elements of $\ev{x}$. Note
$\tuple{\ev{x}, \leq_x}$ has exactly one minimal event, which suffices as
innocent strategies are determined by well-opened plays.  Configurations
include:

\begin{definition}\label{def:deseq_plays}
The \textbf{desequentialization} $\deseq{s} \in \conf{A}$ of
$s = s_1 \dots s_n \in \Plays_\bullet(A)$ has 
$\ev{\deseq{s}} = \{1, \dots, n\}$, $\partial_{\deseq{s}}(i) = s_i$, and $i
\leq_{\deseq{s}} j$ if there is a chain of
pointers from $s_j$ to $s_i$ in $s$.
\end{definition}

\begin{figure}[t]
\begin{minipage}{.4\linewidth}
\[
\scalebox{.9}{$
        \begin{array}{ccc}
        \xymatrix@C=-1pt@R=1pt{
        \bq^- \ar@{.}[d] &&&&& 
                & 1 \ar@{.}[dr] \ar@{.}[dl] \ar@{|->}[llllll]_{\partial_x}& \\
        \bq^+ \ar@{.}[d] &&&&& 
                2 \ar@{.}[d] \ar@{|->}[lllll] 
                && 4 \ar@{.}[d]
\ar@{|->}@/_0.7pc/[lllllll] \\
        \bq^- \ar@{.}[d] &&&&&
                3 \ar@{.}[d]\ar@{|->}[lllll] && 5
\ar@{|->}@/_0.7pc/[lllllll] \\
        \bq^+ &&&&& 
                6 \ar@{|->}[lllll]
        }
        & \quad & 
        \xymatrix@C=-1pt@R=1pt{
                        & 1 \ar@{.}[dr] \ar@{.}[dl]
\ar@{|->}[rrrrrr]^{\partial_y} & 
                &&&&& \bq^- \ar@{.}[d] \\
                        2 \ar@{.}[d] \ar@{|->}@/^0.7pc/[rrrrrrr] && 4
\ar@{.}[d] \ar@{|->}[rrrrr]
                &&&&& \bq^+ \ar@{.}[d] \\
                        3  \ar@{|->}@/^0.7pc/[rrrrrrr] && 5 \ar@{.}[d]
\ar@{|->}[rrrrr]
                &&&&& \bq^- \ar@{.}[d]\\
                        && 6 \ar@{|->}[rrrrr]
                &&&&& \bq^+ 
        }
        \end{array}
$}
\]
\caption{Deseq. $K_x$ and $K_y$}
\label{fig:ex_deseq_k}
\end{minipage}
\hfill
\begin{minipage}{.5\linewidth}
\[
\scalebox{.9}{$
\xymatrix@C=0pt@R=-6.5pt{
(o	\ar@{}[r]|\to&
o	\ar@{}[r]|\to&
o)	\ar@{}[r]|\to&
o	\ar@{}[r]|\to&
o\\
&&&&\bq^-\\
&&\bq^+	\ar@{.}@/^/[urr]\\
\bq^-	\ar@{.}@/^/[urr]\\
&&\bq^+	\ar@{.}@/^/[uuurr]\\
&\bq^-	\ar@{.}@/^/[ur]\\
&&&\bq^+\ar@{.}@/^/[uuuuur]
}
\xymatrix@C=0pt@R=-6.5pt{
(o      \ar@{}[r]|\to&
o       \ar@{}[r]|\to&
o)      \ar@{}[r]|\to&
o       \ar@{}[r]|\to&
o\\
&&&&\bq^-\\
&&\bq^+ \ar@{.}@/^/[urr]\\
&\bq^-   \ar@{.}@/^/[ur]\\
&&\bq^+ \ar@{.}@/^/[uuurr]\\
\gray{\bq^-}
	\ar@{.}@/^/@[gray][urr]\\
~
}$}
\]
\caption{Non-positionality of innocence}
\label{fig:non_pos_inn}
\end{minipage}
\end{figure}
We show in Figure \ref{fig:ex_deseq_k} the desequentialization of the
maximal P-views of $K_x$ and $K_y$ from Figures
\ref{fig:ex_str_kx} and \ref{fig:ex_str_ky}. 
Extracting $\deseq{s}$ is a first step, we must then
forget the identity of its events:

\begin{definition}
A bijection $\varphi : \ev{x} \iso \ev{y}$ is an \textbf{isomorphism}
$\varphi : x \iso y$ iff it is
\[
\begin{array}{rl}
\text{\emph{arena-preserving:}} & \text{for all $a \in \ev{x}$,
$\partial_y(\varphi(a)) = \partial_x(a)$,}\\
\text{\emph{causality-respecting:}} & \text{for all $a_1, a_2 \in
\ev{x}$, we have $a_1 \imc_x a_2$ iff $\varphi(a_1) \imc_y
\varphi(a_2)$.}
\end{array}
\]

A \textbf{position} of $A$, written $\x \in \pos{A}$, is an
isomorphism class of configurations.
\end{definition}

If $s \in \Plays_\bullet(A)$, the \textbf{position} $\coll{s} \in
\pos{A}$ is the isomorphism class of $\deseq{s}$.

We pause to consider the \emph{positionality of innocent
strategies} as mentioned in the introduction. Though it will only play a
very minor role, we define \emph{positional} strategies:

\begin{definition}
Consider $\sigma : A$ a strategy on $A$. We set the condition:
\[
\begin{array}{rl}
\text{\emph{positional:}} &
\text{$\forall sab, t \in \sigma$, $ta \in \Plays(A)$, $\coll{sa}
= \coll{ta} \implies \exists tab \in \sigma, \coll{sab} = \coll{tab}$.}
\end{array}
\]
\end{definition}

Innocent strategies are not
positional: Figure \ref{fig:non_pos_inn} displays (the two maximal
P-views of) the innocent strategy for the $\lambda$-term $\lambda
f^{o \to o \to o}.\,\lambda x^o.\,f\,(f\,\bot\,x)\,(f\,\bot\,\bot)$. On
the right hand side, the last Opponent move is grayed out as an
extension of a P-view triggering no response.
After the fifth move the position is the same, 
contradicting positionality. 
In Melli\`es' asynchronous games
\cite{DBLP:journals/tcs/Mellies06}, explicit copy
indices help distinguish the two calls to $f$. The two plays no longer
reach the same position, restoring positionality.
But even in asynchronous games, if positions were quotiented by symmetry
so as to match relational semantics, positionality would fail.

We turn to the weaker \emph{positional injectivity}.  If $\sigma : A$,
its \textbf{positions} are those reached by well-opened plays,
\emph{i.e.} $ \positions{\sigma} = \{ \coll{s} \mid s \in
\sigma_\bullet\} \subseteq \pos{A} $. We may finally ask our main
question:

\begin{quest}[Positional Injectivity]\label{q:1}
If $\sigma, \tau$ are innocent and $\positions{\sigma} = \positions{\tau}$,
do we have $\sigma = \tau$? 
\end{quest}

\subsection{Links with the Relational Model}
\label{subsec:link_rel}

To fully appreciate this question, it is informative to consider the
link with the relational model. We start with the following observation
concerning positions on the arrow arena.

\begin{fact}\label{fact:rel}
Consider $A$ and $B$ arenas, and write $\mathcal{M}_f(X)$ for the
\textbf{finite multisets} on $X$.

Then, we have a bijection $\pos{A \tto B} \iso \mathcal{M}_f(\pos{A})
\times \pos{B}$.
\end{fact}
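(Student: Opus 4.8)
The plan is to unfold the definition of $A \tto B$ and the definition of a position, and to observe that a configuration of $A \tto B$ is essentially a multiset of configurations of $A$ hanging off the minimal move of a configuration of $B$. More precisely, fix a well-opened arena $B$ (the general case follows since both $\pos{-}$ and $\mathcal{M}_f(\pos{-})\times\pos{-}$ send products to products, using $A \tto \prod_i B_i = \prod_i (A \tto B_i)$; I would state this reduction first). Recall that in $A \tto B$ the minimal moves are exactly the pairs $(2,b)$ with $b \in \min(B)$, and since $B$ is well-opened there is a unique such $b_0$; by minimality-respecting, a configuration $x \in \conf{A\tto B}$ has a unique minimal event $r$ with $\partial_x(r) = (2,b_0)$. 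The immediate successors of $r$ in $x$ split, via causality-preserving and the definition of $\imc_{A\tto B}$, into two kinds: those mapped to some $(2,b)$ with $b_0 \imc_B b$, and those mapped to some $(1,a)$ with $a \in \min(A)$. The former, together with everything above them, assemble into a configuration $x_B \in \conf{B}$; each of the latter generates, together with everything above it, a sub-tree which (after relabelling $(1,a) \mapsto a$) is a configuration of $A$ with a single minimal event, i.e. a well-opened configuration of $A$.

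The map in one direction is thus: send the isomorphism class of $x$ to the pair $(M, \x_B)$ where $\x_B = [x_B] \in \pos{B}$ and $M \in \mathcal{M}_f(\pos{A})$ is the multiset of isomorphism classes of the $A$-subtrees described above. One checks this is well-defined on isomorphism classes: an isomorphism $x \iso x'$ must preserve the minimal event and, being causality-respecting, restricts to isomorphisms on the $B$-part and to a bijection between the families of $A$-subtrees respecting their isomorphism classes — hence it induces the same multiset and the same class in $\pos{B}$. Conversely, given $(M,\x_B)$ with $M = [\![\x_1,\dots,\x_k]\!]$, pick representatives and glue: take a representative $x_B$ of $\x_B$, take representatives $x_1,\dots,x_k$ of $\x_1,\dots,\x_k$ (which we may assume have disjoint carriers, also disjoint from $\ev{x_B}$), identify each minimal event of each $x_i$ as an immediate successor of the unique minimal event of $x_B$, relabel the $x_i$-events into the $\{1\}\times\ev{A}$ component and the $x_B$-events into the $\{2\}\times\ev{B}$ component, and take the union order. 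This yields a configuration of $A\tto B$; one checks minimality-respecting and causality-preserving hold, and that its isomorphism class does not depend on the choices of representatives or of the disjointification.

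Finally I would verify the two round-trips are identities up to isomorphism: starting from $[x]$, decomposing and regluing reproduces $x$ up to the evident isomorphism (the carriers are only renamed); starting from $(M,\x_B)$, gluing and then re-decomposing returns $M$ (the multiset of classes is recovered exactly, since gluing disjoint copies and then splitting off the branches above $r$ is inverse to the decomposition) and $\x_B$. This gives mutually inverse bijections, establishing $\pos{A\tto B} \iso \mathcal{M}_f(\pos{A})\times\pos{B}$, and then the product case finishes the proof.

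The routine but slightly fiddly part is the bookkeeping of isomorphism classes: checking that the decomposition map is constant on isomorphism classes, and that the gluing map is independent of the chosen representatives and of the disjointification — in both cases because an isomorphism of configurations of $A\tto B$ is forced to respect the $1/2$ tags (via the arena-preserving condition) and the tree structure above the root. The only genuine subtlety, and the main thing to get right, is that a finite multiset — rather than a list or a set — is exactly the right bookkeeping datum: the branches above the root carry no intrinsic order (so not a list), but distinct branches may be isomorphic as configurations of $A$ and must be counted with multiplicity (so not a set); this is precisely what makes $\mathcal{M}_f(\pos{A})$, and hence the relational-model reading, appear.
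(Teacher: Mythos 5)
The paper states this as a Fact without proof, so there is no argument of its own to compare against; your decomposition of a configuration of $A \tto B$ into its component-$2$ subtree (a configuration of $B$) together with the multiset of component-$1$ subtrees rooted at children of the initial event (well-opened configurations of $A$), with the gluing construction as inverse, is exactly the standard argument and is correct. One small slip in the reduction step: since configurations are single-rooted trees, $\pos{\prod_i B_i}$ is the \emph{disjoint union} $\sum_i \pos{B_i}$ rather than a product (compare the paper's $\positions{\tuple{\sigma_i \mid i \in I}} \iso \sum_{i\in I} \positions{\sigma_i}$), and the general case then follows by distributing $\mathcal{M}_f(\pos{A}) \times (-)$ over that sum.
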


Recall \cite{DBLP:journals/tcs/Ehrhard12} that the relational model
forms a cartesian closed category $\Rel_\oc$ having \emph{sets} as
objects; and as morphisms from $A$ to $B$ the \emph{relations} $R
\subseteq \mathcal{M}_f(A) \times B$.
Considering simple types
generated from $o$ and the arrow $A \to B$, and
setting the relational interpretation of $o$ as $\intr{o}_{\Rel_\oc} =
\{\bq\}$, then for any type $A$, there is a
bijection $r_A : \pos{\intr{A}_{\Inn}} \iso \intr{A}_{\Rel_\oc}$.  

\begin{theorem}
This extends to a functor $\positions{-} : \Inn \to
\Rel_\oc$, which preserves the interpretation: for any term $M
: A$ of the simply-typed $\lambda$-calculus, 
$r_A(\positions{\intr{M}_\Inn}) = \intr{M}_{\Rel_\oc}$.
\end{theorem}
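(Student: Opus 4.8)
The plan is to set $\positions{-}$ on objects by $A \mapsto \pos{A}$, and on a morphism $\sigma : A \tto B$ of $\Inn$ by sending it to $\positions{\sigma} \subseteq \pos{A \tto B}$, which by Fact~\ref{fact:rel} is a subset of $\mathcal{M}_f(\pos{A}) \times \pos{B}$, i.e.\ a morphism $\pos{A} \to \pos{B}$ of $\Rel_\oc$ (so the notation $\positions{-}$ gets overloaded as the functor itself). The bulk of the work is to show that this is a \emph{cartesian closed functor}; the statement about $\lambda$-terms will then follow formally, essentially as in Boudes' collapse \cite{DBLP:conf/tlca/Boudes09}.

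\emph{Functoriality.} The identity of $A$ in $\Inn$ is copycat; a well-opened play of copycat on $A \tto A$ opens a single copy of the initial move on the right and mirrors it to the left, so $\positions{\id_A} = \{(\{\x\},\x) \mid \x \in \pos{A}\}$, which is $\id_{\pos A}$ in $\Rel_\oc$. For composition I would analyse interaction sequences \cite{HylandO00}: a well-opened play of $\tau \circ \sigma$ on $A \tto C$ is the $B$-hiding of an interaction $u$ on $\tuple{A, B, C}$; desequentialising $u$ and restricting to the $A$-$B$ part, resp.\ the $B$-$C$ part, yields a disjoint union of well-opened configurations, one per copy of the initial move of $B$ opened in $u$. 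This produces a multiset $\bar{\x}_B \in \mathcal{M}_f(\pos{B})$ together with a decomposition of $\coll{u}$ exhibiting it as the relational composite of a suitable family of elements of $\positions{\sigma}$ with an element of $\positions{\tau}$; conversely, any such relational composite is realised by choosing well-opened plays of $\sigma$ and $\tau$ for each $B$-thread, interleaving them into a legal interaction, and hiding $B$.

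\emph{Cartesian closed structure.} The empty arena $1$ is sent to $\emptyset$, the terminal object of $\Rel_\oc$; $\pos{A_1 \parallel A_2} \iso \pos{A_1} \uplus \pos{A_2}$, since a well-opened configuration of $A_1 \parallel A_2$ lives entirely in one component (its unique root lies in one component, and causality-preservation keeps everything there), and $\uplus$ is the cartesian product of $\Rel_\oc$; and $\pos{A \tto B} \iso \mathcal{M}_f(\pos{A}) \times \pos{B}$ is, by Fact~\ref{fact:rel}, exactly the exponential object of $\Rel_\oc$. The structural morphisms of $\Inn$ — projections, pairings, evaluation, currying — are built from copycat-like strategies, and a direct computation (or another instance of the composition argument) shows $\positions{-}$ maps them to the corresponding structural morphisms of $\Rel_\oc$. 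Hence $\positions{-}$ is a cartesian closed functor, and the family $r_A : \pos{\intr{A}_\Inn} \iso \intr{A}_{\Rel_\oc}$ is precisely the canonical comparison built inductively from its action on products and exponentials, starting from the evident bijection $r_o : \pos{o} \iso \{\bq\}$ at base type.

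\emph{Preservation of interpretations.} Both $\intr{-}_{\Rel_\oc}$ and $\positions{-} \circ \intr{-}_\Inn$ are cartesian closed interpretations of the simply-typed $\lambda$-calculus into $\Rel_\oc$, agreeing at $o$ via $r_o$; since the interpretation of a typed term $\Gamma \vdash M : A$ in any cartesian closed category is determined by the cartesian closed structure (variables as projections, application as evaluation after pairing, abstraction as currying), an induction on $M$ gives $r_A \circ \positions{\intr{\Gamma \vdash M}_\Inn} = \intr{\Gamma \vdash M}_{\Rel_\oc} \circ r_\Gamma$, and specialising to closed terms yields $r_A(\positions{\intr{M}_\Inn}) = \intr{M}_{\Rel_\oc}$. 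The main obstacle is the composition case of functoriality, and within it the ``surjectivity'' direction: lifting an arbitrary relational-composition witness back to an actual interaction sequence, where visibility of $\sigma$ and $\tau$ is needed to schedule the independent $B$-threads without breaking legality or alternation.
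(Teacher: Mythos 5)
The paper does not actually prove this theorem: it is stated as background in Section~2.5 and delegated to the literature, with the discussion immediately following it explaining where the difficulty lies. Your overall architecture --- define the functor on objects and morphisms via Fact~\ref{fact:rel}, check functoriality, show the functor is cartesian closed, and conclude by uniqueness of cartesian closed interpretations of the simply-typed $\lambda$-calculus --- is the standard one and matches what the cited references do. Your treatment of objects, identities, the product (well-opened configurations living in a single component) and the exponential comparison is correct.

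The gap is the one you flag yourself in your final sentence, and it is not a minor scheduling detail to be deferred: it is the $\supseteq$ direction of $\positions{\tau \circ \sigma} = \positions{\tau} \circ \positions{\sigma}$, which is the entire mathematical content of the theorem. Given a relational-composition witness --- a position of $\tau$ on $B \tto C$ whose $B$-multiset is matched by a family of positions of $\sigma$ --- you must produce a single legal interaction $u$ on $\tuple{A,B,C}$ realising all of them at once. Choosing a well-opened play of $\sigma$ for each $B$-thread and a play of $\tau$ realising the given positions does not suffice: these plays fix temporal orders of exploration that need not be compatible, and a naive interleaving can deadlock, with each strategy waiting for the other to move. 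This is exactly what the paper warns about (``positions arising relationally might, in principle, fail to appear game-semantically because reproducing them yields a deadlock''), and ruling it out genuinely uses innocence --- not merely visibility --- either through definability and a syntactic argument, or through a semantic deadlock-freedom argument, as in the works cited there. As written, your proposal asserts the interleaving exists (``interleaving them into a legal interaction, and hiding $B$'') and then names the obstacle without closing it, so the central step is missing.
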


This \emph{relational collapse} of innocent strategies
has been studied extensively
\cite{DBLP:conf/tlca/Boudes09,DBLP:journals/tcs/Mellies06,DBLP:conf/lics/TsukadaO16,DBLP:conf/lics/CastellanCPW18,DBLP:journals/pacmpl/ClairambaultV20}. The inclusion $\subseteq$ is easy; the
difficulty in proving $\supseteq$ is that game-semantic interaction is
temporal: positions arising relationally
might, in principle, fail to appear game-semantically 
because reproducing them yields a deadlock.
For innocent strategies this does not happen:
this may be proved through connections with syntax
\cite{DBLP:conf/tlca/Boudes09,DBLP:conf/lics/TsukadaO16} or semantically
\cite{DBLP:conf/lics/CastellanCPW18,DBLP:journals/pacmpl/ClairambaultV20}.

In \cite{DBLP:conf/lics/TsukadaO16}, Tsukada and Ong
prove a similar collapse injective.
This seems to answer Question \ref{q:1} positively -- but this is not so
simple. The interpretation in $\Rel_\oc$ is parametrized by a
set $X$ for the ground type $o$. In
\cite{DBLP:conf/lics/TsukadaO16}, $X$ is required to be countably
infinite: this way one allocates one tag for each pair of chronologically
contiguous O/P
moves, encoding the \emph{causal} / \emph{axiom links}.  In contrast,
for Question \ref{q:1} we are forced to interpret
$o$ with a singleton set $\{\bq\}$, or lose the correspondence between
points of the web and positions.  We must reconstruct strategies
directly from their desequentializations, with no help from labeling or
coloring. 

\subsection{Main result}
\label{subsec:main_result}

At first this seems desperate. In
\cite{DBLP:conf/lics/TsukadaO16}, an innocent strategy 
may already be reconstructed from the desequentialization of its
P-views. But here, the two plays of
Figures \ref{fig:ex_str_kx} and \ref{fig:ex_str_ky} yield the 
configurations of Figure \ref{fig:ex_deseq_k}, which are
isomorphic -- so give the same position. Nevertheless $K_x$ and
$K_y$ \emph{can} be distinguished, via their behaviour under
replication.
\begin{figure}[t]
\[
\scalebox{.9}{$
\xymatrix@C=1pt@R=-6pt{
&& ((o & \to & o) & \to & o) & \to & o
&&&&&&
((o & \to & o) & \to & o) & \to & o \\
&&&&&&&& \bq^- \ar@{.}@/_/[dll]  \ar@{.}@/_/[dddll]
&&&&&&&&&&&& \bq^-  \ar@{.}@/_/[dll] \ar@{.}@/_/[dddll] \\
&&&&&& \bq^+ \ar@{.}@/_/[dll]
&&&&&&&&&&&& \bq^+ \ar@{.}@/_/[dll] \\
K_x\ni &&
&& {\bq^-} \ar@{.}@/_/[dddll]
&&&&&&&& K_y\ni &&&& \bq^- \\
&&&&&& \bq^+ \ar@{.}@/_/[dll]
&&&&&&&&&&&& \bq^+ \ar@{.}@/_/[dll] \\
&&&& \bq^-
&&&&&&&&&&&& {\bq^-} \ar@{.}@/_/[dll] \\
&& \bq^+
&&&&&&&&&&&& \bq^+ \\
&&&& {\bq^-} \ar@{.}@/^/[uuuuurr]
&&&&&&&&&&&& {\bq^-} \ar@{.}@/^/[uuurr] \\
&&&&&& {\bq^+} \ar@{.}@/^/[uuuuuuurr]
&&&&&&&& {\bq^+} \ar@{.}@/^/[urr]
}$}
\]
\caption{Plays yielding positions distinguishing $K_x$ and $K_y$}
\label{fig:play_kx_ky}
\end{figure}
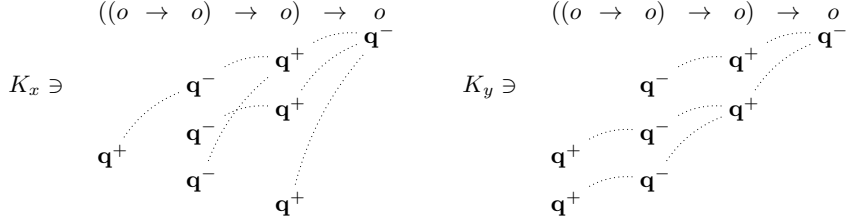
In both plays of Figure \ref{fig:play_kx_ky}, we replay the move to
which the deepest $\bq^+$ points. This brings $K_x$ and $K_y$ to react
differently, obtaining plays whose positions separate
$\positions{K_x}$ and $\positions{K_y}$. So, by observing
the behaviour of a strategy under replication, we can infer some
temporal information.

Most of the paper will be devoted to turning this idea into a proof.
However, we have only been able to prove the result with the following
additional restrictions on strategies.

\begin{definition}
For $A$ an arena, we define conditions on innocent strategies
$\sigma : A$ as:
\[
\begin{array}{rl}
\text{\emph{total:}} & \text{for all $s \in \sigma$, if $sa \in
\Plays(A)$ then there exists $b$ such that $sab \in \sigma$,}\\
\text{\emph{finite:}} & \text{the set $\pviews{\sigma} = \{\pview{s}
\mid s \in \sigma\}$ is finite.}
\end{array}
\]
\end{definition}

Total finite strategies are already well-known: on arenas interpreting
simple types they exactly correspond to $\beta$-normal $\eta$-long
normal forms of simply-typed $\lambda$-terms.  

We now state our main result, \textbf{positional injectivity}:

\begin{theorem}\label{th:main}
For any $\sigma, \tau : A$ innocent total finite, 
$\sigma = \tau$ iff $\positions{\sigma} = \positions{\tau}$.
\end{theorem}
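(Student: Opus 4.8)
The non-trivial direction is $\positions{\sigma} = \positions{\tau} \implies \sigma = \tau$. Since $\sigma,\tau$ are innocent, it suffices to show $\pviews{\sigma} = \pviews{\tau}$; and since both are determined by their well-opened plays, I would argue by induction on the length of P-views, showing that $\sigma$ and $\tau$ agree on all P-views of length $\leq 2k$ for every $k$. The base case is immediate (the unique initial move, present in both by non-emptiness and totality). For the inductive step, suppose $\sigma$ and $\tau$ have the same P-views up to length $2k$; given a P-view $\pview{sa} = sa$ of odd length $2k-1$ shared by both strategies (the Opponent move $a$ being arbitrary), I must show that the (unique, by determinism) Player responses $b$ prescribed by $\sigma$ and $\tau$ coincide, including pointers. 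This is exactly where the replication trick from Figure~\ref{fig:play_kx_ky} must be turned into an argument.

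The key idea, as sketched after Figure~\ref{fig:play_kx_ky}, is that although a single P-view desequentializes to a position that forgets temporal order (and hence cannot on its own distinguish $K_x$ from $K_y$), one can \emph{replay} O-moves along a P-view to force Player to reveal more of the causal structure. Concretely, I would work in the \emph{causal} reformulation of Section~\ref{sec:caus} (which the excerpt promises gives a transparent link between a strategy and its positions): there, an innocent total finite strategy corresponds to a finite causal structure (a tree of P-views glued along shared O-prefixes), and the induction becomes a statement about recovering the immediate causal successor of a given O-move. The plan is: from the common set of positions $\positions{\sigma} = \positions{\tau}$, and using that by the inductive hypothesis the two strategies already agree below level $2k$, I construct a family of ``probe'' well-opened plays — obtained by duplicating selected O-moves inside the P-view $sa$ and letting the strategy react — whose positions are fully determined by $\positions{\sigma}$, and from the multiset structure of those positions (via Fact~\ref{fact:rel}, $\pos{A\tto B} \iso \mathcal M_f(\pos{A}) \times \pos{B}$, applied recursively) I read back precisely which move $b$ points to and what $b$ is. This is the step drawing on the injectivity proof for the relational model of MELL proof nets \cite{Carvalho16}: there too one tracks copies of sub-positions to recover axiom links, and the finiteness hypothesis ensures the probing terminates and the relevant multiplicities are well-defined.

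The main obstacle, and the technical heart of the paper, will be the construction and analysis of these probes: showing that (i) duplicating an O-move in a P-view and closing under the strategy indeed yields a well-opened play of $\sigma$ — this is where \emph{totality} is essential, guaranteeing Player always responds so no deadlock occurs — and (ii) that comparing the induced positions of $\sigma$ and $\tau$ genuinely forces $b$ to agree, i.e.\ that the causal information destroyed by desequentialization at one level is nonetheless \emph{recoverable} by looking one level deeper with duplication. Distinct Player responses (different target of the pointer, or — on arenas like $\nat$ — a different value move) must provably produce distinct multisets of sub-positions somewhere; isolating the right position to probe, and proving it separates the two cases, is delicate because one must avoid ``collisions'' where a spurious duplicate elsewhere masks the difference. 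Finiteness bounds the search and makes the separating position effectively findable; totality keeps every probe realizable. Once $b$ and its pointer are pinned down for every odd-length P-view of length $2k-1$, innocence propagates the agreement to all plays, the induction closes, and we conclude $\sigma = \tau$. The converse direction, $\sigma = \tau \implies \positions{\sigma} = \positions{\tau}$, is trivial since $\positions{-}$ is defined from the plays.
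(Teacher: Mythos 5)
Your high-level plan correctly identifies the paper's central idea --- that replication of Opponent moves forces the strategy to reveal causal structure invisible in a single desequentialized P-view --- and you even correctly flag the main danger (``collisions'' where duplicates elsewhere mask the difference). But the proposal stops exactly at the point where the paper's actual work begins: you do not supply any mechanism that resolves the collision problem, and that mechanism is the entire technical content of the proof. The paper's solution is the notion of \emph{characteristic expansion}: an expansion in which every fork (set of sibling copies of the same Opponent move) has cardinality a \emph{distinct power of two}. Then the cardinality of a class of indistinguishable Player moves has a unique binary decomposition, which identifies exactly which forks are its causal predecessors (Lemma~\ref{lem:partition}). Without a device of this kind, ``comparing the induced positions genuinely forces $b$ to agree'' is an assertion, not an argument; your appeal to Fact~\ref{fact:rel} applied recursively does not work as stated, because a position is an isomorphism class of an unlabelled configuration and the multiset structure does not by itself record which copy of an Opponent move justifies which Player move.

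A second gap is that even with well-chosen multiplicities, two characteristic expansions of $\sigma$ and $\tau$ reaching the same position need \emph{not} be isomorphic (Figure~\ref{fig:ex_bisim}): copies with identical causal follow-ups can be permuted, and different Opponent occurrences can trigger Player moves that are ``the same up to repointing.'' The paper handles this with a bespoke bisimulation $\sim^\varphi_\Gamma$ across an isomorphism of configurations, the derived notion of \emph{clones}, and compositional lemmas letting it conclude $\p_1 \iso \p_2$ from $\p_1 \sim \q_1 \sim^\varphi \q_2 \sim \p_2$. Your level-by-level induction on P-view length would have to confront the same phenomenon inside the inductive step and would need an equivalent of this machinery; the paper instead runs a single global induction on \emph{co-depth} (the Key Lemma~\ref{lem:key}) over one characteristic expansion, which is a genuinely different and cleaner decomposition. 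In short: right idea, right identification of the obstacle, but the proof of the obstacle's resolution --- which is the theorem --- is missing.
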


As observed in Section \ref{sect:gs-arena}, all arenas decompose as $A =
\prod_{i\in I} A_i$ with $A_i$ well-opened. As $\times$
is a cartesian product in $\Inn$, strategies $\sigma : A$ 
also decompose as $\sigma = \tuple{\sigma_i \mid i \in I}$ with
$\sigma_i : A_i$ for all $i\in I$. From innocence it follows
that 
$
\positions{\tuple{\sigma_i \mid i \in I}} \iso \sum_{i\in I}
\positions{\sigma_i}
$,
so it suffices to prove Theorem \ref{th:main} for $A$ well-opened.
From now on, we consider all arenas well-opened.

\section{Causal Presentation}\label{sec:caus}

Besides the behaviour of strategies under replication, plays also
include the order, irrelevant for our purposes, in which branches are
explored by Opponent.
To isolate the effect of replication, we introduce a \emph{causal}
version of strategies inspired from \emph{concurrent games}
\cite{cg1}.

\subsection{Augmentations}

This formulation rests on the notion of \emph{augmentations}.
Intuitively those correspond to expanded trees of P-views, which
enrich configurations with causal wiring from the strategy.

\begin{definition}\label{def:augmentation}
An \textbf{augmentation} on arena $A$ is a tuple $\q = \tuple{\ev{\q},
\leq_{\deseq{\q}}, \leq_\q, \partial_\q}$, where $\deseq{\q} =
\tuple{\ev{\q}, \leq_{\deseq{\q}},\partial_\q} \in \conf{A}$, and
$\tuple{\ev{\q}, \leq_\q}$ is a tree satisfying:
\[
\begin{array}{rl}
\text{\emph{rule-abiding:}} & \text{for all $a_1, a_2 \in \ev{\q}$, if
$a_1 \leq_{\deseq \q} a_2$, then $a_1 \leq_\q a_2$,}\\ 
\text{\emph{courteous:}} & \text{for all $a_1 \imc_\q a_2$, if
$\lambda(a_1) = +$ or $\lambda(a_2) = -$, then $a_1 \imc_{\deseq \q} a_2$,}\\
\text{\emph{deterministic:}} & \text{for all $a^- \imc_\q a^+_1$ and
$a^- \imc_\q a^+_2$, then $a_1 = a_2$,}
\end{array}
\]
we then write $\q \in \Aug(A)$, and call
$\deseq{\q} \in \conf{A}$ the \textbf{desequentialization} of $\q$.
\end{definition}

Events of $\ev{\q}$ inherit a polarity with
$\lambda(a) = \lambda_A(\partial_\q(a))$. By \emph{rule-abiding} and
\emph{courteous}, $\tuple{\ev{\q}, \leq_\q}$ and $\tuple{\ev{\q},
\leq_{\deseq{\q}}}$ have the same minimal event $\init(\q)$, called the
\textbf{initial} event. If $a \in \ev{\q}$ is not initial, there is
a unique $a' \in \ev{\q}$ such that $a' \imc_\q a$, written $a' =
\pred(a)$ and called the \textbf{predecessor} of $a$. Likewise, a
non-initial $a \in \ev{\q}$ also has a unique $a'' \in \ev{\q}$ such
that $a'' \imc_{\deseq{\q}} a$, written $a'' = \just(a)$ and called the 
\textbf{justifier} of $a$. 
By \emph{courteous} and as immediate
causality alternates in $A$ (and hence in $\deseq{\q}$), both
$\pred(a)$ and $\just(a)$ have polarity opposite to $a$. They may
not coincide, however from \emph{courteous} they do for
$a$ negative.

\begin{figure}[t]
\begin{minipage}{.46\linewidth}
\[
\left(
\raisebox{27pt}{$
\scalebox{.85}{$
\xymatrix@C=-2pt@R=5pt{
&&&\bq^-\ar@{.}[dll]
	\ar@{.}[d]
	\ar@{.}[drr]
	\ar@{-|>}@/_.1pc/[dll]\\
&\bq^+	\ar@{.}[dl]
	\ar@{.}[dr]
	\ar@{-|>}@/_.1pc/[dl]
	\ar@{-|>}@/_.1pc/[dr]&&
\bq^+	\ar@{.}[d]
	\ar@{-|>}@/_.1pc/[d]&&
\bq^+	\ar@{.}[d]
	\ar@{-|>}@/_.1pc/[d]\\
\bq^-	\ar@{.}[d]
	\ar@{-|>}[urrr]&&
\bq^-	\ar@{.}[d]
	\ar@{-|>}[urrr]&
\bq^-	\ar@{-|>}[dlll]&&
\bq^-	\ar@{-|>}[dlll]\\
\bq^+&&\bq^+
}$}$}
\right)
\in \exp
\left(
\raisebox{27pt}{$
\scalebox{.85}{$
\xymatrix@C=-4pt@R=5pt{
&\bq^-	\ar@{.}[dl]
	\ar@{.}[dr]
	\ar@{-|>}@/_.1pc/[dl]\\
\bq^+	\ar@{.}[d]
	\ar@{-|>}@/_.1pc/[d]&&
\bq^+	\ar@{.}[d]
	\ar@{-|>}@/_.1pc/[d]\\
\bq^-	\ar@{.}[d]
	\ar@{-|>}[urr]&&
\bq^-	\ar@{-|>}[dll]\\
\bq^+
}$}$}
\right)
\]
\caption{Causal $K_x$ and its expansion}
\label{fig:caus_kx_exp}
\end{minipage}
\hfill
\begin{minipage}{.46\linewidth}
\[
\left(
\raisebox{27pt}{$
\scalebox{.9}{$
\xymatrix@C=-4pt@R=4.5pt{
&&\bq^-	\ar@{.}[dll]
	\ar@{.}[drr]
	\ar@{-|>}@/_.1pc/[dll]\\
\bq^+	\ar@{.}[d]
	\ar@{-|>}@/_.1pc/[d]&&&&
\bq^+	\ar@{.}[dl]
	\ar@{-|>}@/_.1pc/[dl]
	\ar@{.}[dr]
	\ar@{-|>}@/_.1pc/[dr]\\
\bq^-	\ar@{-|>}[urrrr]&&&
\bq^-	\ar@{.}[d]
	\ar@{-|>}@/_.1pc/[d]&&
\bq^-	\ar@{.}[d]
	\ar@{-|>}@/_.1pc/[d]\\
&&&\bq^+&&\bq^+
}
$}$}
\right)
\in \exp
\left(
\raisebox{27pt}{$
\scalebox{.9}{$
\xymatrix@C=-4pt@R=4.5pt{
&\bq^-	\ar@{.}[dl]
	\ar@{.}[dr]
	\ar@{-|>}@/_.1pc/[dl]\\
\bq^+	\ar@{.}[d]
	\ar@{-|>}@/_.1pc/[d]&&
\bq^+	\ar@{.}[d]
	\ar@{-|>}@/_.1pc/[d]\\
\bq^-	\ar@{-|>}[urr]&&
\bq^-	\ar@{.}[d]
	\ar@{-|>}@/_.1pc/[d]\\
&&\bq^+
}
$}$}
\right)
\]
\caption{Causal $K_y$ and its expansion}
\label{fig:caus_ky_exp}
\end{minipage}
\end{figure}
Figures \ref{fig:caus_kx_exp} and \ref{fig:caus_ky_exp} show
augmentations -- though the corresponding definitions remain to be seen,
those are the causal expansions of $K_x$ and $K_y$ matching the
plays of Section \ref{subsec:main_result}. In such diagrams, 
immediate causality from the configuration appears as
dotted lines, whereas that coming from the augmentation itself appears
as $\imc$.
We set a few auxiliary conditions:

\begin{definition}
Let $\q \in \Aug(A)$ be an augmentation. We set the conditions:
\[
\begin{array}{rl}
\text{\emph{receptive:}}
& \text{for all $a \in \ev{\q}$, if $\partial_{\q}(a)\imc_A b^-$,
there is $a \imc_\q b'$ such that $\partial_{\q}(b') = b$,}\\
\text{\emph{$+$-covered:}}
& \text{for all $a \in \ev{\q}$ maximal in $\q$, we have $\lambda(a)=+$,}\\
\text{\emph{$-$-linear:}} & \text{for all $a \imc_\q a_1^-$, $a \imc_\q
a_2^-$, if $\partial_\q(a_1) = \partial_\q(a_2)$ then $a_1 = a_2$.}
\end{array}
\]
\end{definition}

We say that $\q \in \Aug(A)$ is \textbf{total} iff it is receptive
and $+$-covered. We will also refer to receptive $-$-linear
augmentations as \textbf{causal strategies}.

\subsection{From Strategies to Causal Strategies}\label{subsec:strat-to-caus}

We may easily represent an innocent strategy as a causal strategy:

\begin{proposition}\label{prop:pview_aug}
For $\sigma : A$ finite innocent on $A$ well-opened, we set components
\[
\ev{\caus{\sigma}} = \{\pview{s} \mid s \in \sigma \wedge s \neq
\varepsilon\} \cup \{\pview{sa} \mid s \in \sigma \wedge sa \in
\Plays(A)\}\,,
\]
$s \leq_{\caus{\sigma}} t$ iff $s \prefix t$, $sa
\leq_{\deseq{{\caus{\sigma}}}} satb$ iff there is a chain of justifiers
from $b$ to $a$, and $\partial_{\caus{\sigma}}(sa) = a$.

Then ${\caus{\sigma}} = \tuple{\ev{{\caus{\sigma}}}, \leq_{\caus{\sigma}},
\leq_{\deseq{{\caus{\sigma}}}}, \partial_{\caus{\sigma}}} \in \Aug(A)$ is a
causal strategy, and is total iff $\sigma$ is total.
\end{proposition}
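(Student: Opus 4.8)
The plan is to unfold the definitions and check each closure condition of Definition~\ref{def:augmentation} (and, for the ``causal strategy'' part, receptivity and $-$-linearity) on the tuple $\caus{\sigma}$ directly, using visibility, innocence, determinism and finiteness of $\sigma$. First I would observe that $\ev{\caus\sigma}$ is a set of P-views: every element is $\pview{s}$ or $\pview{sa}$, and since $\sigma$ is visible these are well-defined; moreover each such P-view is itself a legal, visible, well-opened play (P-views are well-opened), hence lies in $\Plays_\bullet(A)$. So $\leq_{\caus\sigma}$, defined as the prefix order restricted to this set, makes $\tuple{\ev{\caus\sigma}, \leq_{\caus\sigma}}$ a forest; it is a tree because all P-views extend the unique initial move (recall $A$ is well-opened, so $\min(A)$ is a singleton and every nonempty P-view starts with it). Finiteness of $\sigma$ gives finiteness of $\ev{\caus\sigma}$. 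For the desequentialization component, I would check that $\partial_{\caus\sigma}(sa)=a$ together with the justifier order makes $\deseq{\caus\sigma}$ a configuration in $\conf A$: minimality-respecting holds because the initial move of a P-view is exactly a minimal move of $A$, and causality-preserving holds because in the justifier forest $sa \imc_{\deseq{\caus\sigma}} satb$ forces $a \imc_A b$ by rigidity of plays and the shape of P-views (an immediate justifier in a P-view is an immediate predecessor in the arena).

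Next I would verify the three defining conditions of an augmentation. \emph{Rule-abiding:} if $sa \leq_{\deseq{\caus\sigma}} t$ then $t = satb$ with a justification chain from $b$ to $a$; in particular $sa \prefix t$, so $sa \leq_{\caus\sigma} t$. \emph{Courteous:} suppose $p \imc_{\caus\sigma} q$, i.e.\ $q$ is $p$ extended by one move, say $q = p m$ (with the appropriate polarity alternation since plays alternate). If $\lambda(m)$ — the relevant polarity — is such that we are in the courteous hypothesis, I would show the pointer of $m$ must be to the last move of $p$, hence $p \imc_{\deseq{\caus\sigma}} q$: this is exactly the content of the P-view-extension clauses, since a P-view $pm$ is obtained from a P-view $p$ precisely when $m$ points suitably close to the visible top, and the case analysis on whether $m$ is Player or Opponent forces the justifier to be $\pred$. \emph{Deterministic:} if $p \imc_{\caus\sigma} q_1$ and $p \imc_{\caus\sigma} q_2$ with $q_1 = p a b$, $q_2 = p a b'$ both Player-extensions of the Opponent-extension $pa$, then determinism of $\sigma$ — transported along innocence, since membership of a Player-response in $\caus\sigma$ is decided by the P-view — gives $b = b'$ with the same pointer, so $q_1 = q_2$. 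For the ``causal strategy'' claim, \emph{receptive} follows because for $\sigma$ a strategy (not necessarily total) every legal Opponent-extension $sa \in \Plays(A)$ of an $s\in\sigma$ yields $\pview{sa} \in \ev{\caus\sigma}$ by construction, and \emph{$-$-linear} follows because two Opponent-children of the same node with the same arena-image are two one-move extensions by the same move with the same (forced, by legality and the P-view shape) pointer, hence equal plays.

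Finally, for totality: if $\sigma$ is total then every legal Opponent-extension $sa$ has a Player-response $sab \in \sigma$, so $\pview{sab} \in \ev{\caus\sigma}$, making $\pview{sa}$ non-maximal and every maximal element of $\caus\sigma$ positive ($+$-covered); combined with receptivity this is exactly totality of $\caus\sigma$. Conversely if $\caus\sigma$ is total then $\sigma$ is total, because a legal Opponent-extension $sa$ of $s\in\sigma$ gives $\pview{sa}\in\ev{\caus\sigma}$, which by $+$-coveredness has a Player-child $\pview{sab}$, and then innocence of $\sigma$ produces $sab \in \sigma$.

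\textbf{Main obstacle.} The bookkeeping is routine but the genuinely delicate point is \emph{courteous}: one must show that whenever a P-view $p$ is extended by one move $m$ to a P-view $pm$ with $m$ negative (or $m$ positive and in the ``courteous'' configuration), the justifier of $m$ is exactly the last move of $p$, i.e.\ the predecessor and justifier coincide. This requires carefully matching the inductive clauses defining $\pview{-}$ against the definition of $\leq_{\deseq{\caus\sigma}}$ via chains of justifiers, and checking that no P-view arises by a ``long'' backward jump when the polarity constraint of courteousness applies — which is precisely where alternation of immediate causality in $A$ and the structure of P-views (a negative move in a P-view always points to the immediately preceding positive move) are used. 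I expect this to be the one step deserving a real argument; the rest is unwinding definitions.
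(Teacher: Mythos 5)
Your proposal is correct and follows essentially the same route as the paper's proof (Lemmas \ref{lem:pview_aug} and \ref{lem:sig_tot_caus_tot} in Appendix \ref{app:caus}): a direct verification of each clause, with \emph{courteous} resolved exactly as you identify — alternation reduces it to the case of a single negative extension, whose pointer the P-view clauses force to be the immediately preceding positive move. The totality equivalence is likewise handled as you describe, transporting Opponent extensions and Player responses back and forth between $\sigma$ and its P-views via innocence.
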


The proof appears as Lemmas \ref{lem:pview_aug} and \ref{lem:sig_tot_caus_tot} in Appendix \ref{app:caus}.
As for configurations, so as to forget the concrete identity of events
we consider augmentations up to \emph{isomorphism}:

\begin{definition}\label{def:morphism}
A \textbf{morphism} $\varphi : \q \to \p$ is a function
$\varphi : \ev{\q} \to \ev{\p}$ satisfying:
\[
\begin{array}{rl}
\text{\emph{arena-preserving:}} & \text{$\partial_\p \circ \varphi = \partial_\q$,}\\
\text{\emph{causality-preserving:}} & \text{for all $a_1, a_2 \in
\ev{\q}$, if $a_1 \imc_\q a_2$ then $\varphi(a_1) \imc_\p
\varphi(a_2)$,}\\
\text{\emph{configuration-preserving:}} & \text{for all $a_1, a_2 \in
\ev{\q}$, if $a_1 \imc_{\deseq \q} a_2$ then $\varphi(a_1) \imc_{\deseq
\p} \varphi(a_2)$.} 
\end{array} 
\]

An \textbf{isomorphism} is an invertible morphism -- we then write
$\varphi : \q \iso \p$.
\end{definition}

Note that by \emph{arena-preserving}, $\varphi$ must send $\init(\q)$ to
$\init(\p)$.

The reader may check that the construction of Proposition
\ref{prop:pview_aug} applied to $K_x$ and $K_y$ yields, up to
isomorphism, the (small) augmentations of Figures \ref{fig:caus_kx_exp}
and \ref{fig:caus_ky_exp}. 
The next fact shows that augmentations are indeed an alternative
presentation of innocent strategies.

\begin{lemma}\label{lem:strat_eq_caus_iso}
For any finite innocent strategies $\sigma,\tau$ on arena $A$, then 
$\sigma = \tau$ iff ${\caus{\sigma}} \iso {\caus{\tau}}$.
\end{lemma}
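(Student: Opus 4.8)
The plan is to prove both implications directly, using the explicit construction $\caus{(-)}$ from Proposition~\ref{prop:pview_aug}. For the forward direction ($\sigma = \tau \implies \caus{\sigma} \iso \caus{\tau}$) there is nothing to do: the construction of $\caus{\sigma}$ depends only on $\sigma$ as a set of plays, so if $\sigma = \tau$ then literally $\caus{\sigma} = \caus{\tau}$, hence (trivially) $\caus{\sigma} \iso \caus{\tau}$. The content is entirely in the converse.

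\medskip
\noindent\textbf{Reconstructing $\sigma$ from $\caus{\sigma}$.}
First I would observe that $\sigma$ is recoverable from $\caus{\sigma}$ in a canonical, isomorphism-invariant way. By innocence and visibility, $\sigma$ is determined by its P-view forest $\pviews{\sigma}$ (and, via even-length prefix closure and the innocence clause, by $\sigma_\bullet$). Now $\pviews{\sigma}$ together with its prefix ordering and justification structure is visibly encoded in $\caus{\sigma}$: an even-length P-view $\pview{s}$ with $s \neq \varepsilon$ corresponds to an event $a \in \ev{\caus{\sigma}}$ of positive polarity, its prefixes correspond to the $\leq_{\caus{\sigma}}$-downward closure $[a]$, and the justification pointers inside $\pview{s}$ are read off from $\leq_{\deseq{\caus{\sigma}}}$ restricted to $[a]$ (the $\imc_{\deseq{\caus{\sigma}}}$ relation gives exactly the $\just(-)$ map, which on a P-view coincides with its pointer structure). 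The arena labels of the moves are given by $\partial_{\caus{\sigma}}$. So from $\caus{\sigma}$ one can define, purely from the data $\tuple{\ev{\caus{\sigma}}, \leq_{\caus{\sigma}}, \leq_{\deseq{\caus{\sigma}}}, \partial_{\caus{\sigma}}}$, a set $\Phi(\caus{\sigma})$ of pointing strings on $\ev{A}$, and I would check $\Phi(\caus{\sigma}) = \pviews{\sigma}$ by directly matching the clauses of Definition~\ref{prop:pview_aug} with the definition of $\pview{-}$. The key point is that $\Phi$ is invariant under augmentation isomorphisms: a morphism $\varphi : \caus{\sigma} \iso \caus{\tau}$ is arena-preserving, causality-preserving and configuration-preserving, hence an order-iso for both $\leq$ and $\imc_{\deseq{}}$ commuting with $\partial$, so it sends the reconstructed pointing strings of $\caus{\sigma}$ bijectively to those of $\caus{\tau}$, pointers and all. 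Therefore $\caus{\sigma} \iso \caus{\tau}$ gives $\pviews{\sigma} = \Phi(\caus{\sigma}) = \Phi(\caus{\tau}) = \pviews{\tau}$, and since an innocent strategy is determined by its P-view forest, $\sigma = \tau$.

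\medskip
\noindent\textbf{Main obstacle and care points.}
The only real subtlety is verifying that $\Phi$ faithfully reconstructs the \emph{pointers}, not merely the underlying strings and their prefix ordering. Concretely I must confirm: (i) for a positive event $a$, the path $[a] = \{a_0 \imc_{\caus{\sigma}} a_1 \imc_{\caus{\sigma}} \cdots \imc_{\caus{\sigma}} a\}$ reads, via $\partial_{\caus{\sigma}}$, as a bona fide P-view $\pview{s}$ of $\sigma$; (ii) the justifier relation $\imc_{\deseq{\caus{\sigma}}}$ restricted to $[a]$ recovers exactly the pointers of $\pview{s}$ — this uses that in a P-view every move's pointer target lies in the P-view and that $\leq_{\deseq{}}$ was defined precisely as the chain-of-justifiers order; and (iii) distinct events of $\caus{\sigma}$ give distinct P-views, which holds because $\caus{\sigma}$ is $-$-linear and deterministic so no two comparable-or-sibling events carry the same labelled history. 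Points (i)–(iii) are each short bookkeeping arguments once Proposition~\ref{prop:pview_aug} is in hand. Note that finiteness of $\sigma,\tau$ is used only to ensure $\caus{\sigma},\caus{\tau}$ are well-defined augmentations per Proposition~\ref{prop:pview_aug}; it plays no further role in this lemma. I would present the proof in the compact form: forward direction immediate; backward direction by exhibiting the reconstruction map $\Phi$, noting its iso-invariance, and invoking "innocent strategies are determined by their P-view forest" from Section~\ref{sec:inn_pos}.
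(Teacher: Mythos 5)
Your proposal is correct and follows essentially the same route as the paper's proof: both recover each P-view of $\sigma$ from the corresponding $\imc_{\caus{\sigma}}$-chain, use arena-preservation to identify the moves and configuration-preservation to recover the pointers, transport this through the isomorphism, and conclude via the fact that innocent strategies are determined by their P-view forests. Your packaging of this as an iso-invariant reconstruction map $\Phi$ is just a mild abstraction of the paper's direct transport of each chain through $\varphi$.
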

\begin{proof}
Clearly, $\sigma = \tau$ implies ${\caus{\sigma}} = {\caus{\tau}}$.
Conversely, assume $\varphi:{\caus{\sigma}} \iso {\caus{\tau}}$. Take $s
= s_1 \dots s_n \in\pviews{\sigma}$, and write $s_{\leq i} = s_1 \dots
s_i$. Then we have a chain $s_{\leq 1} \imc_{{\caus{\sigma}}} s_{\leq 2}
\imc_{{\caus{\sigma}}} \ldots  \imc_{{\caus{\sigma}}} s_{\leq n-1}
\imc_{{\caus{\sigma}}} s$, transported through $\varphi$ to
$t_{\leq 1} \imc_{{\caus{\tau}}} \ldots  \imc_{{\caus{\tau}}} t$. By
\emph{arena-preserving}, $t_i = s_i$ for all $1\leq i \leq n$.  Finally
by \emph{configuration-preserving}, $s$ and $t$ have the same
pointers, hence $s=t$ and $s \in \tau$.  Symmetrically, any
P-view $t \in \pviews{\tau}$ is in $\sigma$, hence $\pviews{\sigma} =
\pviews{\tau}$ and $\sigma = \tau$ by innocence.
\end{proof}

\subsection{Expansions of Causal Strategies}

Besides including representations of innocent strategies, augmentations
can also represent their \emph{expansions}, \emph{i.e.} arbitrary plays,
with Opponent's scheduling factored out. 

\begin{definition}\label{def:exp}
Consider $A$ an arena, and $\p \in \Aug(A)$ a causal strategy.

An \textbf{expansion} of $\p$, written $\q \in \exp(\p)$, is $\q \in
\Aug(A)$ such that:
\[
\begin{array}{rl}
\text{\emph{simulation:}} & \text{there is a (necessarily unique)
morphism $\varphi : \q \to \p$,}\\
\text{\emph{$+$-obsessional:}} & \text{for all $a^- \in \ev{\q}$ and
$\varphi(a^-) \imc_\p b^+$, there is $a^- \imc_\q a'$ s.t.
$\varphi(a') = b^+$.}
\end{array}
\]
\end{definition}

The relationship between a causal strategy $\p$ and $\q \in
\exp(\p)$ is analogous to that between an arena $A$ and a configuration
$x \in \conf{A}$: $\q$ explores a prefix of $\p$, possibly visiting
the same branch many times. However, \emph{determinism} ensures that
only Opponent may cause duplications, and \emph{$+$-obsessional} ensures
that only Opponent may refuse to explore certain branches -- if a Player
move is available in $\p$, then it must appear in all
corresponding branches of $\q$.
Uniqueness of the morphism follows from $-$-linearity and determinism
-- see Lemma \ref{lem:mor_unique} in Appendix \ref{app:caus}. Figures
\ref{fig:caus_kx_exp} and \ref{fig:caus_ky_exp} show expansions of (the
causal strategies corresponding to) $K_x$ and $K_y$.  

Now, we set
$
\positions{\p} = \{\coll{\q} \mid \q \in \exp(\p)\}
$
the \textbf{positions} of a causal strategy $\p$, 
where $\coll{\q}$ is the isomorphism class of $\deseq{\q}$.
By Lemma \ref{lem:strat_eq_caus_iso},
any innocent $\sigma : A$ yields a causal strategy ${\caus{\sigma}} :
A$, so this leaves us with the task to prove that the two notions of
position coincide.

\begin{restatable}{proposition}{possigposcaus}\label{prop:pos_sig_pos_caus}
For any total finite innocent strategy $\sigma : A$, we have $\positions{\sigma} =
\positions{{\caus{\sigma}}}$.
\end{restatable}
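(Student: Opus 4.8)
The plan is to prove the two inclusions $\positions{\sigma} \subseteq \positions{\caus{\sigma}}$ and $\positions{\caus{\sigma}} \subseteq \positions{\sigma}$ separately. The conceptual bridge in both directions is the same: a well-opened play $s \in \sigma_\bullet$ should correspond to an expansion $\q \in \exp(\caus{\sigma})$ with $\deseq{s}$ isomorphic to $\deseq{\q}$, where $\q$ is obtained from $s$ by keeping the justifier order as its configuration order $\leq_{\deseq{\q}}$, and using the ``P-view predecessor'' structure (the chain that witnesses membership of $\pview{s_{\leq i}}$ in $\ev{\caus{\sigma}}$) as the tree order $\leq_\q$. Concretely, for $s = s_1 \dots s_n$, I set $\ev{\q} = \{1, \dots, n\}$, take $\partial_\q(i) = s_i$, let $i \leq_{\deseq{\q}} j$ iff there is a chain of pointers from $s_j$ to $s_i$ (so $\deseq{\q} = \deseq{s}$ on the nose), and define $\leq_\q$ by: $i$ is an immediate $\leq_\q$-predecessor of $j$ iff $s_i$ is the last move of $\pview{s_{\leq j-1}}$ — equivalently, $i = j-1$ if $s_j$ is a P-move, and $i$ is the move $s_j$ points to if $s_j$ is an O-move that isn't the last move of $s_{\leq j}$... actually more carefully, for $s_j$ negative we use the pointer, for $s_j$ positive we use $j-1$. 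This matches exactly the recursive clauses of $\pview{-}$.

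First I would check that this $\q$ is a well-defined augmentation: \emph{rule-abiding} holds because a justifier chain from $s_j$ back to $s_i$ forces $i \leq_\q j$ by induction on the P-view computation (the justifier of a move always lies in its P-view); \emph{courteous} holds because for $s_j$ positive, $\pred(j) = j-1$ which by alternation is negative, and its image under $\partial_\q$ is immediately below $s_j$ in the arena since $s_j$ must point into $\pview{s_{\leq j-1}}$ and be rigid — here I would lean on visibility of $\sigma$; and for $s_j$ negative, $\pred(j) = \just(j)$ by construction, so there is nothing to check. \emph{Determinism} of $\q$ follows from determinism of $\sigma$: two P-moves with the same O-predecessor in $\q$ correspond to the strategy responding in two ways after the same P-view, contradicting innocence/determinism. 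Then I must exhibit the morphism $\varphi : \q \to \caus{\sigma}$ sending $i$ to $\pview{s_{\leq i}}$; \emph{arena-preserving}, \emph{causality-preserving} and \emph{configuration-preserving} are all direct from the definitions of $\caus{\sigma}$ in Proposition~\ref{prop:pview_aug} and the construction of $\q$. Finally $\q$ is \emph{$+$-obsessional}: if $\varphi(a^-) \imc_{\caus{\sigma}} b^+$, this means the strategy plays a P-move after the P-view $\pview{s_{\leq a}}$; since $\sigma$ is \emph{total}, this P-move must actually appear in $s$ after the occurrence of $a$ (every O-extension is answered), and that occurrence is the required $\leq_\q$-successor of $a$. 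This gives $\coll{s} = \coll{\q} \in \positions{\caus{\sigma}}$, hence $\positions{\sigma} \subseteq \positions{\caus{\sigma}}$.

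For the converse, given $\q \in \exp(\caus{\sigma})$ I need to sequentialize it into a play $s \in \sigma_\bullet$ with $\deseq{s} \iso \deseq{\q}$. The idea is to perform a linearization of $\leq_\q$ that is also compatible with $\leq_{\deseq{\q}}$ (possible since \emph{rule-abiding} says $\leq_{\deseq{\q}}$ refines into $\leq_\q$), reading off a pointing string $s$ by taking pointers along $\leq_{\deseq{\q}}$ (i.e. $\just$). One must check alternation and legality to see $s \in \Plays_\bullet(A)$ — alternation comes from the fact that in $\q$, by \emph{courteous} and arena-alternation, $\pred$ flips polarity, and a careful scheduling (always playing a P-move immediately after its O-predecessor, which \emph{determinism} and \emph{$+$-covered}/\emph{$+$-obsessional} allow) keeps the play alternating. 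Then I verify by induction along the linearization that each even prefix lies in $\sigma$: the P-view of the prefix ending at a P-move $a$ is precisely the image under $\varphi$ of the $\leq_\q$-chain below $a$, which is an element of $\ev{\caus{\sigma}} = \pviews{\sigma} \cup \dots$; using \emph{simulation} and innocence of $\sigma$, the strategy must indeed play that P-move. This again needs \emph{totality} to ensure the required P-moves are genuinely in $\sigma$ (not just in $\caus{\sigma}$, which by construction includes ``one-step O-extensions'').

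\textbf{Main obstacle.} The delicate point is the converse inclusion: turning an abstract expansion $\q$ into an honest \emph{visible, alternating, legal} play while controlling its P-views so that every even prefix is certified to be in $\sigma$. The scheduling must be chosen so that (i) it linearizes $\leq_\q$ (not merely $\leq_{\deseq{\q}}$), (ii) it keeps strict alternation, and (iii) at each P-move the ambient P-view coincides with the $\varphi$-image of a genuine node of $\caus{\sigma}$. Reconciling (i) and (ii) — essentially showing that a $+$-covered, deterministic augmentation admits an alternating sequentialization respecting its causal order — is the real work; it is here that the hypotheses \emph{total} and \emph{finite} (hence well-founded induction on $|\ev{\q}|$) are used. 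I expect this to go through by an induction peeling off a maximal P-move together with its O-predecessor, but stating and proving the right induction hypothesis linking prefixes of the schedule to sub-augmentations of $\q$ is the technical crux.
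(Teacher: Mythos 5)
Your plan is correct and follows essentially the same route as the paper: the forward inclusion uses the very same expansion $\q(s)$ (justifier order as configuration, P-view-predecessor structure as the tree order, morphism $i \mapsto \pview{s_{\leq i}}$), and the converse uses the same alternating linearization of a $+$-covered expansion together with the observation that $\pview{s_{\leq i}}$ coincides with the causal history $[s_i]_\q$, concluding by innocence. The only cosmetic difference is that you invoke totality for $+$-obsessionality in the forward direction, where the paper gets the needed successor simply from plays in $\sigma$ being even-length and alternating; totality is genuinely needed only in the converse direction, to make the expansion $+$-covered and hence linearizable alternately.
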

\begin{proof}
Additional details appear in Appendix \ref{app:caus}.
Any $\x \in \positions{\sigma}$ is the isomorphism class of $\deseq{s}$ for
$s = s_1 \dots s_n \in \sigma$. We build an expansion $\q(s) \in
\exp({\caus{\sigma}})$ as follows. Its configuration is
$\deseq{\q(s)} = \deseq{s}$ (see Definition \ref{def:deseq_plays}) with
events $\ev{\q(s)} = \{1, \dots, n\}$.  Its causal order is $i
\leq_{\q(s)} j$ iff $j \geq i$ and $s_i$ is reached in the computation
of $\pview{s_{\leq j}}$. 
To show that $\q(s) \in \exp({\caus{\sigma}})$ we must provide
a morphism $\varphi : \q(s) \to {\caus{\sigma}}$, which is simply $\varphi(i) =
\pview{s_{\leq i}}$. So, $\x = \coll{\q(s)} \in \positions{{\caus{\sigma}}}$.

Reciprocally, take $\x \in \positions{{\caus{\sigma}}}$, obtained as the isomorphism
class of some $\deseq{\q}$, for $\q \in \exp({\caus{\sigma}})$. From the
totality of $\sigma$, $\q$ has maximal events all
positive -- it has exactly as many Player as Opponent events, and 
admits a linear extension $s = s_1 \dots s_n$ which is
\emph{alternating}, \emph{i.e.} $\lambda(s_i) \neq
\lambda(s_{i+1})$ for all $1 \leq i \leq n-1$. Besides, for any $1\leq i
\leq n$, $\pview{s_{\leq i}}$ (treating $s$ as a play on
arena $\deseq{\q}$) coincides with $[s_i]_\q = \{s \in \ev{\q} \mid s
\leq_\q s_i\}$, totally ordered by $\leq_\q$. So, writing
$\partial_\q(s) = \partial_\q(s_1) \dots \partial_\q(s_n) \in \Plays(A)$
with pointers inherited from $\deseq{\q}$, 
$\pview{\partial_\q(s)_{\leq i}} \in \pviews{\sigma}$, hence
$\partial_\q(s) \in \sigma$ by innocence and $\deseq{\partial_\q(s)}
\iso \deseq{s}$. Therefore, $\coll{\q} = \coll{\partial_\q(s)} \in
\positions{\sigma}$. 
\end{proof}

The idea is that plays in $\sigma$ are
exactly linearizations of expansions of ${\caus{\sigma}}$. From a play
we get an expansion by factoring out Opponent's scheduling, mimicking
the construction of P-views while keeping duplicated branches
separate. Reciprocally, an expansion allows many (alternating)
linearizations. For instance, 
the two plays of Section \ref{subsec:main_result} are respectively
linearizations of the expansions of Figures \ref{fig:caus_kx_exp}
and \ref{fig:caus_ky_exp}.
This proposition fails if $\sigma$ is not total, as expansions may then
have trailing Opponent moves, preventing an alternating linearization.

Thanks to Proposition \ref{prop:pos_sig_pos_caus}, we focus on
positions reached by expansions of causal strategies. 

\section{Positional Injectivity}\label{sec:pos_inj}

We now come to the main contribution of this paper, the proof of
positional injectivity for total finite causal strategies. We start this
section by introducing the proof idea.

\subsection{Forks and Characteristic Expansions}
\label{subsec:char_exp}

Just from the static snapshot offered by positions, we must deduce the
strategy. 

Given $z \in \conf{A}$, can we uniquely reconstruct its \emph{causal
explanation}, \emph{i.e.} $\q \in \Aug(A)$ such that $z = \deseq{\q}$? In
general, there is no reason why $\q$ would be uniquely determined. 
\begin{figure}[t]
\begin{minipage}{.47\linewidth}
\[
\raisebox{20pt}{$
\scalebox{.75}{$
\xymatrix@C=-5pt@R=5pt{
&&&\bq^-\ar@{.}[dll]
        \ar@{.}[d]
        \ar@{.}[drr]
\\
&\bq^+  \ar@{.}[dl]
        \ar@{.}[dr]
&&
\bq^+   \ar@{.}[d]
&&
\bq^+   \ar@{.}[d]
\\
\bq^-   \ar@{.}[d]
&&
\bq^-   \ar@{.}[d]
&
\bq^-   
&&
\bq^-   
\\
\bq^+&&\bq^+
}$}$}
\!\!
\leadsto
\!\!
\raisebox{20pt}{$
\scalebox{.75}{$
\xymatrix@C=-4pt@R=5pt{
&&&\bq^-\ar@{.}[dll]
        \ar@{.}[d]
        \ar@{.}[drr]
        \ar@{-|>}@/_.1pc/[dll]\\
&\bq^+  \ar@{.}[dl]
        \ar@{.}[dr]
        \ar@{-|>}@/_.1pc/[dl]
        \ar@{-|>}@/_.1pc/[dr]&&
\bq^+   \ar@{.}[d]
        \ar@{-|>}@/_.1pc/[d]&&
\bq^+   \ar@{.}[d]
        \ar@{-|>}@/_.1pc/[d]\\
\bq^-   \ar@{.}[d]
        \ar@{-|>}[urrr]&&
\bq^-   \ar@{.}[d]
        \ar@{-|>}[urrr]&
\bq^-   \ar@{-|>}[dlll]&&
\bq^-   \ar@{-|>}[dlll]\\
\bq^+&&\bq^+
}$}$}
\raisebox{20pt}{$
\scalebox{.75}{$
\xymatrix@C=-4pt@R=5pt{
&&&\bq^-\ar@{.}[dll]
        \ar@{.}[d]
        \ar@{.}[drr]
	\ar@{-|>}@/_.1pc/[drr]
\\
&\bq^+  \ar@{.}[dl]
        \ar@{.}[dr]
	\ar@{-|>}@/_.1pc/[dl]
	\ar@{-|>}@/_.1pc/[dr]
&&
\bq^+   \ar@{.}[d]
	\ar@{-|>}@/_.1pc/[d]
&&
\bq^+   \ar@{.}[d]
	\ar@{-|>}@/_.1pc/[d]
\\
\bq^-   \ar@{.}[d]
	\ar@{-|>}@/_.1pc/[d]
&&
\bq^-   \ar@{.}[d]
	\ar@{-|>}@/_.1pc/[d]
&
\bq^-   \ar@{-|>}@/_.1pc/[ull]
&&
\bq^-   \ar@{-|>}@/_.1pc/[ull]
\\
\bq^+&&\bq^+
}$}$}
\]
\caption{Non-unique causal explanation}
\label{fig:non_uniq_caus}
\end{minipage}
\hfill
\begin{minipage}{.5\linewidth}
\[
\raisebox{20pt}{$
\scalebox{.75}{$
\xymatrix@C=-4pt@R=6pt{
&&&&&&\bq^-
	\ar@{.}[dlll]
	\ar@{.}[d]
	\ar@{.}[drr]\\
&&&\bq^+
	\ar@{.}[dl]
	\ar@{.}[dr]
&&&\bq^+
	\ar@{.}[dl]
	\ar@{.}[d]
	\ar@{.}[dr]
&&\bq^+
	\ar@{.}[d]\\
&&\bq^-
	\ar@{.}[dl]
	\ar@{.}[d]
	\ar@{.}[dr]
&&\bq^-
	\ar@{.}[d]
&\bq^-
&\bq^-
&\bq^-
&\bq^-\\
&\bq^+&\bq^+&\bq^+&\bq^+
}$}$}
\leadsto\hspace{-15pt}
\raisebox{20pt}{$
\scalebox{.75}{$
\xymatrix@C=-4pt@R=6pt{
&&&&&&\bq^-
        \ar@{.}[dlll]
        \ar@{.}[d]
        \ar@{.}[drr]
	\ar@{-|>}@/_.1pc/[dlll]\\
&&&\bq^+
        \ar@{.}[dl]
        \ar@{.}[dr]
	\ar@{-|>}@/_.1pc/[dl]
	\ar@{-|>}@/_.1pc/[dr]
&&&\bq^+
        \ar@{.}[dl]
        \ar@{.}[d]
        \ar@{.}[dr]
	\ar@{-|>}@/_.1pc/[dl]
	\ar@{-|>}@/_.1pc/[d]
	\ar@{-|>}@/_.1pc/[dr]
&&\bq^+
        \ar@{.}[d]
	\ar@{-|>}@/_.1pc/[d]\\
&&\bq^-
        \ar@{.}[dl]
        \ar@{.}[d]
        \ar@{.}[dr]
	\ar@{-|>}@[red]@/_.1pc/[urrrr]
&&\bq^-
        \ar@{.}[d]
	\ar@{-|>}@[blue]@/_.1pc/[urrrr]
&\bq^-
	\ar@{-|>}[dllll]
&\bq^-
	\ar@{-|>}[dllll]
&\bq^-
	\ar@{-|>}[dllll]
&\bq^-
	\ar@{-|>}[dllll]	
\\
&\bq^+&\bq^+&\bq^+&\bq^+
}$}$}
\]
\caption{Unique causal explanation}
\label{fig:pos_char_kx}
\end{minipage}
\end{figure}
Indeed, in Figure \ref{fig:non_uniq_caus}, we show on the left hand side the
configuration $z_1$ underlying Figure \ref{fig:caus_kx_exp} -- up to
iso it has exactly two causal explanations, shown on the right.
The rightmost augmentation is not an expansion of $K_x$, so $K_x$ is not
the only strategy featuring (the isomorphism class of) $z_1$. 
However, we \emph{can} find a position unique to $K_x$. Consider $z_2$
the configuration on the left hand side of Figure \ref{fig:pos_char_kx}.
The only possible augmentation (up to iso) yielding $z_2$ as a
desequentialization appears on the right hand side (call it $\q$): every
other attempt to guess causal wiring fails. In particular, the red and
blue immediate causal links are forced by the cardinality of the subsequent
duplications. But $\q$ is an expansion of the unique maximal branch of
$K_x$ -- so it suffices to see $z_2$ in $\positions{\sigma}$ to know that
$\sigma = K_x$.

This suggests a proof idea: given $\p_1, \p_2 : A$ causal
strategies with $\positions{\p_1} = \positions{\p_2}$, we devise a
\emph{characteristic expansion} of $\p_1$ with 
duplications chosen to make the causal structure essentially unique;
meaning it must be an expansion of $\p_2$ as well. We do this by using:

\begin{definition}
A \textbf{fork} in $\q \in \Aug(A)$ is a maximal non-empty set $X
\subseteq \ev{\q}$ s.t.:
\[
\begin{array}{rl}
\text{\emph{negative:}} & \text{for all $a \in X$, $\lambda(a) = -$,}\\
\text{\emph{sibling:}} & \text{$X = \{\init(\q)\}$ or there is $b \in
\ev{\q}$ such that for all $a \in X$, $b \imc_\q a$,}\\
\text{\emph{identical:}} & \text{for all $a_1, a_2 \in X$, $\partial_\q(a_1)
= \partial_\q(a_2)$.}
\end{array}
\]

We write $\twin(\q)$ for the set of forks in augmentation $\q$.
\end{definition}

If $\p$ is a causal strategy, $\q \in \exp(\p)$ and $X \in \twin(\q)$,
the definition of expansions ensures that all Player moves caused by
Opponent moves in $X$ are copies. So if $X$ has \textbf{cardinality}
$\sharp X = n$, and if we find exactly one set of cardinality $\geq n$ of
equivalent Player moves in $\deseq{\q}$, we may deduce that there is a
causal link. For instance, in Figure \ref{fig:pos_char_kx}, the causal
successors for the fork of cardinality $3$ may be found so. 
In general though, several Opponent moves may cause indistinguishable Player
moves, so that the cardinality of a set $Y$ of duplicated Player moves is
the sum of the cardinalities of the predecessor forks. To allow us to
identify these predecessor sets uniquely, the trick is to construct
the expansion so that all forks have cardinality a distinct power of
$2$, making it so that the predecessor forks can be inferred from
the binary decomposition of $\sharp Y$. This brings us to the following
definition.

\begin{definition}
A \textbf{characteristic expansion} of $\p$ is $\q \in \exp(\p)$ such that:
\[
\begin{array}{rl}
\text{\emph{injective:}} & \text{for $X, Y \in \twin(\q)$, if $\sharp X
= \sharp Y$ then $X = Y$,}\\
\text{\emph{well-powered:}} & \text{for all $X \in \twin(\q)$, there is
$n \in \mathbb{N}$ such that $\sharp X = 2^n$,}\\
\text{\emph{$-$-obsessional:}} & \text{for all $a^+ \in \ev{\q}$, if
$\partial_\q(a^+) \imc_A b^-$, there is $a^+ \imc_\q a'$ s.t.
$\partial_\q(a') = b^-$.}
\end{array}
\]
\end{definition}

This only constrains causal links in $\q$
from positives to negatives, but by \emph{courteous} those are in
$\q$ iff they are in $\deseq{\q}$. So for $\q \in \exp(\p)$,
that it is a characteristic expansion is in fact a property of
$\deseq{\q}$. Furthermore it is stable under iso so that if
$\positions{\p_1} = \positions{\p_2}$, for $\q_1 \in \exp(\p_1)$ 
characteristic there must be $\q_2 \in \exp(\p_2)$ 
characteristic too such that $\deseq{\q_1} \iso \deseq{\q_2}$ -- so it
makes sense to restrict our attention to positions reached by
characteristic expansions.

How different can be characteristic $\q_1 \in \exp(\p_1)$ and
$\q_2 \in \exp(\p_2)$ s.t. $\deseq{\q_1} \iso \deseq{\q_2}$? A
first guess is \emph{isomorphic}, but that is off the mark;
$\q_1$ and $\q_2$ have some degree of liberty in swapping
forks around (as in Figure \ref{fig:ex_bisim}): they have the
``same branches, but with possibly different multiplicity''. A
significant part of our endeavour has been to construct a relation
between augmentations allowing such changes in multiplicity, while
ensuring $\p_1 \iso \p_2$. 

\subsection{Bisimulations Across an Isomorphism}
\begin{figure}
\[
\scalebox{.8}{$
\xymatrix@R=3pt@C=4pt{
&&&\qu^-
	\ar@{-|>}[d]
&&&&&\sim^\varphi&&&&&
\qu^-	\ar@{-|>}[d]\\
&&&\qu^+\ar@{.}@/^/[u]
	\ar@{-|>}[dll]
	\ar@{-|>}[drr]
&&&&&&&&&&
\qu^+	\ar@{.}@/^/[u]
	\ar@{-|>}[dll]
	\ar@{-|>}[drr]\\
&\qu_1^-\ar@{.}@/^/[urr]
	\ar@{-|>}[d]&&&&
\qu_2^-	\ar@{.}@/_/[ull]
	\ar@{-|>}[d]&&&&&&
\qu_1^-	\ar@{.}@/^/[urr]
	\ar@{-|>}[d]&&&&
\qu_2^-	\ar@{.}@/_/[ull]
	\ar@{-|>}[d]\\
&\cyan{\qu^+}
	\ar@{.}@/^2pc/[uuurr]
	\ar@{-|>}[dl]
	\ar@{-|>}[dr]&&&&
\qu^+	\ar@{.}@/_2pc/[uuull]
	\ar@{-|>}[d]&&&
\cyan{\sim^\varphi}&&&
\cyan{\qu^+}
	\ar@{.}@/^2pc/[uuurr]
	\ar@{-|>}[d]&&&&
\qu^+	\ar@{.}@/_2pc/[uuull]
	\ar@{-|>}[dl]
	\ar@{-|>}[dr]\\
\green{\qu^-}
	\ar@{.}@/^/[ur]
	\ar@{-|>}[d]&&
\qu^-	\ar@{.}@/_/[ul]
	\ar@{-|>}[d]&&&
\qu^-	\ar@{.}@/_/[u]
	\ar@{-|>}[d]&&&&&&
\green{\qu^-}
	\ar@{.}@/^/[u]
	\ar@{-|>}[d]&&&
\qu^-	\ar@{.}@/^/[ur]
	\ar@{-|>}[d]&&
\qu^-	\ar@{.}@/_/[ul]
	\ar@{-|>}[d]\\
\red{\qu^+}
	\ar@{.}@/^/[u]&&
\qu^+	\ar@{.}@/_/[u]&&&
\qu^+
	\ar@{.}@/_/[u]&&&
\red{\sim}^{\red{\varphi}}_{\red{\{(}\green{\qu^-}\red{,}
\green{\qu^-}\red{)\}}}&&&
\red{\qu^+}
	\ar@{.}@/^/[u]&&&
\qu^+	\ar@{.}@/^/[u]&&
\qu^+	\ar@{.}@/_/[u]
}
$}
\]
\caption{Distinct characteristic expansions reaching the same position}
\label{fig:ex_bisim}
\end{figure}

More than simply comparing augmentations, given $\q, \p \in \Aug(A)$, $a
\in \ev{\q}$, $b \in \ev{\p}$, we shall need a
a predicate $a \sim b$ expressing that $a$ and $b$ have the same
causal follow-up, up to the multiplicity of duplications. In particular,
$a$ and $b$ must have ``the same pointer'', but at first that makes no sense
since $a$ and $b$ live in different ambient sets of events. So we also
fix an isomorphism $\varphi : \deseq{\q} \iso \deseq{\p}$ providing the
translation, and aim to define $a \sim^\varphi b$ parametrized by
$\varphi$. We give some examples in Figure \ref{fig:ex_bisim}, where
$\varphi$ is any of the two possible isomorphisms, assuming $\qu_1^-$ and
$\qu_2^-$ correspond to different moves of the arena. 


This is defined via a bisimulation game: for instance,
establishing that the roots are in relation requires us to first
match the blue nodes. But as the bisimulation unfolds, requiring
all pointers to match up to $\varphi$ is too strong: the pointers of red
moves do \emph{not} match -- but seen from
$\cyan{\qu^+}$ this is fine as the justifiers for the red moves are
encountered at the same step of the bisimulation game from
$\cyan{\qu^+}$. So our actual predicate has form $a
\sim^\varphi_\Gamma b$ for $\Gamma$ a \emph{context}, stating
a correspondence between negative moves established in the
bisimulation game so far:

\begin{definition}
A \textbf{context} between $\q, \p \in \Aug(A)$ is
$\Gamma : \dom(\Gamma) \iso \cod(\Gamma)$
a bijection s.t. $\dom(\Gamma) \subseteq \ev{\q}$, $\cod(\Gamma)
\subseteq \ev{\p}$, $\lambda_\q(\dom(\Gamma)) \subseteq \{-\}$, and
$\forall a^- \in \dom(\Gamma)$, $\partial_\q(a) =
\partial_\p(\Gamma(a))$. 
\end{definition}

We may now formulate a first notion of bisimulation across
augmentations.

\begin{definition}\label{def:main_bisim}
Consider $\q, \p \in \Aug(A)$ and an isomorphism $\varphi : \deseq{\q}
\iso \deseq{\p}$.

For $a \in \ev{\q}$, $b \in \ev{\p}$ and $\Gamma$ a context, we define a
predicate $a \sim_\Gamma^\varphi b$ which holds if, firstly, 
\[
\begin{array}{rcl}
\text{\emph{(a)}} && \text{$\partial_\q(a) = \partial_\p(b)$ and $\Gamma \vdash (a,b)$}\\
\text{\emph{(b)}} && 
\text{if $\just(a^+) \in \dom(\Gamma)$, then $\just(b) \in
\cod(\Gamma)$ and $\Gamma(\just(a)) = \just(b)$,}\\
\text{\emph{(c)}} &&
\text{if $\just(a^+) \not \in \dom(\Gamma)$, then
$\just(b) \not \in \cod(\Gamma)$ and $\varphi(\just(a)) = \just(b)$,}
\end{array}
\]
where $\Gamma \vdash (a, b)$ means that for all $a' \in \dom(\Gamma)$,
$\neg (a' >_\q a)$ and for all $b' \in \cod(\Gamma)$, $\neg (b' >_\p
b)$; and inductively, the following two bisimulation conditions hold:
\[
\begin{array}{rcl}
\text{\emph{(1)}} &&
\text{if $a^+ \imc_\q a'$, then there is $b^+ \imc_\p b'$ with $a'
\sim_{\Gamma \cup \{(a',b')\}}^\varphi b'$, and symmetrically,}\\
\text{\emph{(2)}} &&
\text{if $a^- \imc_\q a'$, then there is $b^- \imc_\p b'$ with $a'
\sim_\Gamma^\varphi b'$, and symmetrically.}
\end{array}
\]
\end{definition}

As $\Gamma \vdash (a, b)$ implies $a' \not \in \dom(\Gamma)$ and $b'
\not \in \cod(\Gamma)$, $\Gamma \cup \{(a', b')\}$ remains a bijection.

Of particular interest is the case $a
\sim_\emptyset^\varphi b$ over an empty context, written simply $a
\sim^\varphi b$. From this, we deduce a relation between augmentations:
we write  $\q \sim^\varphi \p$ iff $\init(\q) \sim^\varphi \init(\p)$,
for $\q, \p \in \Aug(A)$ and $\varphi : \deseq{\q} \iso \deseq{\p}$.
Resuming the discussion at the end of Section
\ref{subsec:char_exp}: bisimulations allow us to express that two
characteristic expansions with isomorphic configurations are ``the
same''. More precisely, in due course we will be able to prove:

\begin{proposition}\label{prop:main}
Consider $\p_1, \p_2 \in \Aug(A)$ causal strategies, $\q_1 \in
\exp(\p_1)$ and $\q_2 \in \exp(\p_2)$ characteristic expansions with an
isomorphism $\varphi : \deseq{\q_1} \iso \deseq{\q_2}$. 
Then, $\q_1 \sim^\varphi \q_2$.
\end{proposition}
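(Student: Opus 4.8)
The goal is to establish the bisimulation $\q_1 \sim^\varphi \q_2$ between characteristic expansions reaching (via $\varphi$) the same position. Since $\sim^\varphi$ is defined coinductively/inductively over the tree structure, the natural approach is to prove a stronger \emph{local} statement by induction on the depth of events in the finite trees $\q_1, \q_2$: namely, that for every pair of events $a \in \ev{\q_1}$, $b \in \ev{\q_2}$ and every suitable context $\Gamma$, if a certain compatibility condition holds (roughly: $a$ and $b$ sit at ``matching positions'' in their branches, with $\Gamma$ recording the already-matched negative ancestors and agreeing with $\varphi$ outside $\dom(\Gamma)$), then $a \sim_\Gamma^\varphi b$. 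The base case is the maximal events: by $+$-coveredness these are positive and have no successors, so conditions (1)--(2) are vacuous and only (a)--(c) need checking, which follow from $\varphi$ being arena-preserving on $\deseq{\q_1} \iso \deseq{\q_2}$ together with the bookkeeping in $\Gamma$.

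\textbf{The inductive step.} This splits according to the polarity of the current pair $(a,b)$. For a negative pair $a^- \imc_{\q_1} a'$: by \emph{courteous}, $\pred = \just$ on negatives, so $a' $ is both the $\q_1$-successor and the $\deseq{\q_1}$-successor of $a$; applying $\varphi$ to the unique $\deseq{\q_2}$-successor $b'$ of $b$ lying over $\partial_{\q_1}(a')$ — it exists by receptivity and is unique by $-$-linearity — gives the required match, with the \emph{same} context $\Gamma$, and we invoke the induction hypothesis on $(a',b')$. The positive case is the crux. Given $a^+ \imc_{\q_1} a'$ (a negative move), I must produce $b^+ \imc_{\q_2} b'$ with $a' \sim_{\Gamma\cup\{(a',b')\}}^\varphi b'$. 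Here $a'$ need \emph{not} be the $\deseq$-successor of $a$ — this is exactly the ``red move'' phenomenon of Figure~\ref{fig:ex_bisim} — so $\varphi$ does not directly supply $b'$. Instead I count: $a'$ belongs to a fork $X \in \twin(\q_1)$, and by \emph{injective} and \emph{well-powered}, $\sharp X = 2^n$ is a power of two that uniquely identifies $X$ among forks of $\q_1$. The successor Player moves of $X$ in $\deseq{\q_1}$ form a set $Y$ which, because $\q_1$ is an expansion (so fork members force copies of their Player successors), contains exactly $\sharp X$ equivalent copies of each such move, possibly merged with contributions from other forks; but the binary decomposition of $\sharp Y$ recovers precisely the set of forks contributing — including $X$. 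Transporting $Y$ through $\varphi$ to $\deseq{\q_2}$ and running the same binary-decomposition argument backwards, one extracts from $\q_2$'s fork structure a fork $X' \in \twin(\q_2)$ with $\sharp X' = 2^n$ and $\varphi$ matching $X$ to $X'$ move-for-move; picking the appropriate member $b'$ of $X'$ as $\q_2$-successor of $b$, and checking the context update $\Gamma \cup \{(a',b')\}$ still satisfies $\Gamma' \vdash$ the relevant pair (using that $a',b'$ are fresh — below everything in $\dom(\Gamma),\cod(\Gamma)$), lets us apply the induction hypothesis.

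\textbf{Main obstacle.} The hard part is precisely the positive step: making the ``counting / binary decomposition'' argument rigorous. One must show that in a characteristic expansion the multiset of cardinalities of equivalent-Player-move classes in $\deseq{\q}$ determines, via binary expansion, the partition into contributing forks — and that this reconstruction is preserved by the isomorphism $\varphi$, so that $\q_2$ is forced to exhibit a matching fork of the \emph{same} cardinality at the matching place. This requires carefully relating ``equivalence of Player moves in $\deseq{\q}$'' (same $\partial$, same justifier-after-$\varphi$, and crucially the \emph{same future}, which is where the coinductive hypothesis feeds back in) to ``being successors of a common fork'', and ruling out spurious coincidences where two genuinely different forks happen to produce indistinguishable Player subtrees — which is exactly what \emph{injective} and \emph{well-powered} are designed to prevent. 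I also expect some delicacy in threading the context $\Gamma$ and the side condition $\Gamma \vdash (a,b)$ through the induction so that each newly added pair $(a',b')$ is genuinely incomparable to everything recorded so far; this should follow from always descending in the trees, but needs to be stated as part of the induction invariant rather than checked ad hoc.
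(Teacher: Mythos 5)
Your overall architecture — induct from the maximal events upward, and use the binary decomposition of the cardinalities of classes of duplicated Player moves (with \emph{injective} and \emph{well-powered} guaranteeing unique reconstruction) to recover causal links — is indeed the paper's strategy (Lemma~\ref{lem:key} together with Lemma~\ref{lem:partition}). But the proposal contains a concrete polarity error that places the entire counting argument at the wrong step. By \emph{courteous}, every link $a^+ \imc_{\q} a'^-$ satisfies $a \imc_{\deseq{\q}} a'$: positive-to-negative causal links \emph{are} configuration data, so the sets of negative successors of matched positive moves are matched by $\varphi$ essentially for free (the only freedom is which copy inside a fork to pick). The links \emph{not} determined by the configuration are the negative-to-positive ones: for $a^- \imc_{\q} a'^+$ one has $\pred(a') = a$, but $\just(a')$ may be any earlier negative move — the identity $\pred = \just$ holds for \emph{negative} moves, not for the positive responses of negative moves. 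Your treatment of the negative case asserts that $a'$ "is both the $\q_1$-successor and the $\deseq{\q_1}$-successor of $a$" and produces $b'$ "by receptivity and $-$-linearity"; this is false and inapplicable (receptivity and $-$-linearity concern negative moves), and it is exactly the step that separates $K_x$ from $K_y$. In short, the case you dismiss in one line is the crux, and the case you identify as the crux is the easy one. The counting is needed not to "extract a fork $X'$ matching $X$" (forks are determined by $\deseq{\q}$ and preserved by $\varphi$ outright), but to show that the positive \emph{response} in $\q_2$ to an element of $\varphi(X)$ lands in the $\varphi$-image of the clone class of the response in $\q_1$ — that is, to reconstruct the negative-to-positive wiring.

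The second gap is the context bookkeeping, which you correctly flag but leave unresolved; in the paper it is not a side delicacy but the reason the induction is \emph{not} run directly on $\sim^\varphi_\Gamma$. Instead one proves $a \clone^\varphi \varphi(a)$ for every positive $a$, where $\clone^\varphi$ existentially quantifies over a pointer-preserving context: the bisimulations obtained on the different branches below $a$ come with \emph{a priori} incompatible contexts, and merging them requires showing that minimal contexts for clones have at most one element (Lemma~\ref{lem:minimal_context_clone}) and the compatibility analysis of Lemma~\ref{lem:clone_lift}. One also needs that the clone relation preserves co-depth so that the induction hypothesis transports clone classes along $\varphi$, and a final argument at the root (the response to the initial move is the unique singleton clone class) to convert $\clone^\varphi$ back into $\sim^\varphi$. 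As written, your invariant "for every suitable context $\Gamma$ satisfying a compatibility condition" is not shown to be maintainable through the inductive step, and I do not see how to close it without something equivalent to the clone machinery.
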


The proof is the core of our injectivity argument, which we will cover
in Section \ref{subsec:pos_inj}. For now, we focus on how to
conclude from $\q_1 \sim^\varphi \q_2$ that we have $\p_1 \iso \p_2$.

\subsection{Compositional Properties of Bisimulations}\label{subsec:comp_bissim}

To achieve that, we exploit compositional properties of
bisimulations. More precisely, we show that $\q_i \in
\exp(\p_i)$ induces a bisimulation $\q_i \sim \p_i$, and 
find a way to compose
\begin{eqnarray}
\p_1 \sim \q_1 &\sim^\varphi& \q_2 \sim \p_2	\label{eq:1}
\end{eqnarray}
to deduce $\p_1 \sim \p_2$ in a sense yet to be defined, and
$\p_1 \iso \p_2$ will follow. We start with:

\begin{restatable}{lemma}{biseq}\label{lem:bis_eq}
Consider augmentations $\q, \p, \r \in \Aug(A)$, isomorphisms $\varphi :
\deseq{\q} \iso \deseq{\p}$, $\psi : \deseq{\p} \iso \deseq{\r}$, 
events $a \in \ev{\q}$, $b \in \ev{\p}$, $c \in \ev{\r}$, and contexts
$\Gamma, \Delta$. Then:
\[
\begin{array}{rl}
\text{\emph{reflexivity:}} &
\text{$a \sim^\id a$,}\\ 
\text{\emph{transitivity:}} &
\text{if $a \sim_\Gamma^\varphi b$ and $b \sim_\Delta^\psi c$ with
$\cod(\Gamma) = \dom(\Delta)$, then $a
\sim_{\Delta \circ \Gamma}^{\psi \circ \varphi} c$,}\\ 
\text{\emph{symmetry:}} &
\text{if $a \sim_\Gamma^\varphi b$ then $b
\sim_{\Gamma^{-1}}^{\varphi^{-1}} a$.}
\end{array}
\]
\end{restatable}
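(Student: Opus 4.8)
The plan is to prove each of the three properties by exhibiting an explicit relation between events and checking it satisfies the closure conditions defining $\sim^\varphi_\Gamma$ — that is, conditions (a), (b), (c), (1), (2) of Definition \ref{def:main_bisim}. Since $\sim^\varphi_\Gamma$ is the largest predicate satisfying these clauses (it is defined by a monotone inductive scheme, and the bisimulation conditions (1)–(2) are co-inductive in flavour over the finite trees $\q,\p$), it suffices in each case to produce a candidate family of triples $(a, b, \Gamma)$ (or pairs, for reflexivity) and verify it is closed under the "one step" unfolding. First I would handle \textbf{reflexivity}: take the relation consisting of all triples $(a, a, \id_{[a]^-_\q})$ where $[a]^-_\q$ denotes the negative events of $[a]_\q$ below $a$ that have been "opened" — more simply, I would prove by induction on the height of $a$ in $\leq_\q$ that $a \sim^{\id}_{\Delta} a$ for any context $\Delta$ that is the identity on its domain and satisfies $\Delta \vdash (a,a)$; the base clauses (a) with $\Gamma = \Delta$ identity are immediate since $\partial_\q(a) = \partial_\q(a)$, (b) and (c) hold because $\just(a) = \just(a)$ and $\varphi = \id$, and for the inductive step (1) each $a^+ \imc_\q a'$ is matched by itself with the context extended by $(a',a')$, which is still an identity, and similarly (2). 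Instantiating $\Delta = \emptyset$ gives $a \sim^{\id} a$, hence $\q \sim^{\id} \q$.

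For \textbf{transitivity}, the candidate relation is $\{(a, c, \Delta \circ \Gamma) \mid \exists b \in \ev{\p},\ a \sim^\varphi_\Gamma b \ \wedge\ b \sim^\psi_\Delta c,\ \cod(\Gamma) = \dom(\Delta)\}$, and I must check this is closed under (a)–(c),(1),(2) with isomorphism $\psi \circ \varphi$. Clause (a): $\partial_\q(a) = \partial_\p(b) = \partial_\r(c)$, and $(\Delta\circ\Gamma) \vdash (a,c)$ follows from $\Gamma \vdash (a,b)$ and $\Delta \vdash (b,c)$ together with the fact that $\Gamma$, $\Delta$ preserve the ordering structure enough that no element of $\dom(\Delta\circ\Gamma) = \dom(\Gamma)$ lies strictly above $a$ (from the first conjunct) and no element of $\cod(\Delta\circ\Gamma) = \cod(\Delta)$ lies strictly above $c$ (from the second). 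Clauses (b)/(c) are a careful case analysis on whether $\just(a) \in \dom(\Gamma)$: if it is, then $\Gamma(\just(a)) = \just(b) \in \cod(\Gamma) = \dom(\Delta)$, so by (b) for the second bisimulation $\just(b) \in \dom(\Delta)$ forces $\Delta(\just(b)) = \just(c)$, giving $(\Delta\circ\Gamma)(\just(a)) = \just(c)$; if $\just(a) \notin \dom(\Gamma)$ then by (c) $\just(b) \notin \cod(\Gamma) = \dom(\Delta)$, so by (c) for the second, $\psi(\just(b)) = \just(c)$, and combined with $\varphi(\just(a)) = \just(b)$ we get $(\psi\circ\varphi)(\just(a)) = \just(c)$. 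For (1): given $a^+ \imc_\q a'$, the first bisimulation yields $b^+ \imc_\p b'$ with $a' \sim^\varphi_{\Gamma\cup\{(a',b')\}} b'$, then the second yields $c^+ \imc_\r c'$ with $b' \sim^\psi_{\Delta\cup\{(b',c')\}} c'$, and since $\cod(\Gamma\cup\{(a',b')\}) = \cod(\Gamma)\cup\{b'\} = \dom(\Delta)\cup\{b'\} = \dom(\Delta\cup\{(b',c')\})$ the pair $(a',c')$ with context $(\Delta\cup\{(b',c')\})\circ(\Gamma\cup\{(a',b')\}) = (\Delta\circ\Gamma)\cup\{(a',c')\}$ is again in our candidate relation; the symmetric direction and clause (2) (where no context extension occurs) are analogous. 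Finally \textbf{symmetry} is the most routine: the candidate relation is $\{(b,a,\Gamma^{-1}) \mid a \sim^\varphi_\Gamma b\}$, and one checks (a)–(c),(1),(2) with $\varphi^{-1}$ simply by reading each clause of Definition \ref{def:main_bisim} "backwards", using that the bisimulation conditions (1)–(2) are already stated symmetrically ("and symmetrically").

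The main obstacle I anticipate is the bookkeeping around clause (a)'s side-condition $\Gamma \vdash (a,b)$ under composition: one must be sure that $\cod(\Gamma) = \dom(\Delta)$ is genuinely preserved as contexts grow along (1), and that the ordering constraints "$\neg(a' >_\q a)$" compose correctly — this is where a clean invariant (e.g. that along a bisimulation run all context entries are justifiers-so-far lying on the branches already traversed) pays off. I would isolate this as a small auxiliary observation before the three-part verification, so that transitivity goes through without repeated ad hoc arguments. The co-inductive/inductive justification (why "closed under one unfolding step" suffices to conclude $a \sim^\varphi_\Gamma b$) should be stated once at the top, referring to the definitional scheme of $\sim$, rather than re-argued three times.
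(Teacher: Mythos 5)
Your proof is correct and follows exactly the argument the paper intends (the paper in fact omits the detailed verification of this lemma as routine): a direct check, by induction on the finite causal trees, that the reflexive, composite and inverse relations satisfy clauses (a)--(c), (1), (2) of Definition \ref{def:main_bisim}, with the only delicate points being the preservation of the side condition $\Gamma \vdash (a,b)$ and of $\cod(\Gamma)=\dom(\Delta)$ as contexts grow along clause (1) --- both of which you handle. The one cosmetic slip is calling the induction measure the ``height'' of $a$; it is the co-depth (distance to the maximal events) that decreases.
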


But in order to treat $\q_i \in \exp(\p_i)$ as a bisimulation between
$\q_i$ and $\p_i$, Definition \ref{def:main_bisim} does not do the
trick: we cannot expect there to be an iso between $\deseq{\q_i}$ and
$\deseq{\p_i}$ as $\q_i$ has by construction many more events.  We
therefore introduce a variant of Definition \ref{def:main_bisim}:

\begin{definition}\label{def:var_bisim}
Consider $\q, \p \in \Aug(A)$.
For $a \in \ev{\q}$, $b \in \ev{\p}$, $\Gamma$, we have
$a \sim_\Gamma b$ if
\[
\begin{array}{rcl}
\text{\emph{(a)}} && \text{$\partial_\q(a) = \partial_\p(b)$ and $\Gamma \vdash
(a,b)$,}\\
\text{\emph{(b)}} && 
\text{$\just(a^+) \in \dom(\Gamma)$ and $\Gamma(\just(a)) =
\just(b)$,}\\
\text{\emph{(1)}} &&
\text{if $a^+ \imc_\q a'$, then $b^+ \imc_\p b'$ with $a'
\sim_{\Gamma \cup \{(a',b')\}} b'$, and symmetrically,}\\
\text{\emph{(2)}} &&
\text{if $a^- \imc_\q a'$, then $b^- \imc_\p b'$ with $a'
\sim_\Gamma b'$, and symmetrically.}
\end{array}
\]
\end{definition}

This helps us relate $\q$ and $\p$ when $\deseq{\q}$
and $\deseq{\p}$ are not isomorphic: we set $\q \sim \p$ iff $\init(\q)
\sim_{\{(\init(\q), \init(\p))\}} \init(\p)$. A variation of
Lemma \ref{lem:bis_eq} shows $\sim$ is an equivalence, and:

\begin{restatable}{proposition}{simcharexp}\label{prop:sim_char_exp}
Consider $A$ an arena, $\p \in \Aug(A)$ a causal strategy, and $\q \in
\Aug(A)$.

Then, $\q$ is a $-$-obsessional expansion of $\p$ iff $\q \sim \p$.
\end{restatable}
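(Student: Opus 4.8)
The plan is to prove both directions of the biconditional by unwinding the two bisimulation-game definitions side by side, using the fact that, for $\q \in \exp(\p)$, the simulation morphism $\varphi : \q \to \p$ is unique (Lemma \ref{lem:mor_unique}) and that immediate causal links $a^+ \imc b^-$ live in $\q$ iff they live in $\deseq{\q}$ (by \emph{courteous}).

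For the forward direction, I would assume $\q$ is a $-$-obsessional expansion of $\p$, take the simulation morphism $\varphi : \q \to \p$, and show $\init(\q) \sim_{\{(\init(\q),\init(\p))\}} \init(\p)$ by exhibiting a family of contexts witnessing the bisimulation. The natural candidate is: relate $a \in \ev{\q}$ to $b \in \ev{\p}$ exactly when $\varphi(a) = b$ and $a,b$ sit at matching positions relative to the current negative context $\Gamma$ (which records the pairs $(\just(a'), \varphi(\just(a')))$ encountered so far). Clause (a) (\emph{arena-preserving} plus $\Gamma \vdash (a,b)$) is immediate from $\varphi$ being a morphism and from the way $\Gamma$ is built incrementally along immediate causality. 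Clause (b) — the justifier of a positive move $a^+$ is already in $\dom(\Gamma)$ with $\Gamma(\just(a)) = \just(b)$ — holds because $\just(a^+) <_\q a$ by \emph{rule-abiding}, so $\just(a^+)$ was visited strictly earlier in the bisimulation game and recorded in $\Gamma$; here I must check the bookkeeping carefully, since only negative moves go into the context, and the justifier of a positive move is indeed negative (by \emph{alternating}). For the forth-and-back conditions: condition (1) (matching positive successors of $a^+$) uses $+$-obsessionality of $\q$ for the "forth" half — every Player move available in $\p$ below $\varphi(a^+)$ appears below $a^+$ in $\q$ — and \emph{determinism} for uniqueness on both sides; condition (2) (matching negative successors of $a^-$) uses \emph{receptivity} of $\q$ for the "forth" half and, crucially, $-$-obsessionality of $\q$ for the "back" half, i.e. every negative successor of $b^-$ in $\p$ must be matched by one in $\q$ — this is precisely what $-$-obsessional buys us, and it is what would fail for a non-$-$-obsessional expansion.

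For the converse, I would assume $\q \sim \p$ and reconstruct both the simulation morphism and the $-$-obsessionality property from the bisimulation. The bisimulation game, played out from $\init(\q) \sim_{\{\cdots\}} \init(\p)$, yields for each $a \in \ev{\q}$ a canonical partner $\varphi(a) \in \ev{\p}$: one shows by induction on $\leq_\q$ that the partner is well-defined and unique (uniqueness of the match at each step follows from \emph{determinism} and $-$-linearity of $\p$, exactly as in Lemma \ref{lem:mor_unique}), \emph{arena-preserving} from clause (a), \emph{causality-preserving} by construction of the game, and \emph{configuration-preserving} from clauses (b)/(c)-style tracking of justifiers — here using that in Definition \ref{def:var_bisim} the justifier of every positive move is forced into $\Gamma$, and the justifier of a negative move coincides with its predecessor by \emph{courteous}, so it is tracked by clause (2). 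This gives $\q \in \exp(\p)$: \emph{simulation} is the morphism just built, and \emph{$+$-obsessional} is the "forth" half of bisimulation clause (1). Finally $-$-obsessionality of $\q$ is the "back" half of clause (2): if $\partial_\q(a^+) \imc_A b^-$, then since $\p$ is a causal strategy receptivity gives $\varphi(a^+) \imc_\p b'$ with $\partial_\p(b') = b^-$, and the bisimulation forces a matching $a^+ \imc_\q a'$ with $\partial_\q(a') = b^-$.

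The main obstacle I anticipate is the careful handling of the justifier bookkeeping across the two definitions — in particular matching up the "context $\Gamma$ records negative moves seen so far" discipline of Definition \ref{def:var_bisim} with the static justification structure $\just(-)$ of the augmentations, and checking that clause (b)'s requirement ($\just(a^+)$ already present in $\Gamma$) is genuinely guaranteed by \emph{rule-abiding} rather than merely plausible. A secondary subtlety is that the reconstructed $\varphi$ in the converse must be shown to be a bona fide morphism of augmentations and not just a function respecting arena labels; this requires threading \emph{courteous} through to see that the only immediate-causality links one needs to preserve are precisely those exercised by the bisimulation game, i.e. that no "extra" causal link in $\q$ is left unmatched.
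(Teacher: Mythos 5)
Your overall plan coincides with the paper's proof: the forward direction is carried by the simulation morphism $\varphi : \q \to \p$ together with the contexts $\Gamma\tuple{a} : [a]^-_\q \iso [\varphi(a)]^-_\p$ that accumulate the negative moves along the branch (this is exactly Lemma \ref{lem:oexp_to_sim}), and the converse rebuilds $\varphi$ by induction on $\leq_\q$, with uniqueness of the match from \emph{determinism} and \emph{$-$-linearity} as in Lemma \ref{lem:mor_unique}. Your justification of clause \emph{(b)} via \emph{rule-abiding} (so that $\just(a^+) \in [a]^-_\q = \dom(\Gamma)$) and \emph{configuration-preserving} is also precisely the paper's, and your converse is in fact more explicit than the paper's about where $+$-obsessionality and $-$-obsessionality of $\q$ are read off.

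There is, however, a systematic polarity swap that makes several individual steps wrong as stated. In Definition \ref{def:var_bisim}, clause \emph{(1)} concerns a \emph{positive} $a^+$ whose successors $a'$ are \emph{negative} (which is why they are added to the context), while clause \emph{(2)} concerns a \emph{negative} $a^-$ whose successors are \emph{positive}. Consequently: the ``forth'' halves of both clauses need no obsessionality at all --- they are immediate from $\varphi$ being \emph{causality-preserving}; the ``back'' half of clause \emph{(1)} is where $-$-obsessionality of $\q$ enters (a negative successor of $\varphi(a^+)$ in $\p$ must be matched below $a^+$ in $\q$, with $-$-linearity of $\p$ identifying the match), and the ``back'' half of clause \emph{(2)} is exactly the $+$-obsessionality required of expansions. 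Your attributions --- ``condition (1) ... uses $+$-obsessionality ... for the forth half'', ``condition (2) ... receptivity ... for the forth half and $-$-obsessionality for the back half'', and the mirrored claims in the converse --- are all inverted; likewise \emph{determinism} constrains positive successors of negative moves, hence clause \emph{(2)}, not \emph{(1)}. None of this changes the architecture of the argument, and every ingredient you invoke is indeed needed somewhere, but as written the steps are justified by the wrong hypotheses; once the polarities are straightened out, your proof is the paper's.
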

\begin{proof}
\emph{If.} We simply construct $\varphi : \q \to \p$ for all $a\in \ev{\q}$ by
induction on $\leq_\q$. The image is provided by bisimulation, its
uniqueness by \emph{determinism} and \emph{$-$-linearity}.

\emph{Only if.}
For $\varphi : \q \to \p$ and $a \in \ev{\q}$, write $[a]^-_\q = \{a'
\in \ev{\q} \mid a'\leq_\q a~\&~\lambda(a') = -\}$; 
it is totally ordered by $\leq_\q$ as $\q$ is \emph{forestial}.
From the conditions on $\varphi$, it is direct that it induces an
order-iso $[a]^-_\q \iso [\varphi (a)]^-_\p$, \emph{i.e.}
a context $\Gamma\tuple{a} : [a]^-_\q \iso [\varphi (a)]^-_\p$. Then,
by Lemma \ref{lem:oexp_to_sim}, we obtain $a \sim_{\Gamma\tuple{a}}
\varphi(a)$ for all $a \in \ev{\q}$. We then apply this to $\init(\q)$. 
\end{proof} 

This vindicates Definition \ref{def:var_bisim}. But for (\ref{eq:1}), we
must compose two kinds of bisimulations, following Definitions
\ref{def:main_bisim} and \ref{def:var_bisim}. Fortunately, whenever both
definitions apply, they coincide:

%
%
%

\begin{restatable}{lemma}{removeiso}\label{lem:remove_iso}
Consider $\q, \p \in \Aug(A)$, and $\varphi : \deseq \q \iso \deseq \p$.
Then, $\q \sim^\varphi \p$ iff $\q \sim \p$.
\end{restatable}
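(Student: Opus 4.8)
The plan is to show each direction separately, exploiting that the variant bisimulation of Definition~\ref{def:var_bisim} forgets the extrinsic isomorphism $\varphi$ whereas the bisimulation of Definition~\ref{def:main_bisim} tracks it, but that when $\q$ and $\p$ have isomorphic desequentializations the tracking data is redundant. Concretely, I would prove the stronger pointwise statement: for all $a \in \ev{\q}$, $b \in \ev{\p}$ and contexts $\Gamma$, if $\varphi$ restricted appropriately agrees with $\Gamma$ on their common domain, then $a \sim_\Gamma^\varphi b$ iff $a \sim_\Gamma b$; applying this at $\init(\q), \init(\p)$ with $\Gamma = \{(\init(\q), \init(\p))\}$ (note for the empty-context reading one checks the two initial events match, which is automatic by \emph{arena-preserving}) gives the lemma. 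Both directions go by induction on the well-founded order $\leq_\q$ (equivalently on the bisimulation game tree, which is finite since the augmentations are finite).

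For the forward direction ($\q \sim^\varphi \p \Rightarrow \q \sim \p$), the key observation is that clause (b) of Definition~\ref{def:var_bisim} is actually \emph{implied} by the combination of clauses (b) and (c) of Definition~\ref{def:main_bisim} together with the structure of the game: when $a^+$ has $\just(a) \notin \dom(\Gamma)$, clause (c) gives $\varphi(\just(a)) = \just(b)$; but $\just(a)$ is a negative move lying on the branch $[a]_\q$ below $a$, so it was visited as the second component of some earlier match in the bisimulation game and hence is forced to be in $\dom(\Gamma)$ by the time we reach $a$. I would make this precise by an auxiliary invariant maintained along the induction: \emph{every negative event strictly below the current pair in $\leq_\q$ (resp. $\leq_\p$) lies in $\dom(\Gamma)$ (resp. $\cod(\Gamma)$)}. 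This invariant is preserved by both bisimulation steps (1) and (2): step (1) adds the new pair $(a', b')$ and we only recurse into the strict future of $a'$, where the justifier of any relevant positive is among the negatives already recorded; step (2) keeps $\Gamma$ fixed but its conclusion $\Gamma \vdash (a', b')$ ensures no recorded event lies above $a'$, and since $a' $ is negative its justifier $\just(a') = \pred(a')$ is its predecessor $a$, already in $\Gamma$ by the previous step's bookkeeping. With the invariant in hand, clause~(c) of Definition~\ref{def:main_bisim} can never be triggered for positive $a$ whose justifier matters, so clauses (b) of the two definitions agree, and conditions (a),(1),(2) are literally identical; the induction closes immediately.

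For the converse ($\q \sim \p \Rightarrow \q \sim^\varphi \p$), I again run the induction maintaining the same invariant, plus the additional bookkeeping that $\Gamma$ agrees with $\varphi$ on negative events, i.e. $\Gamma \subseteq \varphi$ as relations restricted to $\dom(\Gamma)$. The invariant is re-established exactly as above. Given it, clause (c) of Definition~\ref{def:main_bisim} is vacuous (its hypothesis $\just(a^+) \notin \dom(\Gamma)$ never holds), clause (b) matches clause (b) of Definition~\ref{def:var_bisim}, and clause (a) plus the bisimulation steps transfer verbatim; one only needs that when step (1) introduces $(a', b')$, the pair is consistent with $\varphi$ — but $a'$ is a positive move, so it does not enter the $\Gamma$-versus-$\varphi$ comparison, and step (2) adds nothing to $\Gamma$ while the invariant forces the matched negative pairs to lie along $\varphi$ by construction of how they were added at earlier positive-steps. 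The main obstacle is getting the invariant statement exactly right so that it is simultaneously strong enough to kill clause~(c) and weak enough to be preserved by step~(2), where $\Gamma$ does not grow but we descend to events whose justifiers must already be in $\Gamma$; this hinges on the observation from the excerpt that for a negative event $\just(a) = \pred(a)$ (they coincide by \emph{courteous}), so the justifier of a negative move is precisely the event it was spawned from in the previous bisimulation step, and is therefore already recorded. Once this is isolated as a clean sublemma, both inductions are routine.
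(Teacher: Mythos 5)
Your harder direction ($\q \sim^\varphi \p \Rightarrow \q \sim \p$) is essentially the paper's proof: both strengthen the claim to contexts recording every negative event below the current pair (the paper's \emph{complete} contexts, $[a]^-_\q \subseteq \dom(\Gamma)$ and $[b]^-_\p \subseteq \cod(\Gamma)$), check that this is preserved by the two bisimulation steps, and conclude that clause \emph{(c)} of Definition~\ref{def:main_bisim} is never triggered because $\just(a^+)$ is a negative event strictly below $a$ and hence already recorded. Two small slips there: your invariant should be ``below or equal'' rather than ``strictly below'', since in step \emph{(2)} the negative source $a$ itself must already lie in $\dom(\Gamma)$ when you descend to its successor; and in step \emph{(2)} that successor $a'$ is \emph{positive}, not negative as you write. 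Neither is fatal.

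The genuine problem is your converse direction. The extra invariant ``$\Gamma \subseteq \varphi$ restricted to $\dom(\Gamma)$'' is not preserved by the bisimulation game, and your argument for its preservation rests on a polarity error: in clause \emph{(1)} it is the \emph{positive} $a$ whose successors are matched, so the events $a'$ added to the context are \emph{negative} --- precisely the events that populate $\dom(\Gamma)$, since contexts contain only negatives by definition --- and nothing in Definition~\ref{def:var_bisim} forces the chosen $b'$ to equal $\varphi(a')$. When $b$ has several negative successors with the same arena image (the normal situation in an expansion with forks, and exactly what Figure~\ref{fig:ex_bisim} illustrates), the bisimulation may legitimately match $a'$ to a successor other than $\varphi(a')$, and your induction hypothesis then no longer applies. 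Fortunately the invariant is also unnecessary: clause \emph{(b)} of Definition~\ref{def:var_bisim} asserts $\just(a^+) \in \dom(\Gamma)$ outright, which makes clause \emph{(c)} of Definition~\ref{def:main_bisim} vacuous and turns its clause \emph{(b)} into exactly the variant's clause \emph{(b)}, for an \emph{arbitrary} context; this is all the paper needs for this direction (``case \emph{(c)} is never used''). Drop the $\Gamma \subseteq \varphi$ bookkeeping and the converse becomes a one-line induction. The only place $\varphi$ is genuinely needed in either direction is to reconcile the two initial contexts, $\emptyset$ versus $\{(\init(\q),\init(\p))\}$, via $\varphi(\init(\q)) = \init(\p)$ --- which you do note.
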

\begin{proof}
\emph{If.} Straightforward from Definitions \ref{def:main_bisim}
and \ref{def:var_bisim}: case \emph{(c)} is never used.

\emph{Only if.} We actually prove that for all $a \in \ev{\q}$, $b \in
\ev{\p}$, for all context $\Gamma$ which is \emph{complete} in the sense
that $[a]_\q^-\subseteq\dom(\Gamma)$ and
$[b]_\p^-\subseteq\cod(\Gamma)$, if $a \sim_{\Gamma}^\varphi b$ then $a
\sim_{\Gamma} b$. The proof is immediate by induction: the clause
\emph{(c)} is never used from the hypothesis that $\Gamma$ is complete.
Finally, we apply this to the roots of $\q, \p$ with context
$\{(\init(\q), \init(\p))\}$.
\end{proof}

Altogether, we have: 

\begin{proposition}\label{prop:bisim_to_iso}
Consider $\p_1, \p_2 \in \Aug(A)$ causal strategies, $\q_1 \in
\exp(\p_1), \q_2 \in \exp(\p_2)$ characteristic expansions with an
iso $\varphi : \deseq{\q_1} \iso \deseq{\q_2}$.
If $\q_1 \sim^\varphi \q_2$, then $\p_1 \iso \p_2$.
\end{proposition}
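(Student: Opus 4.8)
The plan is to realise the chain $\p_1 \sim \q_1 \sim^\varphi \q_2 \sim \p_2$ of (\ref{eq:1}) entirely inside the context-free bisimulation $\sim$ of Definition \ref{def:var_bisim}, collapse it by transitivity, and then extract the isomorphism from Proposition \ref{prop:sim_char_exp}.

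First I would note that a characteristic expansion is in particular a $-$-obsessional expansion, since \emph{$-$-obsessional} is one of its defining clauses; so $\q_1 \in \exp(\p_1)$ and $\q_2 \in \exp(\p_2)$ are $-$-obsessional expansions. By Proposition \ref{prop:sim_char_exp} this gives $\q_1 \sim \p_1$ and $\q_2 \sim \p_2$, and by the symmetry of $\sim$ (the variant of Lemma \ref{lem:bis_eq} announced for Definition \ref{def:var_bisim}) also $\p_1 \sim \q_1$ and $\p_2 \sim \q_2$. The middle link is of the other kind: we are given $\q_1 \sim^\varphi \q_2$ together with the isomorphism $\varphi : \deseq{\q_1} \iso \deseq{\q_2}$. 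Since such an isomorphism is available, Lemma \ref{lem:remove_iso} applies and yields $\q_1 \sim \q_2$. We thus have $\p_1 \sim \q_1 \sim \q_2 \sim \p_2$, all in the single relation $\sim$; transitivity of $\sim$ (again the variant of Lemma \ref{lem:bis_eq}) gives $\p_1 \sim \p_2$.

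It remains to pass from $\p_1 \sim \p_2$ to $\p_1 \iso \p_2$. As $\p_1$ and $\p_2$ are causal strategies, applying Proposition \ref{prop:sim_char_exp} with $\q := \p_1$ and $\p := \p_2$ shows that $\p_1$ is a $-$-obsessional expansion of $\p_2$, and symmetrically $\p_2$ is a $-$-obsessional expansion of $\p_1$; from the \emph{simulation} clause of Definition \ref{def:exp} this produces morphisms $\varphi_{12} : \p_1 \to \p_2$ and $\varphi_{21} : \p_2 \to \p_1$. Both composites $\varphi_{21}\circ\varphi_{12}$ and $\varphi_{12}\circ\varphi_{21}$ are then morphisms from a causal strategy to itself, hence equal to the identity by uniqueness of morphisms into causal strategies (Lemma \ref{lem:mor_unique}, which rests precisely on $-$-linearity and \emph{determinism}); so $\varphi_{12}$ is an invertible morphism, \emph{i.e.} an isomorphism $\p_1 \iso \p_2$.

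The argument is essentially a bookkeeping exercise gluing together the earlier results, the genuinely delicate content — the transitivity and symmetry of the context-free $\sim$, and the coincidence of $\sim$ and $\sim^\varphi$ in the presence of a desequentialisation isomorphism — being packaged into the variant of Lemma \ref{lem:bis_eq}, into Lemma \ref{lem:remove_iso}, and into Proposition \ref{prop:sim_char_exp}. The one point inside the present proof that requires care is the final step: one should extract the isomorphism through uniqueness of morphisms (valid because the targets are $-$-linear and deterministic) rather than attempting to build the inverse morphism by hand.
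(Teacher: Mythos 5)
Your proof is correct and follows essentially the same route as the paper's: convert $\sim^\varphi$ to $\sim$ via Lemma \ref{lem:remove_iso}, relate each $\q_i$ to $\p_i$ via Proposition \ref{prop:sim_char_exp} (using that characteristic expansions are $-$-obsessional), collapse the chain by the equivalence properties of $\sim$, and extract the isomorphism from the two morphisms by uniqueness (Lemma \ref{lem:mor_unique}). The only difference is that you spell out the bookkeeping slightly more explicitly; no gap.
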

\begin{proof}
By Lemma \ref{lem:remove_iso}, $\q_1 \sim \q_2$.
As characteristic expansions, $\q_1$ and $\q_2$ are $-$-obsessional, so
by Proposition \ref{prop:sim_char_exp}, $\q_1 \sim \p_1$ and
$\q_2 \sim \p_2$. So 
$\p_1 \sim \q_1 \sim \q_2 \sim \p_2$
but $\sim$ is an equivalence, so $\p_1 \sim \p_2$.
By Proposition \ref{prop:sim_char_exp}, we have
$\varphi : \p_1 \to \p_2$ and $\psi : \p_2 \to \p_1$
composing to $\psi \circ \varphi : \p_1 \to \p_1$. But by Lemma
\ref{lem:mor_unique},  there is only one morphism from $\p_1$ to itself,
the identity, so $\psi \circ \varphi = \id$. Likewise $\varphi \circ
\psi = \id$, hence $\varphi : \p_1 \iso \p_2$ as required.
\end{proof}

\subsection{Clones}\label{subsec:clones}

In Section \ref{subsec:char_exp}, we introduced \emph{characteristic
expansions} which, via duplications with well-chosen cardinalities,
constrain the causal structure. More precisely, if $\q \in \exp(\p)$ is
characteristic, looking at a set of duplicated Player moves in $\deseq{\q}$
of cardinality $n$ as in Figure \ref{fig:copy_player}, decomposing
$n = \sum_{i\in I} 2^i$,
we can deduce that the causal predecessors of the $\bq^+_j$'s are among
the forks with cardinality $2^i$ for $i\in I$. But that is not enough: this
does not tell us how to distribute the $\bq^+_j$'s to the forks,
and not all the choices will work: while the $\bq^+_j$'s are copies,
their respective causal follow-ups might differ. So the idea is simple:
imagine that the causal follow-ups for the $\bq^+_j$'s are already
reconstructed. Then we may compare them using bisimulation, and
replicate the same reasoning as above on bisimulation equivalence classes. 

So we are left with the task of leveraging bisimulation to define an
adequate equivalence relation on $\ev{\q}$. This leads to the notion of
\emph{clones}, our last technical tool.

\begin{definition}\label{def:clone}
Consider $\q, \p \in \Aug(A)$, $\varphi : \deseq{\q} \iso \deseq{\p}$,
and $a \in \ev{\q}$, $b \in \ev{\p}$. 

We say that $a$ and $b$ are \textbf{clones} through $\varphi$, written
$a \clone^\varphi b$, if there is a context $\Gamma$ preserving
pointers (\emph{i.e.} for all $a' \in \dom(\Gamma)$, $\varphi(\just(a'))
= \just(\Gamma(a'))$) such that $a \sim^\varphi_\Gamma b$.
\end{definition}

This allows $a$ and $b$ (and their follow-ups) to
change their pointers through some unspecified $\Gamma$.
\begin{figure}[t]
\begin{minipage}{.45\linewidth}
\[
\scalebox{.8}{$
\xymatrix@R=9pt@C=10pt{
&\bq^-
        \ar@{.}[dl]
        \ar@{.}[d]
        \ar@{.}[drr]\\
\bq^+_1&
\bq^+_2&\dots&
\bq^+_n\\~
}$}
\]
\caption{A set of copied Player moves}
\label{fig:copy_player}
\end{minipage}
\hfill
\begin{minipage}{.5\linewidth}
\[
\scalebox{.8}{$
\xymatrix@R=5pt@C=10pt{
&\bq^+
	\ar@{.}[dl]
	\ar@{-|>}@/_.1pc/[dl]
	\ar@{.}[d]
	\ar@{-|>}@/_.1pc/[d]
	\ar@{.}[drr]
	\ar@{-|>}@/_.1pc/[drr]\\
\bq_1^-	\ar@{.}[d]
	\ar@{-|>}@/_.1pc/[d]&
\bq_2^-	\ar@{.}[d]
	\ar@{-|>}@/_.1pc/[d]&\dots&
\bq_n^-	\ar@{.}[d]
	\ar@{-|>}@/_.1pc/[d]\\
\bq_1^+	\ar@{}[r]|\clone & 
\bq_2^+ \ar@{}[r]|\clone &
\dots	\ar@{}[r]|\clone & \bq_n^+
}$}
\]
\caption{A set of clones switching pointers}
\label{fig:fig_ex_clones}
\end{minipage}
\end{figure}
Indeed, the picture painted by Figure \ref{fig:copy_player} is
limited: a fork might trigger Player moves with different
pointers, as in Figure \ref{fig:fig_ex_clones}. As
$a \clone^\varphi b$ quantifies existentially over 
contexts, compositional properties of clones are 
more challenging. Nevertheless, via a canonical form
for contexts and leveraging Lemma \ref{lem:bis_eq}, we show that $a
\clone^{\id} a$, that $a \clone^\varphi b$ and $b \clone^\psi c$ imply
$a \clone^{\psi\circ \varphi} c$, and that $a \clone^\varphi b$ implies
$b \clone^{\varphi^-1} a$ whenever these typecheck -- see Appendix
\ref{app:clones}. Instantiating Definition
\ref{def:clone} with $\q = \p$ and $\varphi = \id$, we get an
equivalence relation $\clone$ on $\ev{\q}$.

Moreover, we have the crucial property that 
forks generate clones (see Appendix \ref{app:clones}):

\begin{restatable}{lemma}{twinclones}\label{lem:twin_clones}
Consider $\q$ a $-$-obsessional expansion of causal strategy $\p$ on
arena $A$.

Then, for all $a_1^-, a_2^- \in X \in \twin(\q)$, for all $a_1^- \imc_\q
b_1^+$ and $a_2^- \imc_\q b_2^+$, $b_1 \clone b_2$.
\end{restatable}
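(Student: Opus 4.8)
The plan is to exploit \emph{Proposition~\ref{prop:sim_char_exp}}, which gives us a bisimulation $\q \sim \p$ when $\q$ is a $-$-obsessional expansion, together with the symmetry and transitivity properties of $\sim$ recorded in (the variant of) \emph{Lemma~\ref{lem:bis_eq}}. The key observation is that the two Opponent moves $a_1^-, a_2^- \in X$ lie in a fork, so they have the same image in $\p$: writing $\varphi : \q \to \p$ for the unique morphism of the expansion, \emph{$-$-linearity} of $\p$ forces $\varphi(a_1) = \varphi(a_2) =: c^-$, because $a_1$ and $a_2$ share a $\leq_\q$-predecessor (or are both $\init(\q)$), $\varphi$ preserves immediate causality, and $\partial_\q(a_1) = \partial_\q(a_2)$. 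Then $b_1^+$ and $b_2^+$, being immediate successors of $a_1$ and $a_2$ respectively with $a_i \imc_\q b_i$ and $\lambda(b_i) = +$, are both mapped by $\varphi$ to immediate successors of $c^-$ in $\p$; by \emph{determinism} of $\p$ there is only one such successor $d^+$, so $\varphi(b_1) = \varphi(b_2) = d^+$.

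Next I would unpack the proof of Proposition~\ref{prop:sim_char_exp} to get, for each $i$, a bisimulation $b_i \sim_{\Gamma\tuple{b_i}} d$ where $\Gamma\tuple{b_i} : [b_i]^-_\q \iso [d]^-_\p$ is the canonical context induced by $\varphi$ along the totally-ordered chain of negative ancestors (this is exactly \emph{Lemma~\ref{lem:oexp_to_sim}} as invoked in that proof). Applying \emph{symmetry} to $b_1 \sim_{\Gamma\tuple{b_1}} d$ yields $d \sim_{\Gamma\tuple{b_1}^{-1}} b_1$, and then \emph{transitivity} — whose side condition $\cod(\Gamma\tuple{b_1}^{-1}) = \dom(\Gamma\tuple{b_2})$ holds since both equal $[d]^-_\p$ — gives $b_2 \sim_{\Gamma\tuple{b_2} \circ \Gamma\tuple{b_1}^{-1}} b_1$ with the composed bijection ranging over $[b_2]^-_\q$ and $[b_1]^-_\q$. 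The last thing to check is that $\Gamma := \Gamma\tuple{b_2} \circ \Gamma\tuple{b_1}^{-1}$, read now as a context between $\q$ and $\q$ through $\varphi = \id$, \emph{preserves pointers} in the sense of Definition~\ref{def:clone}, i.e. $\just(\Gamma(a')) = \just(a')$ for all $a' \in \dom(\Gamma)$; this follows because $\Gamma\tuple{b_i}$ is built to be compatible with $\varphi$ on justifiers (that is precisely what makes the context "canonical" in Lemma~\ref{lem:oexp_to_sim}), so the composite is compatible with $\id$. Hence $b_1 \clone b_2$ by definition.

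The step I expect to be the main obstacle is reconciling the two different flavours of bisimulation — $\sim$ without an isomorphism (Definition~\ref{def:var_bisim}) versus $\sim^\varphi$ with one (Definition~\ref{def:main_bisim}, on which the definition of $\clone$ is built). Since $\deseq{\q}$ and $\deseq{\p}$ are \emph{not} isomorphic, the bisimulations produced by Proposition~\ref{prop:sim_char_exp} are of the "$\sim$" kind, whereas $\clone$ is phrased using "$\sim^\varphi$". But here we are comparing $b_1$ and $b_2$ \emph{both inside $\q$}, so the relevant instance is $\q = \p$, $\varphi = \id$, where \emph{Lemma~\ref{lem:remove_iso}} tells us the two notions coincide; I would route the argument so that the final statement $b_1 \sim^{\id}_{\Gamma} b_2$ is obtained by transporting the "$\sim$"-bisimulation along the identity via that lemma, or equivalently redo the symmetry/transitivity bookkeeping directly for the unparametrised $\sim$ and only convert at the very end. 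A secondary care point is making sure the contexts $\Gamma\tuple{b_i}$ genuinely satisfy $\Gamma \vdash (b_i, d)$ and all the freshness constraints needed for the compositions to typecheck, but these are immediate from the fact that $[b_i]^-_\q$ consists exactly of the negative ancestors of $b_i$, hence contains nothing strictly above $b_i$.
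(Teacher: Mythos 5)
Your proof follows the paper's own argument essentially step for step: the paper likewise uses $-$-linearity and determinism of $\p$ to force $\varphi(b_1)=\varphi(b_2)$, invokes Lemma~\ref{lem:oexp_to_sim} to obtain $b_i \sim_{\Gamma\tuple{b_i}} \varphi(b_i)$, and composes via the symmetry and transitivity of Lemma~\ref{lem:bis_eq}. The one place where your justification is too quick is the pointer-preservation check for the composed context $\Gamma = \Gamma\tuple{b_2}\circ\Gamma\tuple{b_1}^{-1}$. Compatibility of each $\Gamma\tuple{b_i}$ with $\varphi$ on justifiers only yields $\varphi(\just(\Gamma(c))) = \varphi(\just(c))$, and since the expansion morphism $\varphi : \q \to \p$ is a non-injective folding map, this does not by itself give $\just(\Gamma(c)) = \just(c)$ --- so ``each factor is compatible with $\varphi$, hence the composite is compatible with $\id$'' is not a valid inference as stated. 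The paper closes this by a short case analysis on $c \in \dom(\Gamma) = [b_1]^-_\q$: writing $b$ for the common predecessor of $a_1, a_2$, either $c \leq_\q b$, in which case $c$ belongs to both $[b_1]^-_\q$ and $[b_2]^-_\q$ with $\Gamma\tuple{b_1}(c) = \varphi(c) = \Gamma\tuple{b_2}(c)$, so $\Gamma(c) = c$ and there is nothing to check; or $c = a_1$, in which case $\Gamma(a_1) = a_2$ and both have the same justifier $b$ by courtesy. With that detail supplied (and the degenerate case $X = \{\init(\q)\}$, which you already note), your argument coincides with the paper's.
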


By Lemma \ref{lem:twin_clones}, if a clone class
includes a positive move, it also has all its cousins triggered by the
same fork -- so clone classes may be partitioned
following forks:

\begin{restatable}{lemma}{partition}\label{lem:partition}
Let $\q$ be a characteristic expansion of causal strategy $\p$, and $Y$
a clone class of positive events in $\ev{\q}$, with $\sharp Y =
\sum_{i\in I} 2^i$ for $I \subseteq \mathbb{N}$ finite.
Then, for all $i \in \mathbb{N}$, $i\in I$ iff there is $X_i \in \twin(\q)$ with $\sharp X_i =
2^i$ and $a^- \in X_i$, $b^+ \in Y$ such that $a^- \imc_\q b^+$.
\end{restatable}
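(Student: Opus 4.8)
The plan is to show that the clone class $Y$ is partitioned, according to which fork contains the predecessor of each of its elements, into blocks $B_X$ indexed by the forks $X$ that \emph{contribute} a Player successor to $Y$; and that each contributing fork contributes a block of cardinality exactly $\sharp X$. Since $\q$ is \emph{well-powered} and \emph{injective}, the cardinalities of the contributing forks are pairwise distinct powers of $2$, so $\sharp Y = \sum_{X\text{ contributing}}\sharp X$ is a sum of pairwise distinct powers of $2$; by uniqueness of binary representation the set of exponents occurring is exactly $I$, which is precisely what the lemma asserts.

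First I would set up the combinatorics. The forks of $\q$ partition its negative events: the singleton $\{a^-\}$ always satisfies \emph{negative}, \emph{sibling} and \emph{identical}, hence lies in a fork, and two overlapping forks merge by maximality. Since $A$ is alternating and $\q$ is \emph{courteous}, a negative event can only have positive immediate successors, and by \emph{determinism} at most one; so $\pred$ maps each positive event of $\q$ to a negative one, injectively (two positive events sharing a negative predecessor are equal by \emph{determinism}). Calling $X \in \twin(\q)$ \emph{contributing} when some $a^- \in X$ has $a^- \imc_\q b^+$ with $b^+ \in Y$, and putting $B_X = \{b^+ \in Y \mid \pred(b^+) \in X\}$, we get a disjoint decomposition $Y = \bigsqcup_{X\in\twin(\q)} B_X$ with $B_X \neq \emptyset$ iff $X$ is contributing.

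The crux is that a contributing fork $X$ satisfies $\sharp B_X = \sharp X$. Here I would exploit the simulation morphism $\varphi : \q \to \p$. Writing $X = \{a_i^- \mid i\}$ (the case $X = \{\init(\q)\}$ being immediate), all the $a_i^-$ share a common immediate predecessor, so all the $\varphi(a_i^-)$ are immediate successors of a single $\p$-event carrying the same arena label; by \emph{$-$-linearity} of the causal strategy $\p$ they coincide, say all equal to $c^-$. As $X$ is contributing, some $a_1^-$ has a positive successor $b_1^+ \in Y$; then $c^- = \varphi(a_1^-) \imc_\p \varphi(b_1^+)$ with $\varphi(b_1^+)$ positive, so by \emph{$+$-obsessional} every $a_i^-$ has an immediate successor $a_i'$ with $\varphi(a_i') = \varphi(b_1^+)$, necessarily positive by \emph{arena-preserving}. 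Lemma~\ref{lem:twin_clones} then gives $b_1 \clone a_i'$ for each $i$, so $a_i' \in Y$, whence $B_X = \{a_i' \mid a_i^- \in X\}$; since $\pred$ is injective on positive events, $\sharp B_X = \sharp X$.

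Putting this together yields $\sharp Y = \sum_{X\text{ contributing}}\sharp X = \sum_{X\text{ contributing}} 2^{m(X)}$, where $\sharp X = 2^{m(X)}$ by \emph{well-powered} and the $m(X)$ are pairwise distinct by \emph{injective}; comparing with $\sharp Y = \sum_{i\in I}2^i$ and using uniqueness of binary decomposition gives $I = \{m(X)\mid X\text{ contributing}\}$. This is exactly the claimed equivalence: $i \in I$ iff there is a contributing fork of cardinality $2^i$, i.e. some $X_i \in \twin(\q)$ with $\sharp X_i = 2^i$ and $a^- \in X_i$, $b^+ \in Y$, $a^- \imc_\q b^+$. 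I expect the main obstacle to be the crux step: one must show that \emph{all} elements of a contributing fork, not merely the witnessing one, produce a Player successor that lands in $Y$, which is where the combination of $\varphi$, $-$-linearity of $\p$, $+$-obsessionality and Lemma~\ref{lem:twin_clones} does the real work — in particular without assuming $\q$ to be $+$-covered.
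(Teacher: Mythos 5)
Your proof is correct and follows essentially the same route as the paper's: partition $Y$ by the predecessor forks, show each contributing fork contributes exactly its own cardinality, and conclude by uniqueness of the binary decomposition using \emph{well-powered} and \emph{injective}. Your treatment is in fact slightly more careful than the paper's at the crux step — the paper attributes "all events of $X$ are predecessors of events of $Y$" entirely to Lemma~\ref{lem:twin_clones}, whereas you correctly make explicit that the \emph{existence} of a positive successor for every element of a contributing fork needs $-$-linearity of $\p$ plus $+$-obsessionality of the expansion, with Lemma~\ref{lem:twin_clones} only supplying that these successors land in the same clone class.
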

\begin{proof}
For any $i\in\N$, we write $X_i$ the fork of $\q$ of cardinality $2^i$,
if it exists. 

Consider the set
$J:= \{j \in \N \tq X_j \text{ exists,} \; \exists a\in
X_j, \; \exists b\in Y, \; a\imc_{\q}b\}$.
Any $b\in Y$ is positive and so the unique (by determinism) successor of
some negative event $a$. Moreover $a$ appears in a fork $X$ and
by Lemma \ref{lem:twin_clones}, all events of $X$ are predecessors of
events of $Y$.
Hence, we have $Y = \bigcup_{j \in J} \suc(X_j)$,
        where the union is disjoint since $\q$ is forest-shaped.
Therefore,
        \[
        \sharp Y \;=\; \sum_{j \in J} \sharp\suc(X_j)
        \;=\; \sum_{j \in J} \sharp X_j
        \;=\; \sum_{j \in J} 2^j,
        \]
        where the second equality is obtained by determinism.
        By uniqueness of the binary decomposition, $J=I$, which proves
the lemma by definition of $J$.
\end{proof}

\subsection{Positional Injectivity}
\label{subsec:pos_inj}

We are finally in a position to prove the core of the injectivity
argument.

\begin{lemma}[Key lemma]\label{lem:key}
Consider $\p_1, \p_2 \in \Aug(A)$ causal strategies, $\q_1 \in
\exp(\p_1)$ and $\q_2 \in \exp(\p_2)$ characteristic expansions,
and $\varphi : \deseq{\q_1} \iso \deseq{\q_2}$.
Then, $\forall a^+ \in \ev{\q_1}, a \clone^\varphi \varphi(a)$.
\end{lemma}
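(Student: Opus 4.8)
The statement to prove is: for characteristic expansions $\q_1 \in \exp(\p_1)$, $\q_2 \in \exp(\p_2)$ of causal strategies, with an isomorphism $\varphi : \deseq{\q_1} \iso \deseq{\q_2}$, every positive event $a^+ \in \ev{\q_1}$ satisfies $a \clone^\varphi \varphi(a)$. I would prove this by a well-founded induction on $\leq_{\q_1}$, proving the stronger statement: for every $a^+ \in \ev{\q_1}$, there is a context $\Gamma\langle a\rangle$ preserving pointers, whose domain and codomain are exactly the negative ancestors of $a$ and $\varphi(a)$ respectively (i.e. $\dom(\Gamma\langle a\rangle) = [a]^-_{\q_1}$, $\cod(\Gamma\langle a\rangle) = [\varphi(a)]^-_{\q_2}$), such that $a \sim^\varphi_{\Gamma\langle a\rangle} \varphi(a)$. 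Unwinding the bisimulation game, the real work is: given that $a^+ \clone^\varphi \varphi(a)$ with a known $\Gamma$, and given $a^+ \imc_{\q_1} c^-$ a negative successor, find a matching negative successor $d^-$ of $\varphi(a)$ with $\partial_{\q_2}(d) = \partial_{\q_1}(c)$, then recurse; and conversely. Since $c$ is negative and $\q_1$ is $-$-linear with $\varphi$ an isomorphism of desequentializations, the negative successors of $a^+$ correspond bijectively via $\varphi$ to those of $\varphi(a)$ (using that $\imc_{\deseq{\q_1}}$ matches $\imc_{\deseq{\q_2}}$ and that $a^+ \imc_{\q_1} c^-$ implies $a^+ \imc_{\deseq{\q_1}} c^-$ by \emph{courteous}), so clause \emph{(2)} is handled essentially for free, extending $\Gamma$ with the (forced) pairs; the pointer-preservation of $\Gamma$ is maintained because $\just(c) = \pred(c) = a$ is mapped by $\varphi$ to $\just(d) = \varphi(a)$.

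**The hard part.** The genuine obstacle is clause \emph{(1)}: given $c^- \imc_{\q_1} b_1^+$ (a positive grandchild of $a$), I must produce $d^- \imc_{\q_2} b_2^+$ with $b_1 \sim^\varphi_{\Gamma'} b_2$ for the updated context, i.e. essentially I must match up positive moves across $\varphi$ — but $\varphi$ is only an isomorphism of configurations, and positive moves are \emph{not} preserved by it in general (that is the whole point of non-positionality). This is where characteristicness and the clone-partition machinery enter. The idea is: the negative event $c$ sits in a fork $X \in \twin(\q_1)$; by Lemma~\ref{lem:twin_clones}, all siblings in $X$ have clone-equivalent positive successors, and so $b_1$ lives in a clone class $Y_1$ of positive events of $\q_1$ whose cardinality decomposes as $\sum_{i\in I}2^i$. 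I would argue that $\varphi$ carries the set $\deseq Y_1$ of these positive events (they are all copies in the configuration, hence an iso-invariant "bunch of indistinguishable positive leaves-so-far" — one needs that clone classes of positives are, at the configuration level, exactly maximal sets of configuration-isomorphic positive events with isomorphic downsets, which follows from the bisimulation being built over $\varphi$) onto a corresponding clone class $Y_2$ in $\q_2$ of the same cardinality; then by Lemma~\ref{lem:partition} applied in $\q_2$, the set $Y_2$ is partitioned among forks $X_i'$ of cardinalities $2^i$, $i \in I$, matching the decomposition of $Y_1$. This lets me pick, for the particular $c^-$ (lying in a fork $X$ of some cardinality $2^i$), a fork $X_i'$ in $\q_2$ of the same cardinality and an event $d^-$ in it with $d \imc_{\q_2} b_2^+$ for some $b_2 \in Y_2$; one then checks $b_1 \sim^\varphi_{\Gamma'} b_2$ by the induction hypothesis (since $b_1, b_2$ are both positive and already established to be clones of each other via transitivity of $\clone$). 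The context update $\Gamma' = \Gamma \cup \{(c,d)\}$ discharges obligation \emph{(b)}/\emph{(c)} for the positive child because $\just(b_1) = c \in \dom(\Gamma')$ maps to $\just(b_2) = d$, and the pointer-preservation requirement on $\Gamma$ is \emph{not} required for the clone predicate (it quantifies existentially over $\Gamma$), which is exactly why we work with $\clone$ rather than with plain $\sim^\varphi$ here.

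**Wrapping up.** Once both bisimulation clauses are discharged, the induction goes through: at the root, $\init(\q_1) \sim^\varphi_{\{(\init(\q_1),\init(\q_2))\}} \init(\q_2)$ with the singleton context, which is pointer-preserving trivially, giving $\init(\q_1) \clone^\varphi \init(\q_2)$; and for each positive $a^+$ reached in the game the constructed $\Gamma\langle a\rangle$ witnesses $a \clone^\varphi \varphi(a)$. I expect the bookkeeping subtlety to be ensuring the contexts $\Gamma$ used along a branch are coherent (that $\Gamma \vdash (a,b)$ holds — no element of $\dom(\Gamma)$ is strictly above $a$ — which is automatic since $\dom(\Gamma)$ consists of ancestors) and that the existentially-chosen matching of positive events is consistent with what the bisimulation game demands down the line; but since we only need the \emph{existence} of a bisimulation and not canonicity, and since the clone relation is transitive (Appendix~\ref{app:clones}), local choices never need to be revisited. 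The only place where finiteness and totality of the underlying strategies are used is implicitly through Lemma~\ref{lem:partition} (characteristic expansions exist and forks have power-of-two cardinalities) and through $+$-coveredness ensuring there are no trailing negative moves obstructing the matching.
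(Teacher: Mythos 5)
Your proposal assembles the right ingredients (forks, Lemma \ref{lem:twin_clones}, Lemma \ref{lem:partition}, clone classes), but the induction is oriented the wrong way round, and this leaves the crucial step unjustified. You set the induction up top-down, from the root towards the leaves, taking $\init(\q_1) \sim^\varphi_{\{(\init(\q_1),\init(\q_2))\}} \init(\q_2)$ as a ``base case'' --- but that statement is essentially the conclusion of Proposition \ref{prop:main}, not something available for free. More importantly, at the heart of your ``hard part'' you need that $\varphi$ carries the clone class $Y_1 \ni b_1$ onto a clone class $Y_2$ of $\q_2$ of the same cardinality; you justify this by asserting that clone classes of positive events are determined at the configuration level (``maximal sets of configuration-isomorphic positive events with isomorphic downsets''). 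That is false in general: the clone relation compares causal follow-ups in the augmentation, which the configuration does not determine --- this is exactly the non-positionality phenomenon the whole proof is fighting. The paper obtains this transport honestly by running the induction bottom-up on \emph{co-depth}: for $a^+$ of co-depth $k$, the positive grandchildren $c_{i,j}$ have strictly smaller co-depth, so the induction hypothesis already gives $c_{i,j} \clone^\varphi \varphi(c_{i,j})$, and transitivity/symmetry of $\clone$ (Lemma \ref{lem:clone_eq}) then shows that $\varphi([c_{i,j}]_{\clone})$ is a clone class of $\q_2$; only then can Lemma \ref{lem:partition} be applied on both sides to locate the matching causal link $d_{i,j}$. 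Your argument can be repaired by reorienting the induction this way, but as written the key transport step has no valid justification.

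Two smaller points. First, your remark that ``the pointer-preservation requirement on $\Gamma$ is not required for the clone predicate'' contradicts Definition \ref{def:clone}: the existential there ranges over \emph{pointer-preserving} contexts only. Second, you dismiss as bookkeeping the compatibility of the contexts chosen for the different children of $a$; the paper needs genuine work here (the minimal-context analysis of Lemma \ref{lem:minimal_context_clone} and the lifting Lemma \ref{lem:clone_lift}) to show that the separately obtained contexts $\Gamma_{i,j}$ merge into a single pointer-preserving context witnessing $a \clone^\varphi \varphi(a)$.
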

\begin{proof}
The \textbf{co-depth} of $a \in \ev{\q_i}$ is the
maximal length $k$ of
$a = a_1 \imc_{\q_i} \dots \imc_{\q_i} a_k$
a causal chain in $\q_i$. We show by induction on $k$ the
two symmetric properties:
\[
\begin{array}{rl}
\text{\emph{(a)}}& 
	\text{for all $a^+ \in \ev{\q_1}$ of co-depth $\leq k$, we
have $a \clone^\varphi \varphi(a)$,}\\
\text{\emph{(b)}}&
	\text{for all $a^+ \in
\ev{\q_2}$ of co-depth $\leq k$, we have $a
\clone^{\varphi^{-1}} \varphi^{-1}(a)$.}
\end{array}
\] 

Take $a^+ \in \ev{\q_1}$ of co-depth $k$. If $a$ is maximal
in $\q_1$, so is $\varphi(a)$ in $\q_2$ and $a\clone \varphi(a)$. Else,
the successors of $a$ partition as
$G_1, \dots, G_n \subseteq \twin(\q_1)$, where
$G_i = \{b_{i,1}^-, \dots, b_{i, 2^{p_i}}^-\}$;
likewise the successors of $\varphi(a)$ in $\q_2$ are the
forks $\varphi(G_i)$. 
For all $1\leq i \leq n$ and $1 \leq j \leq 2^{p_i}$, we claim:
\begin{eqnarray}
&&\text{for all $b_{i,j} \imc_{\q_1} c_{i,j}$, there is
$\varphi(b_{i,j}) \imc_{\q_2} d_{i,j}$ satisfying $c_{i,j} \clone^\varphi
d_{i,j}$.} \label{eq:2}
\end{eqnarray}

Write $X = [c_{i,j}]_{\clone}$ the clone class of $c_{i,j}$ in
$\q_1$. It is easy to prove that the clone relation preserves co-depth,
so it follows
from the induction hypothesis and 
Lemma \ref{lem:clone_eq} that
$\varphi(X)$ is a clone class in $\q_2$. By Lemma \ref{lem:partition}, 
$\sharp X$ has $2^{p_i}$ in its binary decomposition -- and as $\varphi$
is a bijection, so does $\sharp (\varphi(X))$.
So by Lemma \ref{lem:partition}, there is $\varphi(b_{i,j}) \in
\varphi(G_i)$ and $d_{i,j} \in \varphi(X)$ such that $\varphi(b_{i,j}) \imc_{\q_2}
d_{i,j}$. Since $\varphi(c_{i,j}), d_{i,j} \in
\varphi(X)$ they are clones, so using
$c_{i,j} \clone^{\varphi} \varphi(c_{i,j})$ by induction hypothesis, 
$c_{i,j} \clone^\varphi d_{i,j}$.
Likewise, the mirror property of \eqref{eq:2} also holds. 

Deducing $a \clone^{\varphi} \varphi(a)$ requires some care: cloning is
defined via a context, and the $c_{i,j} \clone^\varphi \varphi(c_{i,j})$
might not share the same.
However, the contexts can be put into canonical forms that are shown to
agree -- Lemma \ref{lem:clone_lift} allows us to prove $a \clone^{\varphi}
\varphi(a)$ from \eqref{eq:2} and its mirror property.
Finally, \emph{(b)} is proved symmetrically.
\end{proof}

Now, consider $\p_1, \p_2, \q_1, \q_2, \varphi$ as in 
Proposition \ref{prop:main}. If the $\q_i$'s are empty or singleton
trees, there is nothing to prove. Otherwise $\q_i$ starts with $a_i^-
\imc_{\q_i} b_i^+$ with $a_i^-$ initial. But then $[b_i^+]_{\clone}$ is
the only singleton clone class in $\q_i$. As $\varphi$ preserves clone
classes, $\varphi(b_1^+) = b_2^+$. By Lemma \ref{lem:key}, $b_1
\clone^\varphi b_2$. Thus $b_1 \sim^\varphi b_2$, so $a_1
\sim^\varphi a_2$ and $\q_1 \sim^\varphi \q_2$. This
concludes the proof of Proposition \ref{prop:main}.
Putting everything together, we obtain:

\begin{theorem}\label{th:main_caus_inj}
For $\p_1, \p_2 \in \Aug(A)$ causal strategies s.t. 
$\positions{\p_1} = \positions{\p_2}$, then $\p_1 \iso \p_2$.
\end{theorem}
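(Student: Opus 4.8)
The plan is to reduce the statement to the machinery already assembled in Sections \ref{subsec:char_exp}--\ref{subsec:pos_inj}, and in particular to Proposition \ref{prop:bisim_to_iso} and Proposition \ref{prop:main}. The only genuine work left is to establish the existence of a \emph{characteristic expansion} for every causal strategy; once that is available, everything else is plumbing. So the first step is: for an arbitrary causal strategy $\p \in \Aug(A)$, construct some $\q \in \exp(\p)$ that is \emph{injective}, \emph{well-powered} and \emph{$-$-obsessional}. I would build $\q$ by induction on the (finite) causal depth of $\p$, unfolding $\p$ branch by branch: at each negative move of $\p$ that should be replicated in $\q$ I make a fork whose cardinality is a power of $2$, and I choose these powers all distinct, marching through $2^0, 2^1, 2^2, \dots$ as I enumerate the forks in some fixed order (the fork $\{\init(\q)\}$ is forced to be a singleton, i.e.\ $2^0$, which is fine). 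The $+$-obsessional condition from Definition \ref{def:exp} forces all Player moves enabled in $\p$ to be copied under each such fork, and $-$-obsessionality is obtained by additionally copying, under each Player move, every negative arena-successor. Since $\p$ is finite (being $\caus{\sigma}$ for a finite innocent $\sigma$, or more directly a finite causal strategy), this process terminates and yields a genuine $\q \in \Aug(A)$; checking the augmentation axioms (\emph{rule-abiding}, \emph{courteous}, \emph{deterministic}) and $\q \in \exp(\p)$ is routine from the construction.

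With characteristic expansions in hand, the argument runs as follows. Given $\p_1, \p_2$ with $\positions{\p_1} = \positions{\p_2}$, pick a characteristic expansion $\q_1 \in \exp(\p_1)$. Then $\coll{\q_1} \in \positions{\p_1} = \positions{\p_2}$, so there is $\q_2 \in \exp(\p_2)$ with $\deseq{\q_2} \iso \deseq{\q_1}$; fix such an isomorphism $\varphi : \deseq{\q_1} \iso \deseq{\q_2}$. As observed just after the definition of characteristic expansions, being characteristic is a property of the desequentialization alone and is stable under isomorphism, so $\q_2$ is characteristic too. Now Proposition \ref{prop:main} gives $\q_1 \sim^\varphi \q_2$, and Proposition \ref{prop:bisim_to_iso} then yields $\p_1 \iso \p_2$, which is exactly the claim.

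The only delicate point is making sure that a characteristic expansion genuinely exists for \emph{every} causal strategy — in particular that the three constraints (all fork cardinalities powers of $2$; distinct forks have distinct cardinalities; $-$-obsessional) can be satisfied \emph{simultaneously}. Injectivity is the one requiring attention: I must enumerate the forks of the expansion-to-be and assign them pairwise-distinct exponents, but the set of forks is itself determined by the unfolding, so I build the tree and the exponent assignment together in a single induction (on the depth of $\p$), allotting the next unused power of $2$ to each new fork as it is created. Finiteness of $\p$ guarantees only finitely many forks arise, so the assignment is well-defined. Everything downstream — Propositions \ref{prop:main} and \ref{prop:bisim_to_iso} — is quoted as a black box, so the total additional effort here is the construction lemma plus the short assembly above.
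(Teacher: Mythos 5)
Your proposal is correct and follows essentially the same route as the paper: pick a characteristic expansion $\q_1 \in \exp(\p_1)$, use $\positions{\p_1} = \positions{\p_2}$ to obtain a matching characteristic $\q_2 \in \exp(\p_2)$ with an isomorphism of desequentializations, then apply Proposition \ref{prop:main} followed by Proposition \ref{prop:bisim_to_iso}. The only difference is that you explicitly sketch the existence of a characteristic expansion (by unfolding $\p$ and assigning pairwise-distinct powers of $2$ to the forks), a point the paper takes for granted; your construction is sound since augmentations are finite by definition.
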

\begin{proof}
Consider $\q_1 \in \exp(\p_1)$ a characteristic expansion. By
hypothesis, there must be $\q_2 \in \exp(\p_2)$ and $\varphi :
\deseq{\q_1} \iso \deseq{\q_2}$; necessarily $\q_2$ is also a
characteristic expansion of $\p_2$. By Proposition \ref{prop:main}, we
have $\q_1 \sim^\varphi \q_2$. By Proposition \ref{prop:bisim_to_iso},
we have $\p_1 \iso \p_2$.
\end{proof}
%

Finally, Theorem \ref{th:main} follows from Theorem
\ref{th:main_caus_inj}, Proposition \ref{prop:pos_sig_pos_caus} and
Lemma \ref{lem:strat_eq_caus_iso}.

Theorem \ref{th:main} only concerns \emph{total} finite innocent
strategies. In contrast, Theorem \ref{th:main_caus_inj} requires no
totality assumption: totality comes in not in the injectivity
argument, but in Proposition \ref{prop:pos_sig_pos_caus} linking
standard and causal strategies.
Without totality, expansions of
$\caus{\sigma}$ might not have as many Opponent as Player moves, and so
may not be linearizable via alternating plays. Intuitively,
in alternating plays Opponent may only explore
converging parts of the strategy, whereas in the causal setting Opponent
is free to explore simultaneously many branches, including divergences.
Positional injectivity for \emph{partial} finite innocent
strategies may be studied causally by restricting to \emph{$+$-covered}
expansions, \emph{i.e.} with only Player maximal events. But
then we must also abandon \emph{$-$-obsessionality} as Opponent moves
leading to divergence will not be played, breaking our proof
(Lemma \ref{lem:twin_clones} fails) in a way for which we see no fix.

\section{Beyond Total Finite Strategies}
\label{sec:beyond_total_finite}

Finally, we show some subtleties and partial
results on generalizations of Theorem \ref{th:main}.

First, positional injectivity fails in general.
Consider the infinitary terms $f : o \to o \to o \vdash
T_1, T_2, L, R : o$ recursively defined as
$T_1 = f\,T_2\,R, T_2 = f\,L\,T_1, L = f\,L\,\bot$ and $R = f\,\bot\,R$
in an infinitary simply-typed $\lambda$-calculus with divergence $\bot$.
The corresponding strategies differ:
\begin{figure}[t]
\begin{minipage}{.3\linewidth}
\[
\scalebox{.7}{$
\xymatrix@C=20pt@R=5pt{
&\red{\bq^-}  
	\ar@[red]@{.}[d]
	\ar@[red]@{-|>}@/_.1pc/[d]
	\ar@[red]@{.}@/_1pc/[ddd]
	\ar@{.}@/_3pc/[dddddl]
	\ar@{.}@/^2pc/[dddr]\\
&\red{\bq^+}
	\ar@[red]@{.}[d]
	\ar@{.}[dr]
	\ar@[red]@{-|>}@/_.1pc/[d]
	\ar@{-|>}@/_.1pc/[dr]\\
&\red{\bq^-_1}
	\ar@[red]@{-|>}@/_.1pc/[d]
&\bq^-_2\ar@{-|>}@/_.1pc/[d]\\
&\red{\bq^+}
	\ar@{.}[dl]
	\ar@[red]@{.}[d]
	\ar@{-|>}@/_.1pc/[dl]
	\ar@[red]@{-|>}@/_.1pc/[d]
&\bq^+	\ar@{.}[d]
	\ar@{-|>}@/_.1pc/[d]\\
\bq^-_1	\ar@{-|>}@/_.1pc/[d]
&\red{\bq^-_2}
	\ar@[red]@{-|>}@/^3pc/[uuu]
&\bq^-_2\ar@{-|>}@/_1.5pc/[u]\\
\bq^+	\ar@{.}[d]
	\ar@{-|>}@/_.1pc/[d]\\
\bq^-_1	\ar@{-|>}@/^1.5pc/[u]
}$}
\]
\caption{$\intr{\lambda f^{o\to o \to o}.\,T_1}$}
\label{fig:t1}
\end{minipage}
\hfill
\begin{minipage}{.3\linewidth}
\[
\scalebox{.7}{$
\xymatrix@C=20pt@R=5pt{
&\red{\bq^-}
	\ar@[red]@{-|>}@/_.1pc/[d]
	\ar@[red]@{.}[d]
	\ar@{.}@/_2pc/[dddl]
	\ar@[red]@{.}@/^1pc/[ddd]
	\ar@{.}@/^3pc/[dddddr]\\
&\red{\bq^+}
	\ar@{.}[dl]
	\ar@[red]@{.}[d]
	\ar@{-|>}@/_.1pc/[dl]
	\ar@[red]@{-|>}@/_.1pc/[d]\\
\bq^-_1	\ar@{-|>}[d]
&\red{\bq^-_2}
	\ar@[red]@{-|>}[d]\\
\bq^+	\ar@{.}[d]
	\ar@{-|>}@/_.1pc/[d]
&\red{\bq^+}
	\ar@[red]@{.}[d]
	\ar@{.}[dr]
	\ar@[red]@{-|>}@/_.1pc/[d]
	\ar@{-|>}@/_.1pc/[dr]\\
\bq^-_1	\ar@{-|>}@/^1.5pc/[u]
&\red{\bq^-_1}
	\ar@[red]@{-|>}@/_3pc/[uuu]
&\bq^-_2\ar@{-|>}[d]\\
&&\bq^+	\ar@{.}[d]
	\ar@{-|>}@/_.1pc/[d]\\
&&\bq^-_2
	\ar@{-|>}@/_1.5pc/[u]
}$}
\]
\caption{$\intr{\lambda f^{o \to o \to o}.\,T_2}$}
\label{fig:t2}
\end{minipage}
\hfill
\begin{minipage}{.3\linewidth}
\[
\xymatrix@R=12pt@C=-5pt{
~\\
&&&\bq^+
        \ar@{.}[dlll]
        \ar@{.}[dl]
        \ar@{.}[dr]
        \ar@{.}[drrr]\\
\bq_1^-& \dots& \bq_1^-&&
\bq_2^-& \dots& \bq_2^-\\~\\~
}
\]
\caption{Bricks}
\label{fig:bricks}
%
\end{minipage}
\end{figure}
their causal representations appear in Figures
\ref{fig:t1} and \ref{fig:t2}, infinite trees represented via loops.

We consider positions reached by plays -- or equivalently, by 
$+$-covered
expansions 
of Figures \ref{fig:t1} and
\ref{fig:t2}. In fact, both strategies admit
all \emph{balanced} positions on $\intr{(o \to o \to o) \to o}$, \emph{i.e.}
with as many Opponent as Player moves. Ignoring the
initial $\bq^-$, a position is a multiset of \textbf{bricks} as in
Figure \ref{fig:bricks},
with $i \in \mathbb{N}$ occurrences of $\bq_1^-$ and $j \in \mathbb{N}$
of $\bq_2^-$. A brick with $i = j = 0$ is a \textbf{leaf}.
The position is balanced if it has as many
Opponent as Player moves.  

Now, any position can be realized in $\intr{\lambda f^{o \to o \to
o}.\,T_i}$ by first placing bricks with occurrences of both $\bq^-_1$
and $\bq^-_2$ greedily alongside the
\emph{spine}, shown in red in Figures \ref{fig:t1} and \ref{fig:t2}. At
each step, we continue from only one of the copies opened, leaving 
others dangling.  If this gets stuck, apart from leaves we are left with
only $\bq^-_1$'s, or, only $\bq^-_2$'s, but there is always a
matching non-spine infinite branch available.  Finally, leaves can
always be placed as their number matches that of trailing negative moves
by the balanced hypothesis.

We have $\positions{\intr{\lambda f^{o \to o \to o}.\,T_1}} =
\positions{\intr{\lambda f^{o\to o \to o}.\,T_2}}$ as both strategies can
realize \emph{all} balanced positions on the arena $\intr{o \to o \to
o}$, and \emph{exactly} those: positional injectivity fails.

Positionality for \emph{finite} innocent strategies remains open.
We could only prove:

\begin{theorem}\label{th:partial}
Let $\sigma_1, \sigma_2 : A$ be finite innocent strategies with
$\positions{\sigma_1} = \positions{\sigma_2}$.

Then, $\sigma_1$ and $\sigma_2$ have the same P-views of maximal length.
\end{theorem}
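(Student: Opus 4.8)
The plan is to show that a \emph{maximal-length} P-view of a finite innocent strategy is witnessed, among $\positions{\sigma_1}$, by a position that cannot arise from any strictly shorter P-view, so that its presence in $\positions{\sigma_2}$ forces $\sigma_2$ to contain the same P-view. Concretely, first I would pass to the causal side: by Lemma \ref{lem:strat_eq_caus_iso} it suffices to argue about $\caus{\sigma_1}$ and $\caus{\sigma_2}$, whose maximal branches (maximal $\leq_{\caus{\sigma_i}}$-chains, i.e.\ the P-views $\pview{sa}$ with $sa$ not extendable in $\sigma_i$) are exactly the maximal-length P-views. Fix a P-view $p = \pview{s} \in \pviews{\sigma_1}$ of maximal length $n$; it is a linear augmentation $\caus p \in \Aug(A)$, namely $[p]_{\caus{\sigma_1}}$ viewed as a totally ordered tree. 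Note that since $p$ has maximal length, $\caus p$ is $+$-covered: its $\leq_{\caus{\sigma_1}}$-top event is positive (a non-extendable P-view must end with a P-move, else innocence would permit an O-extension whose P-view is longer).

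Next I would build a \emph{characteristic} $+$-covered expansion $\q$ of $\caus p$ in the sense of Section \ref{subsec:char_exp}: replay each negative event of $\caus p$ with a distinct power-of-two multiplicity, all larger than $n$, and take $-$-obsessional copies of the ensuing positive moves. This $\q$ is an expansion of $\caus{\sigma_1}$ as well (it visits only the single branch $\caus p$), so $\coll{\q} \in \positions{\sigma_1} = \positions{\sigma_2}$, hence there is $\q' \in \exp(\caus{\sigma_2})$ with $\varphi : \deseq{\q'} \iso \deseq{\q}$, and as noted after the definition of characteristic expansions, $\q'$ is characteristic too. The key claim is then: \emph{$\q'$ visits a single maximal branch of $\caus{\sigma_2}$, and that branch is isomorphic to $\caus p$.} For the first part, the multiset of display labels of $\deseq{\q}$ records, for the unique fork of each cardinality $2^k$, a single arena move and a single successor arena move; since all forks have distinct cardinalities $> n$ and the total number of events is controlled, any expansion realizing $\deseq{\q}$ must also devote the whole of each such fork to one O-move and the whole successor block to one P-move — so no genuine branching of $\caus{\sigma_2}$ can occur, i.e.\ $\q'$ is itself (isomorphic to, a $+$-covered expansion of) a single branch $\caus q$ of $\caus{\sigma_2}$. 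For the second part, the argument of Lemma \ref{lem:key} and Proposition \ref{prop:main} applies verbatim (both $\q,\q'$ are characteristic, we have $\varphi$), giving $\q \sim^\varphi \q'$, whence by Proposition \ref{prop:sim_char_exp} and the reasoning of Proposition \ref{prop:bisim_to_iso}, $\caus p \iso \caus q$. By Lemma \ref{lem:strat_eq_caus_iso}'s argument (arena-preserving and configuration-preserving isos transport pointing strings), $p$ is a P-view of $\sigma_2$, of the same length $n$. Symmetry then gives that $\sigma_1$ and $\sigma_2$ have the same P-views of maximal length.

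I expect the main obstacle to be the first part of the key claim: making rigorous that the power-of-two cardinalities genuinely \emph{force} $\q'$ to be single-branch — i.e.\ that $\caus{\sigma_2}$ cannot exploit the duplications to interleave two distinct branches and still reproduce $\deseq{\q}$. The argument is that forks of $\q$ partition cleanly by cardinality (injectivity plus well-poweredness), each fork has a unique causal successor set in $\q$, and by $-$-obsessionality the displayed arena-move content along each branch is rigidly determined; but one must check that when $\caus{\sigma_2}$ has a branching O-move, the corresponding fork in $\q'$ would have to split its successors among two non-isomorphic continuations, contradicting that $\deseq{\q'} \iso \deseq{\q}$ where every fork's successor block is homogeneous and of the "right" total cardinality — this is essentially Lemma \ref{lem:partition} used in the contrapositive, together with the fact that maximal length of $p$ forbids $\q'$ from having any unmatched trailing O-move. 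A secondary subtlety is ensuring the comparison with \emph{shorter} P-views of $\sigma_2$: but since $\q$ (hence $\q'$) is $+$-covered and has a positive top, and co-depth is preserved, any branch of $\caus{\sigma_2}$ meeting $\q'$ must have length exactly $n$, so no shorter P-view of $\sigma_2$ can account for $\coll{\q}$.
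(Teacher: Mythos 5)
Your overall strategy---expand a single maximal-length P-view with carefully chosen multiplicities and argue that the resulting position pins down that P-view in any strategy realizing it---is the same as the paper's, but both steps you lean on have genuine gaps. First, your $\q$ cannot be a \emph{characteristic expansion} in the paper's sense: characteristic expansions are by definition $-$-obsessional, whereas an expansion of the single branch $\caus{p}$ carries a morphism into $\caus{p}$ and therefore must omit every Opponent move of the arena lying off that branch. Consequently Proposition \ref{prop:main}, Lemma \ref{lem:key} and Lemma \ref{lem:partition} do not ``apply verbatim'': they all rest on $-$-obsessionality via Lemma \ref{lem:twin_clones}, and the paper explicitly identifies the failure of $-$-obsessionality (hence of Lemma \ref{lem:twin_clones}) at the end of Section \ref{sec:pos_inj} as precisely the obstruction to running the clone argument beyond total strategies. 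The same objection undermines your appeal to Lemma \ref{lem:partition} for the ``single-branch'' half of your key claim, which you yourself flag as the main obstacle and do not close.

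Second, the paper's actual proof avoids bisimulations and clones entirely and makes essential quantitative use of the \emph{maximality} of the length $2n+2$ of $s$, which your argument uses only in passing. Its ``wide expansion'' gives the Opponent fork at causal co-depth $2k$ cardinality $k$, i.e.\ arities $n, n-1, \dots, 1, 0$ \emph{decreasing along the branch}; the crux is then purely combinatorial (Lemma \ref{lem:trees}): a tree of depth at most $n+1$ containing exactly $n!/k!$ nodes of arity $k$ for each $k \geq 2$ is forced to be the factorial tree $T_n$, and the depth bound is exactly where maximality of $s$ enters. Pointers and arena labels are then transported by an explicit arity-based argument, with no bisimulation needed. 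With your distinct powers of two all larger than $n$ there is no comparable interaction with the depth bound, and you give no argument that $\deseq{\q}$ cannot be reassembled by $\caus{\sigma_2}$ along two interleaved branches revisiting the same arena moves---which is the very phenomenon (cf.\ $K_x$ versus $K_y$) that makes the problem hard. To complete a proof along your lines you would need either a replacement for Lemma \ref{lem:twin_clones} that does not assume $-$-obsessionality, or an independent counting argument in the style of Lemma \ref{lem:trees}.
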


For the proof (see Appendix \ref{app:partial}), we assume $\sigma_1$ has a
P-view $s$ of maximal length $n$. We perform an expansion of
$s$ where each Opponent branching at co-depth $2d+1$ has arity $d+1$. By a combinatorial
argument on trees, the only way to reassemble its nodes exhaustively 
in a tree with depth bounded by $d$ is to rebuild exactly the same tree.
Hence the tree is also in $\exp(\caus{\sigma_2})$, and $s \in \sigma_2$.
This steers us into conjecturing that positional injectivity holds for
partial finite innocent strategies, but our proof attempts have
remained inconclusive.

\section{Conclusion}
\label{sec:conclusion}

Though innocent strategies in
the Hyland-Ong sense are not positional, total finite innocent
strategies satisfy \emph{positional injectivity} -- however, the property
fails in general.

Beyond its foundational value, we believe this result may be 
helpful in the game semantics toolbox. Game semantics can be
fiddly; in particular, proofs that two terms yield the same
strategy are challenging to write in a concise yet rigorous manner.
This owes a lot to the complexity of \emph{composition}: proving that a
play $s$ is in $\intr{M\,N}$
involves constructing an ``interaction witness'' obtained from plays
in $\intr{M}$ and $\intr{N}$ plus an adequate ``zipping'' of the two.
Manipulations of plays with pointers are tricky and error-prone,
and the link between plays and terms is obfuscated by the multi-layered
interpretation.

In contrast, Theorem \ref{th:main} lets us prove innocent strategies
equal by comparing their positions. Now, constructing
a position of $\intr{M N}$ simply involves exhibiting matching
positions for $\intr{M}$ and $\intr{N}$. Side-stepping the
interpretation, this can be presented as typing terms with positions or
configurations -- combining Section \ref{subsec:link_rel} and
the link between relational semantics and
non-idempotent intersection type systems \cite{Carvalho18}.  For
instance, in this way, finite definability, a basic result seldom
presented in full formal details, boils down to typing the defined term
with the same positions as the original strategy.



\bibliography{biblio}

\newpage
\appendix

\section{Augmentations and Strategies: Proofs from Section \ref{sec:caus}}\label{app:caus}
In the sequel, $A$ is a fixed arena.  For any augmentation $\q\in
\Aug(A)$ and event $a\in\ev{\q}$, we define $\suc(a):= \{b \tq a \imc_\q
b\}$ the set of immediate successors of $a$ in $\leq_\q$. We also define
$\up a$ the set of descendants of $a$, i.e. $\up a := \{a'\tq a\leq_\q
a'\}$. 

In this first section, we provide more detailed proofs for augmentations as defined in Section \ref{sec:caus}, studying their expansions and their links with strategies. 
We start by proving Proposition \ref{prop:pview_aug} of Section
\ref{subsec:strat-to-caus}, which allows us to see traditional innocent
strategies as causal strategies, with the following two lemmas. For any
non-empty pointing string $s$, we use the notation $s_\omega$ for the
last element of $s$.

\begin{lemma}\label{lem:pview_aug}
	For $\sigma : A$ finite innocent on $A$ well-opened, ${\caus{\sigma}}$ is a causal strategy, \emph{i.e.} a receptive $-$-linear augmentation.
\end{lemma}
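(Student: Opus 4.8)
The plan is to unfold the definition of $\caus{\sigma}$ and verify, in order: that $\deseq{\caus{\sigma}} \in \conf{A}$ and $\tuple{\ev{\caus{\sigma}}, \leq_{\caus{\sigma}}}$ is a finite tree, so that $\caus{\sigma}$ is a candidate augmentation; that $\caus{\sigma}$ is \emph{rule-abiding}, \emph{courteous} and \emph{deterministic}, so that $\caus{\sigma} \in \Aug(A)$; and finally that it is \emph{receptive} and \emph{$-$-linear}. Two remarks organise the argument. First, every element of $\ev{\caus{\sigma}}$ is a well-opened P-view on $A$: it starts with the unique initial move $a_0$ of $A$; the events ending in a $P$-move are precisely the nonempty members of $\pviews{\sigma}$; and the events ending in an $O$-move are the $\pview{sa}$'s, among which the singleton $a_0 = \pview{a_0}$ (using $\varepsilon \in \sigma$, so $a_0 = \varepsilon a_0 \in \Plays(A)$). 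Second, I will freely use the standard facts about innocent strategies that $\pviews{\sigma}$ is closed under even-length prefixes, and that $\pview{t}$ always ends with the last move $t_\omega$ of $t$.

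The crucial first step is the structural lemma that $\ev{\caus{\sigma}}$ is closed under nonempty prefixes; since prefixes of P-views are P-views (pointers only reach backwards), it suffices to treat P-view prefixes. An even-length such prefix of an event is an even-length prefix of a member of $\pviews{\sigma}$, hence in $\pviews{\sigma}$; for an odd-length prefix $v\cdot m$, the even part $v = \pview{t}$ lies in $\pviews{\sigma}$, and $m$, being an $O$-move inside a P-view, points to $v_\omega = t_\omega$, so a copy of $m$ is playable after $t$ with $\pview{t\,m} = v\cdot m$, whence $v\cdot m \in \ev{\caus{\sigma}}$. With prefix-closure, $\leq_{\caus{\sigma}}$ (the prefix order $\prefix$) is a partial order whose principal down-sets are finite chains, with unique minimum $a_0$ (every event starts with $a_0$, and an event whose last move is $\leq_A$-minimal must be $a_0$ itself), and $\ev{\caus{\sigma}}$ is finite since $\sigma$ is; so $\tuple{\ev{\caus{\sigma}}, \leq_{\caus{\sigma}}}$ is a finite tree with $\pred(s\cdot m) = s$. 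Similarly $\leq_{\deseq{\caus{\sigma}}}$ refines $\prefix$ by definition (so \emph{rule-abiding} is immediate), has chains as down-sets (\emph{forestial}) and minimum $a_0$, and $\partial_{\caus{\sigma}}$ sends an event to its last move, which is $\leq_A$-minimal iff the event is $a_0$ (\emph{minimality-respecting}); while an immediate $\leq_{\deseq{\caus{\sigma}}}$-step, by prefix-closure, must be a one-step justification link $s \imc_{\deseq{\caus{\sigma}}} s\cdot m$ with $\just(m) = s_\omega$, so rigidity of plays gives $\partial_{\caus{\sigma}}(s) \imc_A \partial_{\caus{\sigma}}(s\cdot m)$ (\emph{causality-preserving}). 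Hence $\deseq{\caus{\sigma}} \in \conf{A}$.

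The remaining conditions other than determinism are short. \emph{Courteous}: polarity alternates along $\imc_{\caus{\sigma}}$ (plays alternate), so the hypothesis just says the step $s \imc_{\caus{\sigma}} s\cdot m$ appends a negative move $m$; such an $O$-move appended inside a P-view points to $s_\omega$, which is exactly $s \imc_{\deseq{\caus{\sigma}}} s\cdot m$. \emph{$-$-linear}: two $O$-extensions $s\cdot m_1$, $s\cdot m_2$ with $\partial_{\caus{\sigma}}(s\cdot m_1) = \partial_{\caus{\sigma}}(s\cdot m_2)$ agree on the arena move and, both being $O$-moves appended inside a P-view, on the (forced) pointer to $s_\omega$, so they are equal. \emph{Receptive}: if $\partial_{\caus{\sigma}}(s) \imc_A b^-$ then $\partial_{\caus{\sigma}}(s)$ is positive, so $s = \pview{t}$ for $t \in \sigma$ with $t_\omega = s_\omega$; appending after $t$ an $O$-move labelled $b$ pointing to $t_\omega$ gives $t\,b' \in \Plays(A)$ with $\pview{t\,b'} = s\cdot b' \in \ev{\caus{\sigma}}$, and $s \imc_{\caus{\sigma}} s\cdot b'$ with $\partial_{\caus{\sigma}}(s\cdot b') = b$.

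The main obstacle is \emph{deterministic}, the one condition that genuinely leans on innocence and determinism of $\sigma$. Given $s \imc_{\caus{\sigma}} s\cdot m_1^+$ and $s \imc_{\caus{\sigma}} s\cdot m_2^+$, these are nonempty P-views ending in a $P$-move, so $s\cdot m_i = \pview{u_i}$ for some $u_i \in \sigma$; writing $u_i = r_i\cdot c_i^-\cdot m_i^+$ one gets $s = \pview{r_i\,c_i}$. Since $\pview{r_1\,c_1} = s = \pview{r_2\,c_2}$, the appended $O$-moves $c_1, c_2$ carry the same label and the same pointer into $s$; so $c_1$ (with $c_2$'s pointer) is playable after $r_2$, with $\pview{r_2\,c_1} = s = \pview{r_1\,c_1}$ and $r_2\,c_1\,m_2 = u_2 \in \sigma$. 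Innocence applied to $r_1\,c_1\,m_1 \in \sigma$ and $r_2\,c_1 \in \sigma$ yields $r_2\,c_1\,m_1 \in \sigma$, whence \emph{determinism} of $\sigma$ on $r_2\,c_1\,m_1, r_2\,c_1\,m_2 \in \sigma$ forces $m_1 = m_2$ (with pointers), i.e. $s\cdot m_1 = s\cdot m_2$. The delicate point in this last step is the bookkeeping of pointers when transporting appended moves between the ambient plays $r_i\,c_i\,m_i$ --- exactly where the standard P-view lemmas are needed. Assembling all the checks, $\caus{\sigma}$ is a receptive $-$-linear augmentation, that is, a causal strategy.
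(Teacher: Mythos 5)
Your proof is correct and follows essentially the same route as the paper's: a direct verification of each condition defining a causal strategy, with \emph{rule-abiding} and \emph{courteous} read off from the definition of P-views, and \emph{deterministic} reduced to determinism of $\sigma$. The only differences are in presentation — you isolate prefix-closure of $\ev{\caus{\sigma}}$ as an explicit preliminary step and spell out, via innocence, the transport of the two extensions to a common play before applying determinism, both of which the paper's terser proof leaves implicit.
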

\begin{proof}	
	First we must check that $\deseq{\caus{\sigma}}=\tuple{\ev{{\caus{\sigma}}}, \leq_{\deseq{{\caus{\sigma}}}}, \partial_{\caus{\sigma}}}$ is a configuration. It is clear from the definition and the fact that $A$ is well-opened that $\tuple{\ev{{\caus{\sigma}}}, \leq_{\deseq{{\caus{\sigma}}}}}$ is a finite tree. The well-openedness of $A$ also ensures \emph{minimality-respecting.}
	
	\emph{Causality-preserving.} Consider $s, t
\in\ev{\caus{\sigma}}$ such that $s\imc_{\deseq{\caus{\sigma}}} t$. Then $t_\omega$ points to $s_\omega$ by definition, so by rigidity $\partial_{\caus{\sigma}}(s)\imc_A\partial_{\caus{\sigma}}(t)$.
	
	Hence, $\deseq{\caus{\sigma}}$ is a configuration. We now check that $\caus{\sigma}$ is an augmentation.
	
	\emph{Rule-abiding.} Let $t \in \ev{\caus{\sigma}}$ and $s\leq_{\deseq{\caus{\sigma}}}t$. By definition, there exists a chain of pointers from $t_\omega$ to $s_\omega$. This implies that $s\prefix t$, hence $s\leq_{\caus{\sigma}} t$.
	
	\emph{Courteous.} Let $s \imc_{\caus{\sigma}} t$ such that $\lambda_A(s)=+$ or $\lambda_A(t)=-$. By definition of $\leq_{{\caus{\sigma}}}$, $s=s'b^+$ and $t=s'ba^-$ since $A$ is alternating. By definition of a P-view, $a$ points to $b$ and $s\imc_{\deseq{\caus{\sigma}}} t$.
	
	\emph{Deterministic.} Let $s^-\imc_{\caus{\sigma}} t_1^+$ and $s^-\imc_{\caus{\sigma}} t_2^+$. By definition of $\imc_{\caus{\sigma}}$, we have $s=s'a^-$, $t_1 = s'ab_1^+$, $t_2=s'ab_2^+$, and $b_1=b_2$ (with the same pointer) by determinism of $\sigma$.
	
	So $\caus{\sigma}$ is an augmentation. Finally, we check the two additional conditions.
	
	\emph{Receptive.} Let $s\in\ev{\caus{\sigma}}$ such that $\partial_{\caus{\sigma}}(s)\imc_A a^-$. Since $A$ is alternating, we know $s=s'b^+$, and $s\in\sigma$ by definition of $\caus{\sigma}$. Furthermore, $\partial_{\caus{\sigma}}(s)\imc_A a^-$ implies that $sa\in\Plays(A)$, where $a$ points to $s_\omega$. So $sa\in\ev{\caus{\sigma}}$, with $s\imc_{\caus{\sigma}} sa$ and $\partial_{\caus{\sigma}}(sa)=a$.
	
	\emph{$-$-linear.} Let $s\imc_{{\caus{\sigma}}} t_1^-$ and $s\imc_{{\caus{\sigma}}} t_2^-$ such that $\partial_{\caus{\sigma}}(t_1)=\partial_{\caus{\sigma}}(t_2)$. By definition of $\leq_{\caus{\sigma}}$, $t_1=sa_1^-$ and $t_2=sa_2^-$, where both $a_i$ point to $s_\omega$ by courtesy. Moreover, by definition of $\partial_{\caus{\sigma}}$, $a_1 = a_2$. Hence $t_1=t_2$.
\end{proof}

\begin{lemma}\label{lem:sig_tot_caus_tot}
	For $\sigma : A$ finite innocent on $A$ well-opened, ${\caus{\sigma}}$ is total iff $\sigma$ is total.
\end{lemma}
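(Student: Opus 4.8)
The plan is to reduce the statement to a characterisation of \emph{$+$-coveredness}. Recall that a total augmentation is one which is both receptive and \emph{$+$-covered}, and that Lemma~\ref{lem:pview_aug} already shows $\caus{\sigma}$ is receptive; so it suffices to prove that $\sigma$ is total iff $\caus{\sigma}$ is \emph{$+$-covered}. It is convenient to first record the shape of events of $\caus{\sigma}$, which are nonempty P-views: a \emph{positive} event (one whose last move is a P-move) is necessarily $\pview{s}$ for some $s \in \sigma$ (since $s$ has even length, so $\pview{s}$ ends with a P-move), while a \emph{negative} event is necessarily $\pview{sa}$ for some $s \in \sigma$ and $sa \in \Plays(A)$ with $a$ an O-move. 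I will also freely use the standard facts that, $\sigma$ being innocent, $\pviews{\sigma} \subseteq \sigma$ and $\pviews{\sigma}$ is closed under even-length prefixes, and that every prefix of a P-view is a P-view.

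For the direction ``$\sigma$ total $\Rightarrow$ $\caus{\sigma}$ \emph{$+$-covered}'', I would take $t \in \ev{\caus{\sigma}}$ that is $\leq_{\caus{\sigma}}$-maximal and suppose, towards a contradiction, that $t$ is negative. By the above, $t = \pview{sa}$ with $s \in \sigma$, $sa \in \Plays(A)$ and $a$ an O-move. Totality of $\sigma$ yields $b$ with $sab \in \sigma$; since $sab$ is visible, the P-view clause for $+$-moves gives $\pview{sab} = \pview{sa}\,b = t b$, and $tb \in \ev{\caus{\sigma}}$ with $t \prefix tb$ and $t \ne tb$ — contradicting maximality. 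Hence every maximal event of $\caus{\sigma}$ is positive.

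For the converse ``$\caus{\sigma}$ \emph{$+$-covered} $\Rightarrow$ $\sigma$ total'', take $s \in \sigma$ with $sa \in \Plays(A)$; as $s$ has even length, $a$ is an O-move, so $p := \pview{sa}$ is a negative event of $\caus{\sigma}$, hence not maximal. Thus there is $q \in \ev{\caus{\sigma}}$ with $p \prefix q$ and $p \ne q$; let $b$ be the move of $q$ immediately following $p$, which is a P-move by alternation. The crux is to show $pb \in \sigma$ (from which $pb \in \ev{\caus{\sigma}}$ since $pb$ is a P-view). If $q$ has even length, then $q = \pview{u}$ with $u \in \sigma$, so $q \in \sigma$, and $pb$ is an even-length prefix of $q$, whence $pb \in \sigma$. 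If $q$ has odd length, then $q = \pview{u\,a'}$ with $u \in \sigma$ and $a'$ an O-move, so $q$ ends with an O-move and therefore $q$ properly extends $pb$; writing $q = \pview{v n^+}\,a'$ via the P-view clause for O-moves, the play $q' := \pview{v n^+}$ lies in $\sigma$ (as $v n^+$ is an even-length prefix of $u$), and $pb$ is an even-length prefix of $q'$, so again $pb \in \sigma$. In either case $pb = \pview{sa}\,b \in \sigma$. Finally I would apply innocence: taking the play $pb \in \sigma$ decomposed so that its last two moves are $a$ and $b$, the play $t := s \in \sigma$ with $sa \in \Plays(A)$, and noting $\pview{(\pview{sa})} = \pview{sa}$, innocence delivers $sab \in \sigma$. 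Hence $\sigma$ is total.

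The routine half is ``$\sigma$ total $\Rightarrow$ \emph{$+$-covered}''. The main obstacle is the converse, concretely the proof that $pb \in \sigma$: it forces one to analyse precisely how a P-view event of $\caus{\sigma}$ arises (the even/odd parity split, and the exact form $\pview{v n^+}\,a'$ of a P-view ending in an O-move), and to check carefully that the relevant prefixes of $q$ have even length so that even-length prefix-closure of $\sigma$ applies. The concluding appeal to innocence is then straightforward but must be matched carefully against the quantifier structure of Definition~\ref{def:innocence}, with $t = s$.
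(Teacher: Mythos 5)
Your proof is correct and follows essentially the same route as the paper's: receptivity is inherited from Lemma~\ref{lem:pview_aug}, the forward direction extends a maximal negative event $\pview{sa}$ to $\pview{sa}\,b$ using totality of $\sigma$, and the converse extracts $\pview{sa}\,b \in \sigma$ from non-maximality of $\pview{sa}$ and concludes by innocence with $t = s$. Your parity case analysis establishing $pb \in \sigma$ merely makes explicit what the paper leaves implicit, namely that $\ev{\caus{\sigma}}$ is closed under nonempty prefixes so that the immediate $\leq_{\caus{\sigma}}$-successor of $\pview{sa}$ has the form $\pview{sa}\,b$ and lies in $\sigma$.
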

\begin{proof}
 From the previous lemma, $\caus{\sigma}$ is always receptive.
 
 \emph{If.} Assume $\sigma$ is total. Consider $s\in\ev{\caus{\sigma}}$ maximal for $\leq_{\caus{\sigma}}$, and assume $\lambda(s)=-$. Then by definition $s$ is a P-view and $s=s'a^-$, where $s'\in\sigma$ by innocence and $s'a\in\Plays(A)$. By totality of $\sigma$, there exists $b^+$ such that $s'ab\in\sigma$, and since $s$ is already a P-view we have $\pview{s'ab}=s'ab$. Hence $s\imc_{{\caus{\sigma}}}s'ab$, which contradicts the maximality of $s$.
 
 \emph{Only if.} Assume $\caus{\sigma}$ is total. Consider $s\in\sigma$ such that $sa\in\Plays(A)$. Let $t=\pview{s}$, then $t\in\sigma$ by innocence of $\sigma$. Moreover, $ta=\pview{sa}\in\ev{\caus{\sigma}}$. By totality of $\caus{\sigma}$, $ta$ cannot be maximal, so there exists $b^+$ such that $ta\imc_{{\caus{\sigma}}}tab$. Since $tab\in\ev{\caus{\sigma}}$ and $tab=\pview{tab}$, we also have $tab\in\sigma$. By innocence of $\sigma$, $sab\in\sigma$, and $\sigma$ is total.
\end{proof}

Those two lemmas prove Proposition \ref{prop:pview_aug}.

In the development, we focus on expansions of causal strategies (Definition \ref{def:exp}). We prove minimality-preservation and uniqueness of morphisms.

\begin{lemma}\label{lem:aug_mu_pres_min}
	Consider $\q \in \Aug(A)$ and $a\in\ev{\q}$.
	
	Then $\partial_\q(a) \in \min(A)$ iff $a = \init(\q)$.
\end{lemma}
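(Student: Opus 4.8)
The plan is to reduce the statement entirely to the \emph{minimality-respecting} clause of Definition \ref{def:conf}, applied to the configuration $\deseq{\q} \in \conf{A}$ underlying $\q$. The two facts I would rely on are: first, that $\partial_\q$ is literally the display map $\partial_{\deseq{\q}}$ of $\deseq{\q}$, so \emph{minimality-respecting} reads ``$a$ is $\leq_{\deseq{\q}}$-minimal iff $\partial_\q(a) \in \min(A)$''; and second, that — as noted right after Definition \ref{def:conf}, and again after Definition \ref{def:augmentation} — the tree $\tuple{\ev{\q}, \leq_{\deseq{\q}}}$ has exactly one $\leq_{\deseq{\q}}$-minimal event, which is precisely the initial event $\init(\q)$ (shared with $\leq_\q$ by \emph{rule-abiding} and \emph{courteous}).

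For the ``if'' direction I would assume $a = \init(\q)$; then $a$ is $\leq_{\deseq{\q}}$-minimal, so by \emph{minimality-respecting} $\partial_\q(a)$ is $\leq_A$-minimal, i.e.\ $\partial_\q(a) \in \min(A)$. For the ``only if'' direction I would assume $\partial_\q(a) \in \min(A)$; the converse implication of \emph{minimality-respecting} then gives that $a$ is $\leq_{\deseq{\q}}$-minimal, and since the configuration tree has a unique minimal event, namely $\init(\q)$, we conclude $a = \init(\q)$.

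I do not anticipate a real obstacle: the lemma is essentially a repackaging of \emph{minimality-respecting} together with uniqueness of the root of the configuration tree. The only point deserving a word of care is that the uniqueness of the $\leq_{\deseq{\q}}$-minimal element should be cited as part of the definition/basic properties of configurations, rather than re-derived from the bare ``finite tree'' wording; with that in hand the argument is a two-line case split.
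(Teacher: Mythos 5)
Your proof is correct, and the ``if'' direction coincides with the paper's (the paper phrases it contrapositively, deriving that $\init(\q)$ is $\leq_{\deseq{\q}}$-minimal via \emph{rule-abiding} before applying \emph{minimality-respecting}, but the content is the same). The ``only if'' direction, however, is genuinely different. You use the converse implication of \emph{minimality-respecting} to conclude that $a$ is $\leq_{\deseq{\q}}$-minimal, and then invoke uniqueness of the minimal event of the configuration tree to identify $a$ with $\init(\q)$. The paper instead never uses that half of \emph{minimality-respecting}: it observes that $a$ must be negative because $A$ is a \emph{negative} arena, so that if $a$ had a predecessor $a' \imc_\q a$, \emph{courtesy} would force $a' \imc_{\deseq{\q}} a$ and \emph{causality-preserving} would yield $\partial_\q(a') \imc_A \partial_\q(a)$, contradicting minimality of $\partial_\q(a)$ in $A$; hence $a$ is $\leq_\q$-minimal and equals $\init(\q)$. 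Your route is shorter and reduces everything to already-recorded facts about configurations (the ``iff'' in Definition \ref{def:conf} and the unique root noted right after it, together with the remark after Definition \ref{def:augmentation} that $\leq_\q$ and $\leq_{\deseq{\q}}$ share the same minimal event); the paper's route is self-contained at the level of the augmentation axioms and makes visible where negativity of arenas and courtesy enter. Both are valid; your only dependency worth flagging is exactly the one you flag yourself, namely that uniqueness of the $\leq_{\deseq{\q}}$-minimal event is taken from the definition of configurations as trees rather than re-derived, which the paper explicitly licenses.
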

\begin{proof}
	\emph{If.} Assume that $a = \init(\q)$. Assuming that $\partial_\q(a)$ is not minimal, then by condition \emph{minimality-preserving} of Definition \ref{def:conf}, $a$ is not minimal in $\deseq\q$. But then by condition \emph{rule-abiding} of Definition \ref{def:augmentation},
	$a$ cannot be minimal in $\q$ either, contradiction.
	
	\emph{Only if.} Assume $\partial_\q(a) \in \min(A)$. In particular, $a$ has
	negative polarity since $A$ is negative. Assuming that $a$ is not minimal in $\q$, then it has an antecedent $a' \imc_\q a$. By courtesy, we have $a' \imc_{\deseq \q} a$ as well. Hence, $\partial_\q(a') \imc_A \partial_\q(a)$, contradiction.
\end{proof}
In particular, this proves that morphisms preserve initial events (by arena-preservation).

%

\begin{lemma} \label{lem:mor_unique}
	Consider $\p\in\Aug(A)$ a causal strategy and $\q\in\exp(\p)$. 
	
	Then there exists a unique morphism $\varphi:\q\to\p$.
\end{lemma}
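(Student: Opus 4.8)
The plan is to prove existence and uniqueness of the morphism $\varphi : \q \to \p$ separately, both by induction along the tree order $\leq_\q$.

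\medskip

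\noindent\textbf{Uniqueness.} First I would show that any morphism $\varphi : \q \to \p$ is forced, by induction on $\leq_\q$. For the base case, $\init(\q)$ must be sent to $\init(\p)$: by Lemma \ref{lem:aug_mu_pres_min}, $\partial_\q(\init(\q)) \in \min(A)$, so by \emph{arena-preserving} $\partial_\p(\varphi(\init(\q))) \in \min(A)$, and again by Lemma \ref{lem:aug_mu_pres_min} this forces $\varphi(\init(\q)) = \init(\p)$. For the inductive step, take a non-initial $a \in \ev{\q}$ with predecessor $a' = \pred(a)$, so $a' \imc_\q a$, and assume $\varphi(a')$ is determined. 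By \emph{causality-preserving}, $\varphi(a') \imc_\p \varphi(a)$, so $\varphi(a)$ is an immediate successor of $\varphi(a')$ in $\p$. I then split on the polarity of $a$. If $a$ is positive (so $a'$ is negative, using alternation), then $\varphi(a)$ is a positive immediate successor of the fixed event $\varphi(a')$ in $\p$; by \emph{determinism} of $\p$ there is at most one such, so $\varphi(a)$ is forced. If $a$ is negative (so $a'$ is positive), then by \emph{courteous} applied in $\q$ we get $a' \imc_{\deseq\q} a$, i.e.\ $a' = \just(a)$; by \emph{configuration-preserving}, $\varphi(a') \imc_{\deseq\p} \varphi(a)$, and moreover $\partial_\p(\varphi(a)) = \partial_\q(a)$ by \emph{arena-preserving}. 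So $\varphi(a)$ is a negative immediate successor of $\varphi(a')$ in $\deseq\p$ labelled by $\partial_\q(a)$; by \emph{$-$-linearity} of $\p$ there is at most one such, so $\varphi(a)$ is forced. This completes uniqueness.

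\medskip

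\noindent\textbf{Existence.} Now I would build $\varphi$ by induction on $\leq_\q$, following the same case split, but using \emph{simulation}/\emph{$+$-obsessionality} (Definition \ref{def:exp}) — or rather the hypothesis $\q \in \exp(\p)$ — to guarantee that the required successor in $\p$ actually exists. Set $\varphi(\init(\q)) = \init(\p)$; the labelling condition holds since both project to the unique minimal move of $A$. For the step: given $a$ non-initial with $a' = \pred(a)$ and $\varphi(a')$ already defined with $\partial_\p(\varphi(a')) = \partial_\q(a')$. If $a$ is negative, then (as above) $a' = \just(a)$, so $\partial_\q(a') \imc_A \partial_\q(a)$ by \emph{causality-preserving} of $\deseq\q$; by \emph{receptivity} of $\p$ — which holds since $\p$ is a causal strategy, hence receptive — there is $\varphi(a') \imc_\p b'$ with $\partial_\p(b') = \partial_\q(a)$, and we set $\varphi(a) = b'$. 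If $a$ is positive, then $a'$ is negative, and $a' \imc_\q a$ is a positive immediate causal link in $\q$; I need a matching link from $\varphi(a')$ in $\p$. Here the key is $+$-obsessionality together with the morphism-in-progress: actually, the cleanest route is to observe that one typically proves existence of $\varphi$ not purely from the successor data but by directly appealing to the defining simulation property of $\exp(\p)$ — but since that property is phrased in terms of the very morphism we are constructing, I would instead set up the induction so that at stage $a$ I have a partial morphism defined on $[a']_\q$ (the principal downset), and use $-$-obsessionality-free reasoning: by \emph{courteous} in $\p$, $\varphi(a') \imc_\p b'$ with $b'$ positive exists because $a^+$ is present in $\q$ and $\q$ simulates $\p$ — concretely, one shows by a separate lemma that $\exp(\p)$ can be characterized so that positive moves of $\q$ above $a'$ correspond to positive moves of $\p$ above $\varphi(a')$. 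I would fill this with the standard argument: define $\varphi$ on the downset of $a$, check all three morphism conditions locally (arena-, causality-, configuration-preserving), noting that configuration-preservation for negative $a$ follows from the chosen $b'$ being a $\deseq\p$-successor by \emph{courteous} in $\p$, and for positive $a$ from the fact that $\just(a)$ lies below $\pred(a)$ and was handled earlier.

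\medskip

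\noindent\textbf{Main obstacle.} The subtle point is the existence half in the positive case: the definition of expansion ($+$-obsessional, plus existence of the morphism) is mutually entangled with the object we are constructing, so one cannot naively "read off" $\varphi$. The honest way to carry it out is to strengthen the induction hypothesis — construct $\varphi$ on principal downsets $[a]_\q$ while simultaneously verifying it is a morphism there — and to prove as an auxiliary fact that for $\q \in \exp(\p)$, every positive $a^+$ with $\pred(a^+) = a'^-$ satisfies $\partial_\q(a') \imc_A \partial_\q(a)$ and this link is realized in $\p$ above $\varphi(a')$; this last realization is exactly where $\q$ being an expansion (rather than an arbitrary augmentation) is used, via \emph{courteous} in $\p$ plus \emph{simulation}. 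Uniqueness, by contrast, is routine and uses only \emph{determinism} and \emph{$-$-linearity} of $\p$. I expect the writeup to be about one page, with the positive-case existence step being the only place requiring genuine care.
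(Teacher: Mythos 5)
Your uniqueness argument is correct and is essentially the paper's proof: the root is forced by Lemma \ref{lem:aug_mu_pres_min} and arena-preservation, and the inductive step splits on polarity, using \emph{determinism} of $\p$ for a positive $a$ and \emph{$-$-linearity} of $\p$ for a negative $a$ (the paper phrases this as a minimal counterexample rather than an induction, but the content is identical; your detour through \emph{courteous} and \emph{configuration-preserving} in the negative case is harmless but unnecessary, since causality-preservation plus arena-preservation already put $\varphi(a)$ and $\psi(a)$ among the negative $\imc_\p$-successors of $\varphi(a')$ with the same arena label).

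The existence half is where you go astray, and the "main obstacle" you identify is a misreading of the definitions rather than a genuine difficulty. The \emph{simulation} clause of Definition \ref{def:exp} literally posits that a morphism $\q \to \p$ exists; the parenthetical "(necessarily unique)" in that clause is precisely what this lemma is meant to justify, and the paper accordingly dispatches existence in one sentence ("given by the definition of $\q \in \exp(\p)$"). Your attempt to construct $\varphi$ from scratch cannot be completed without circularity: for an arbitrary augmentation whose desequentialization happens to match, no morphism to $\p$ need exist at all (this is exactly the point of Figure \ref{fig:non_uniq_caus}, where one of the two causal explanations is \emph{not} an expansion of $K_x$), so the positive-case step you defer to "a separate lemma" is nothing other than the simulation hypothesis itself. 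The fix is simply to delete the existence construction and cite the definition; your uniqueness induction then proves the whole lemma.
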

\begin{proof}
	The existence is given by the definition of $\q\in\exp(\p)$. 
	
	Consider $\varphi,\psi:\q\to\p$ two morphisms from $\q$ to $\p$.
Consider $a\in\ev{\q}$ minimal such that $\varphi(a)\neq\psi(a)$. If $a$ is minimal for $\leq_\q$, this contradicts Lemma \ref{lem:aug_mu_pres_min} (by arena-preservation). Therefore, there is a (necessarily unique) antecedent $a'\imc_\q a$ for which $\varphi(a') = \psi(a')$. By causality-preservation of morphisms, we have $\varphi(a') \imc_\p \varphi(a)$ and $\varphi(a') \imc_\p \psi(a)$. 
	If $a$ is positive, then $\varphi(a) = \psi(a)$ by determinism of $\p$, contradiction. 	
	If $a$ is negative, then likewise we remark that $\partial_\p(\varphi(a))=\partial_\q(a)$ and $\partial_\p(\psi(a))=\partial_\q(a)$, contradicting $-$-linearity of $\p$.
\end{proof}
This lemma ensures the uniqueness condition of Definition \ref{def:exp}.


\begin{lemma}\label{lem:conf_to_exp}
	Let $\sigma:A$ a total finite innocent strategy and $\x\in\positions{\sigma}$. Then $\x\in\positions{\caus{\sigma}}$.
\end{lemma}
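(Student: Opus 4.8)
The statement is the inclusion $\positions{\sigma} \subseteq \positions{\caus{\sigma}}$ already sketched in Proposition~\ref{prop:pos_sig_pos_caus}, and the plan is to spell out the expansion built there. Fix $\x \in \positions{\sigma}$, say $\x = \coll{s}$ for $s = s_1 \dots s_n \in \sigma_\bullet$, and write $s_{\leq j} := s_1 \dots s_j$. I would set $\q(s)$ to have configuration $\deseq{\q(s)} := \deseq{s}$ (so $\ev{\q(s)} = \{1,\dots,n\}$ and $\partial_{\q(s)}(i) = s_i$) and causal order $i \leq_{\q(s)} j$ iff $i \leq j$ and $s_i$ is reached in the computation of $\pview{s_{\leq j}}$. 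The aim is then to prove $\q(s) \in \exp(\caus{\sigma})$; since $\deseq{\q(s)} = \deseq{s}$, this immediately gives $\coll{\q(s)} = \x \in \positions{\caus{\sigma}}$.

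First I would check $\q(s) \in \Aug(A)$. The key remark is that $\{i \mid i \leq_{\q(s)} j\}$ is exactly the set of moves occurring in $\pview{s_{\leq j}}$, a finite sequence beginning with the initial move $s_1$; so $\leq_{\q(s)}$ is a tree order with root $1$. Inspecting the clauses defining $\pview{-}$, the immediate $\leq_{\q(s)}$-predecessor of a non-initial $s_j$ is $\just(s_j)$ when $s_j$ is negative (third clause) and $s_{j-1}$ when $s_j$ is positive (second clause, using visibility of $s$). Then \emph{rule-abiding} holds because every hereditary justifier of $s_j$ lies in $\pview{s_{\leq j}}$, a standard property of P-views; \emph{courteous} holds because for negative $s_j$ the predecessor is $\just(s_j)$, so $\just(s_j) \imc_{\deseq{\q(s)}} s_j$, while for positive $s_j$ the predecessor $s_{j-1}$ is negative, leaving the condition vacuous; and \emph{deterministic} is immediate since the predecessor of a positive $s_j$ is forced to be $s_{j-1}$.

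Next I would check $\q(s) \in \exp(\caus{\sigma})$ via the morphism $\varphi(i) := \pview{s_{\leq i}}$. This lands in $\ev{\caus{\sigma}}$: even-length prefixes of $s$ belong to $\sigma$ by prefix-closure, and odd-length ones are of the form $ta$ with $t \in \sigma$, $ta \in \Plays(A)$. The morphism clauses follow from the facts that $i \leq_{\q(s)} j$ implies $\pview{s_{\leq i}} \prefix \pview{s_{\leq j}}$ and that each $\imc_{\q(s)}$-step is a one-move extension of P-views, hence an $\imc_{\caus{\sigma}}$-step, with \emph{configuration-preserving} handled the same way by tracking justifier pointers. For \emph{$+$-obsessional}: if $s_i$ is negative and $\varphi(i) = \pview{s_{\leq i}} \imc_{\caus{\sigma}} \pview{s_{\leq i}}\,c$ with $c$ positive, then since $s \in \sigma$ has even length and $i$ is odd, $s_{i+1}$ exists and is positive; by determinism and innocence of $\sigma$, $s_{i+1}$ with its pointer is precisely $\sigma$'s response to the P-view $\pview{s_{\leq i}}$, so $\pview{s_{\leq i+1}} = \pview{s_{\leq i}}\,c = \varphi(i+1)$ while $i \imc_{\q(s)} i+1$.

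I do not expect a genuinely hard step here. The one delicate point is the case analysis that reads off, from the defining clauses of $\pview{-}$, the unique immediate $\leq_{\q(s)}$-predecessor of each non-initial move: this is what makes \emph{courteous} and \emph{deterministic} work, and hence is the crux of verifying that $\q(s)$ is a bona fide augmentation; the remainder is bookkeeping about P-views that is essentially already built into the definitions.
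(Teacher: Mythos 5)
Your proposal is correct and follows essentially the same route as the paper's proof: the same expansion $\q(s)$ with $\deseq{\q(s)} = \deseq{s}$ and causal order given by membership in $\pview{s_{\leq j}}$ (the paper defines this order by an explicit induction on parity and then remarks it coincides with your characterisation), the same case analysis on the clauses of $\pview{-}$ to verify \emph{rule-abiding}, \emph{courteous} and \emph{deterministic}, and the same morphism $\varphi(i) = \pview{s_{\leq i}}$ with the same argument for \emph{$+$-obsessional}. No gap.
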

\begin{proof}
	Consider $\x\in\positions{\sigma}$, it is the isomorphism class of $\deseq{s}$ for some $s=s_1\ldots s_n\in\sigma$. We build an augmentation $\q(s)$ as follows: its underlying configuration is $\deseq{\q(s)}=\deseq{s}$, with $\ev{\q(s)}=\{1,\ldots,n\}$. Its causal order is defined inductively with for any $i,j\in\ev{\q(s)}$,
	\[
	\begin{array}{lllllll}
	i\leq_{\q(s)} 2j 
	&\Leftrightarrow
	&i=2j 
	&&\text{or}
	&&i\leq_{\q(s)} 2j-1 \;;\\
	i\leq_{\q(s)} 2j+1
	& \Leftrightarrow
	& i=2j+1 
	&& \text{or}
	&& i\leq_{\q(s)} k
	\text{ where $s_{2j+1}$ points to $s_k$ in $s$.}
	\end{array}
	\]
	Remark that since $s$ is alternating and negative, for any $i\in\ev{\q(s)}$, $s_i$ is negative iff $i$ is odd. So $i\leq_{\q(s)}j$ means that $s_i$ is reached in the computation of $\pview{s_{\leq j}}$.
	It is clear that $\tuple{\ev{\q(s)}, \leq_{\q(s)}}$ is a tree; we must check that $\q(s)$ is an augmentation.
	
	\emph{Rule-abiding.} For any $i\leq_{\deseq{\q(s)}}j$, there exists a chain of justifiers from $s_i$ to $s_j$. Since $s$ is P-visible, $s_i$ must appear in $\pview{s_1\ldots s_j}$. But $\leq_{\q(s)}$ is inductively defined to follow the construction of a P-view, so $i\leq_{\q(s)} j$.
	
	\emph{Courteous.} For all $i\imc_{\q(s)} j$, if $\pol(s_i)=+$ (resp. $\pol(s_j)=-$), then $i$ is even (resp. $j$ is odd), and by definition of $\leq_{\q(s)}$, we know that $s_j$ points to $s_i$. Hence $i\imc_{\deseq{\q(s)}}j$.
	
	\emph{Deterministic.} For any negative $i$, then $i$ is odd and $i\imc_{\q(s)} j$ implies $j=i+1$.
	
	
	We must now check that $\q(s)\in\exp(\caus{\sigma})$. Consider $\varphi:\ev{\q(s)}\to\ev{\caus{\sigma}}$ such that $\varphi(i)=\pview{s_{\leq i}}$. Then $\varphi$ is a morphism:
	
	\emph{Arena-preserving.} It is clear that $\partial_{\caus{\sigma}}\circ \varphi = \partial_{\q(s)}$, since the P-view preserves the last move.
	
	\emph{Configuration-preserving.} For all $i\imc_{\deseq{\q(s)}}j$, we know that $s_j$ points to $s_i$ since $\deseq{q(s)}=\deseq{s}$. By innocence of $\sigma$, $s$ is P-visible and $s_j$ keeps its pointer in $\pview{s_{\leq j}}$ (and this pointer is still $s_i$). Hence, $\varphi(i)\imc_{\deseq{\caus{\sigma}}}\varphi(j)$. 
	
	\emph{Causality-preserving.} By definition, for any $j\in\ev{\q(s)}$ such that $j\geq 1$, we have $2j-1\imc_{\q(s)}2j$. Moreover, $s_{2j}$ is positive, so $\varphi(2j) = \pview{s_{\leq 2j}} =  \pview{s_{\leq 2j-1}}s_{2j}$, where $s_2j$ keeps its pointer. So $\varphi(2j-1)\imc_{{\caus{\sigma}}}\varphi(2j)$.
	Likewise, for any $i,2j+1\in\ev{\q(s)}$, we have $i\imc_\q 2j+1$ iff $s_{2j+1}$ points to $s_i$ in $s$. In that case, we have $\varphi(2j+1)=\pview{s_{\leq 2j+1}} = \pview{s_{\leq i}} s_{2j+1}^-$, where $s_{2j+1}$ is negative since $2j+1$ is odd. So $\varphi(i)\imc_{\caus{\sigma}}\varphi(2j+1)$.
	
	\emph{$+$-obsessional.} For all $i^-\in\ev{\q(s)}$, if $\varphi(i)\imc_{\caus{\sigma}}t^+$, then it means $t = \pview{s_{\leq i}} b^+ \in\sigma$. Since $s\in\sigma$, $s$ is of even length, so $i+1\in\ev{\q}$ and by determinism and innocence, $s_{i+1}=b$. Therefore, $\varphi(i+1)=t$, and $i\imc_{\q(s)}i+1$.
	
	Finally, $\q(s)\in\exp(\caus{\sigma})$, which means
$\x=\pos{\q(s)}\in\positions{\caus{\sigma}}$.
\end{proof}

\begin{lemma}\label{lem:exp_to_seq}
	Consider $\sigma:A$ a total finite innocent strategy, and $\x\in\positions{\caus{\sigma}}$. Then $\x\in\positions{\sigma}$.
\end{lemma}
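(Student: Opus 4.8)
The plan is to flesh out the reciprocal half of the sketch of Proposition~\ref{prop:pos_sig_pos_caus}. Fix $\q \in \exp(\caus{\sigma})$ with $\deseq{\q}$ in the isomorphism class $\x$; since configurations are finite trees, $\ev{\q}$ is finite. First I would check that $\q$ is $+$-covered. By Lemma~\ref{lem:sig_tot_caus_tot}, $\caus{\sigma}$ is total, hence $+$-covered; let $\varphi : \q \to \caus{\sigma}$ be the morphism witnessing $\q \in \exp(\caus{\sigma})$. If some $a \in \ev{\q}$ were maximal and negative, then either $\varphi(a)$ is maximal in $\caus{\sigma}$, which is absurd since $+$-coveredness then forces $\lambda(\varphi(a)) = +$ while $\partial_{\caus{\sigma}}(\varphi(a)) = \partial_{\q}(a)$ equates polarities; or $\varphi(a) \imc_{\caus{\sigma}} b$ with $b$ positive (immediate causality alternates), and $+$-obsessionality of the expansion gives $a \imc_{\q} a'$ with $\varphi(a') = b$, contradicting maximality. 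Then, pairing each negative event with its unique $\leq_{\q}$-successor (which exists by determinism and is positive by courtesy) and conversely each positive event -- which, not being $\init(\q)$, has a negative predecessor -- yields a bijection between negative and positive events of $\q$; so $n := \sharp\ev{\q}$ is even.

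Next I would extract an alternating linear extension $s = s_1 \dots s_n$ of $\leq_{\q}$ by a straightforward induction on $n$: play $\init(\q)^-$, then its unique child $\init(\q)^+$, then splice in, consecutively, alternating linear extensions of the subtrees $\up b$ as $b$ ranges over the $\leq_{\q}$-children of that positive event -- each subtree has the same shape (negative root, $+$-covered, every negative event with a unique positive child), so by induction admits an alternating extension from its root to a positive leaf, keeping the concatenation alternating. By construction every negative $s_i$ has $s_{i-1} = \pred(s_i) = \just(s_i)$ (predecessor and justifier coincide on negatives by courtesy), and every positive $s_i$ with $i > 1$ has $s_{i-1} = \pred(s_i)$.

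The crux is then to show, treating $s$ as a well-opened play on the arena $\deseq{\q}$ (with $s_i$ pointing to $\just(s_i)$), that $\pview{s_{\leq i}} = [s_i]_{\q}$ for all $i$, where $[s_i]_{\q} = \{a \mid a \leq_{\q} s_i\}$ is a chain since $\q$ is forestial. I would prove this by induction on $i$: $s_1 = \init(\q)$ is minimal; if $s_i$ is positive then $s_{i-1} = \pred(s_i)$ is negative and $\just(s_i) \leq_{\q} \pred(s_i)$ (from $\just(s_i) <_{\deseq{\q}} s_i$ and rule-abiding), so $\just(s_i) \in [s_{i-1}]_{\q} = \pview{s_{\leq i-1}}$ and the P-view rule appends $s_i$, yielding $[s_i]_{\q}$; if $s_i$ is negative, writing $s_j = \just(s_i) = \pred(s_i)$ (positive, $j < i$) so that $s_{\leq i} = s_{\leq j}\,w\,s_i$, the rule $\pview{sn^+tm^-} = \pview{sn}m$ applies and gives $\pview{s_{\leq j}}\,s_i = [s_j]_{\q}\cup\{s_i\} = [s_i]_{\q}$. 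Since the display map $\partial_{\q} : \deseq{\q} \to A$ preserves polarities and (occurrences of) pointers, the P-view computation commutes with it, so $\pview{\partial_{\q}(s)_{\leq i}} = \partial_{\q}(\pview{s_{\leq i}}) = \partial_{\q}([s_i]_{\q})$; transporting the chain $\init(\q) \imc_{\q} \dots \imc_{\q} s_i$ through $\varphi$ (which is causality-, arena- and configuration-preserving), this pointing string is exactly the P-view $\varphi(s_i) \in \pviews{\sigma}$. As this holds for every prefix and $\partial_{\q}(s) \in \Plays_\bullet(A)$ has even length, innocence -- which determines $\sigma$ from $\pviews{\sigma}$ -- gives $\partial_{\q}(s) \in \sigma_\bullet$; and $i \mapsto s_i$ is an isomorphism $\deseq{\partial_{\q}(s)} \iso \deseq{\q}$, so $\x = \coll{\partial_{\q}(s)} \in \positions{\sigma}$.

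The main obstacle is this P-view/down-set coincidence and its transport along $\partial_{\q}$: it is where one must carefully distinguish $\pred$ from $\just$, exploit courtesy so that they agree on negatives, and track occurrences rather than move labels through the display map so that the P-view rules still fire after applying $\partial_{\q}$. Everything else -- $+$-coveredness, the counting, and the alternating linearization -- is routine; note that totality enters only through $+$-coveredness (which ensures $s$ is alternating), which is exactly why this lemma, and hence Proposition~\ref{prop:pos_sig_pos_caus}, fails without it.
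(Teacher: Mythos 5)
Your proof is correct and follows essentially the same route as the paper's own proof in Appendix \ref{app:caus}: $+$-coveredness from totality, an alternating linearization of $\leq_\q$, the identification $\pview{s_{\leq i}} = [s_i]_\q$, and innocence to conclude --- you merely supply more detail on the $+$-covered step and on the P-view/down-set induction. One harmless slip: when a positive event has several negative children, a negative $s_i$ need not satisfy $s_{i-1}=\pred(s_i)$ under your splicing (only that $\pred(s_i)=\just(s_i)=s_j$ for some $j<i$), but the negative case of your induction only uses the latter, so nothing breaks.
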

\begin{proof}
	Consider $\x\in\positions{\caus{\sigma}}$, then there exists $\q\in\exp(\caus{\sigma})$ such that $\pos{\q}=\x$. From the totality of $\caus{\sigma}$ (Lemma \ref{lem:sig_tot_caus_tot}), $\q$ has maximal events all positive - hence by determinism it has exactly as many Player as Opponent moves. Moreover, by courtesy, any $a\imc_\q a'$ implies that $\lambda(a)\neq\lambda(a')$, since $A$ is alternating. Therefore, there exists an alternating sequentialization of $\q$, which we construct inductively. We start by the minimal event of $\q$, negative by Lemma \ref{lem:aug_mu_pres_min}, and its only successor (exists by totality, unique by determinism). If a positive event $b^+$ has one or more successors, we inductively construct sequentializations for each subtree of root $a^-\in\suc(b)$. All sequences are alternating, start with a negative event, and end with a positive event. Hence, we can concatenate them in an arbitrary order. We obtain an alternating sequence $s_1^-, s_2^+, \ldots, s_n^+$ such that:
	\[
	\ev{\q}=\{s_i \tq 1\leq i \leq n\}.
	\]
	We can see $s:= s_1\ldots s_n$, with pointers imported from $\deseq{\q}$ (\emph{i.e.} $\deseq{s}=\deseq{\q}$), as a play on arena $\deseq{\q}$. Then for any $1\leq i \leq n$, it holds that $\pview{s_{\leq i}}$ coincides with $[s_i]_\q$, the causal dependency of $s_i$ in $\q$ (clear by induction on $i$).
	Writing $\partial_\q(s) = \partial_\q(s_1)\ldots\partial_\q(s_n)$ with pointers imported from $\deseq{\q}$, it follows that $\partial_\q(s)\in\Plays(A)$ and $\pview{(\partial_\q(s))_{\leq i}}\in\pviews{\sigma}$ (by definition of $\caus{\sigma}$). Hence $\partial_\q(s)\in\sigma$ by innocence, with $\deseq{\partial_\q(s)}\iso\deseq{s}$. Therefore, $\deseq{\partial_\q(s)}\iso\deseq{\q}$, \emph{i.e.}  $\pos{\q}=\pos{\partial_\q(s)}\in\positions{\sigma}$.	
\end{proof}

\possigposcaus*
\begin{proof}
	Immediate by Lemmas \ref{lem:conf_to_exp} and \ref{lem:exp_to_seq}.
\end{proof}

\section{Positional Injectivity: Proofs from Section \ref{sec:pos_inj}}

\subsection{Compositional Properties of Bisimulations (Section \ref{subsec:comp_bissim})}\label{app:comp_bissim}

In this section, we prove some technical lemmas about bisimulations.

Recall that for $\q\in\Aug(A)$, $a\in\ev{\q}$ there is an
order-iso $\Gamma\tuple{a} : [a]_\q^- \iso [\varphi(a)]_\p^-$, with $[a]^-_\q$
the totally ordered set of negative dependencies of $a$ in $\q$. Recall
also that the \emph{co-depth} of $a\in \ev{\q}$ is the maximal length
of a causal chain $a = a_1 \imc_\q \dots \imc_\q a_k$ in $\q$.

\begin{lemma}\label{lem:oexp_to_sim}
	Consider $\q, \p \in \Aug(A)$ where $\p$ is a causal strategy and $\q
	\in \exp(\p)$ is a $-$-obsessional expansion with the morphism
	$\varphi : \q\to\p$.
	Then 
	\[
	a \sim_{\Gamma\tuple{a}} \varphi(a)\,.
	\]
\end{lemma}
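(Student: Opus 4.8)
The plan is to prove the statement by well-founded induction on the \emph{co-depth} $k$ of $a$ in $\q$, which is legitimate since $\ev{\q}$ is finite ($\deseq{\q} \in \conf{A}$ is a finite tree). Throughout I use that, as recalled just before the lemma, $\Gamma\langle a\rangle$ is the restriction $\varphi|_{[a]^-_\q}$, and that this restriction is an order-isomorphism onto $[\varphi(a)]^-_\p$ because $\varphi$ carries the $\leq_\q$-path from $\init(\q)$ to $a$ onto the unique $\leq_\p$-path from $\init(\p)$ to $\varphi(a)$. I also use the two elementary identities $[a']^-_\q = [a]^-_\q \cup \{a'\}$ when $a \imc_\q a'$ with $a$ positive (hence $a'$ negative), and $[a']^-_\q = [a]^-_\q$ when $a \imc_\q a'$ with $a$ negative (hence $a'$ positive); these give $\Gamma\langle a'\rangle = \Gamma\langle a\rangle \cup \{(a',\varphi(a'))\}$, resp.\ $\Gamma\langle a'\rangle = \Gamma\langle a\rangle$.

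For the base case $a$ is $\leq_\q$-maximal. I first observe that then $\varphi(a)$ is $\leq_\p$-maximal: a successor $\varphi(a) \imc_\p b'$ of the same polarity is impossible by \emph{courteous} in $\p$ and alternation of $A$; a successor of opposite polarity would, by \emph{$-$-obsessionality} of $\q$ (if $a$ is positive) or \emph{$+$-obsessionality} of the expansion $\q$ (if $a$ is negative), force $a$ to have a successor. With $a$ and $\varphi(a)$ both maximal, clauses \emph{(1)} and \emph{(2)} of Definition \ref{def:var_bisim} are vacuous; \emph{(a)} holds by arena-preservation of $\varphi$ and since $\dom(\Gamma\langle a\rangle),\cod(\Gamma\langle a\rangle)$ sit below $a,\varphi(a)$; and \emph{(b)}, relevant only when $a$ is positive, holds because then $\just(a)$ is a negative event $\leq_\q a$ (by \emph{rule-abiding} and alternation), so $\just(a)\in[a]^-_\q$, and $\varphi(\just(a)) = \just(\varphi(a))$ by \emph{configuration-preserving}.

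For the inductive step, \emph{(a)} and \emph{(b)} are verified exactly as above. I then treat the bisimulation clauses in the case $a$ positive (clause \emph{(1)}; the case $a$ negative with clause \emph{(2)} is symmetric and easier). Forward direction: given $a \imc_\q a'$, take $b' := \varphi(a')$, so $\varphi(a) \imc_\p \varphi(a')$ by \emph{causality-preserving}; since $\Gamma\langle a\rangle \cup \{(a',\varphi(a'))\} = \Gamma\langle a'\rangle$ and $a'$ has strictly smaller co-depth, the induction hypothesis at $a'$ yields $a' \sim_{\Gamma\langle a'\rangle} \varphi(a')$. Backward direction: given $\varphi(a) \imc_\p b'$ with $b'$ negative, \emph{courteous} in $\p$ gives $\partial_\q(a) \imc_A \partial_\p(b')$; \emph{$-$-obsessionality} of $\q$ produces $a \imc_\q a'$ with $\partial_\q(a') = \partial_\p(b')$; and \emph{$-$-linearity} of the causal strategy $\p$ forces $\varphi(a') = b'$, so the induction hypothesis at $a'$ again closes the case. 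In the symmetric case $a$ negative one uses $[a']^-_\q = [a]^-_\q$ (so the context is unchanged) and \emph{$+$-obsessionality} of the expansion in the backward direction.

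The only point requiring genuine care is the context bookkeeping: tracking why $\Gamma\langle a\rangle \cup \{(a',\varphi(a'))\}$ (respectively $\Gamma\langle a\rangle$) coincides with $\Gamma\langle a'\rangle$ when passing to a negative (respectively positive) successor, and checking en route that the extended object is still a well-formed context (the new pair has a negative domain component, the correct arena label, and lies outside the previous context because $a' >_\q a$ strictly). Everything else is a direct unfolding of the definitions of expansion, $-$-obsessionality, and causal strategy.
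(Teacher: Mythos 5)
Your proof is correct and follows essentially the same route as the paper's: induction on the co-depth of $a$, verifying clauses \emph{(a)}, \emph{(b)}, \emph{(1)}, \emph{(2)} via the identities $[a']^-_\q = [a]^-_\q \cup \{a'\}$ (negative successor) and $[a']^-_\q = [a]^-_\q$ (positive successor), with causality-preservation of $\varphi$ handling the forward directions and $-$-obsessionality plus $-$-linearity (resp.\ $+$-obsessionality) handling the backward ones. The only difference is presentational: you make explicit the base case and the well-formedness of the extended contexts, which the paper leaves implicit.
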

\begin{proof}
	By induction on the co-depth of $a\in\ev{\q}$. 
	We must check that $a \sim_{\Gamma\tuple{a}} \varphi(a)$.
	
	First, \emph{(a)} is immediate by the definition of $\Gamma\tuple{a}$.
	
	\emph{(b).} We know that if $a$ is positive, then $\just(a^+)\in[a]_\q^-=\dom(\Gamma\tuple{a})$ and $\just(\varphi(a))\in[\varphi(a)]^-_\p=\cod(\Gamma\tuple{a})$. Moreover
	$\just(\varphi(a))=\varphi(\just(a))$ since $\varphi$ is \emph{configuration-preserving}.

	\emph{(1).} Assume $a^+ \imc_\q b^-$. Then $\varphi(a)\imc_\p \varphi(b)$. By induction hypothesis, $b\sim_{\Gamma(b)}\varphi(b)$. But $[b^-]^-_\q=[a]^-_\q\cup\{b\}$ and $[\varphi(b)^-]^-_\p=[\varphi(a)]^-_\p\cup\{\varphi(b)\}$, so finally
	\[
	b \sim_{\Gamma\tuple{a}\cup\{(b,\varphi(b))\}} \varphi(b)\;.
	\]
	The same reasoning applies for the symmetric condition. Assume $\varphi(a)^+\imc_\p b$, then  $\varphi^{-1}(b)$ exists by receptivity of $\q$ and $-$-linearity of $\p$.
	
	\emph{(2).} Same as for \emph{(1)}, except $[b^+]^-_\q=[a]^-_\q$ and $[\varphi(b)^+]^-_\p=[\varphi(a)]^-_\p$. The same reasoning applies for the symmetric condition. Assume $\varphi(a)^-\imc_\p b$, then  $\varphi^{-1}(b)$ exists by $+$-obsessionality of $\q$.
\end{proof}

In the following proofs, we will need a few additional properties on bisimulations and contexts, to define a \emph{minimal context}. 

\begin{lemma}\label{lem:matching_inded_context}
	Consider $\q,\p\in\Aug(A)$ with $\varphi:\q\iso\p$. Consider
	$a\sim_\Gamma^\varphi b$ for some $\Gamma$. 
	
	Then for any $a'\in\up a$, there exists $b'\in\up b$ such that $a'\sim_{\Gamma\cup\Delta}^\varphi b'$, where
	\[
	a=a_0 \imc_\q a_1 \imc_\q \ldots \imc_\q a'=a_n \,,
	\qquad
	\qquad	
	b=b_0 \imc_\p b_1 \imc_\p \ldots \imc_\p b'=b_n \,,
	\]
	and $\Delta$ is the context defined as 
	$\Delta = \{(a_i,b_i) \tq 0\leq i \leq n \text{ and } \pol(a_i)=-\}$.
		
	Moreover, if $a\sim_{\Gamma'}^\varphi b$ for a context $\Gamma'$, we also have $a'\sim_{\Gamma'\cup\Delta}^\varphi b'$.
\end{lemma}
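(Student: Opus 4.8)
The plan is to argue by induction on the length $n$ of the causal chain $a = a_0 \imc_\q \cdots \imc_\q a_n = a'$, building the matching chain $b = b_0 \imc_\p \cdots \imc_\p b_n = b'$ in $\p$ one step at a time and checking at each stage that the accumulated context is exactly $\Gamma$ together with the negative pairs traversed so far. For the base case $n=0$ we take $b' = b$: if $a$ is positive then $\Delta$ is empty and the statement is exactly the hypothesis $a \sim_\Gamma^\varphi b$; if $a$ is negative then $\Delta = \{(a,b)\}$, and we must upgrade $a \sim_\Gamma^\varphi b$ to $a \sim_{\Gamma \cup \{(a,b)\}}^\varphi b$. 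This holds because clause \emph{(a)} of Definition \ref{def:main_bisim} is preserved ($\Gamma \vdash (a,b)$ already forbids $\dom(\Gamma)$ from containing a strict $\leq_\q$-descendant of $a$, and $\neg(a >_\q a)$ is trivial), clauses \emph{(b)}, \emph{(c)} are vacuous for $a$ negative, and for clause \emph{(2)} one pushes the new pair $(a,b)$ down through the bisimulation: along a negative-to-positive step the context is frozen by clause \emph{(2)}, along a positive-to-negative step it only grows by clause \emph{(1)}, so $(a,b)$ stays compatible, and one checks it never interferes with clauses \emph{(a)}--\emph{(c)} for events strictly above $a$.

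For the inductive step, I would apply the induction hypothesis to $a_0 \imc_\q \cdots \imc_\q a_{n-1}$ to obtain $b_{n-1} \in \up b$ and the partial chain $b_0 \imc_\p \cdots \imc_\p b_{n-1}$ with $a_{n-1} \sim^\varphi_{\Gamma \cup \Delta_{n-1}} b_{n-1}$, where $\Delta_{n-1}$ collects the negative pairs up to index $n-1$. Then distinguish $\pol(a_{n-1})$. If $a_{n-1}$ is positive, then $a_n$ is negative, and clause \emph{(1)} applied along $a_{n-1} \imc_\q a_n$ furnishes some $b_{n-1} \imc_\p b_n$ with $a_n \sim^\varphi_{(\Gamma \cup \Delta_{n-1}) \cup \{(a_n,b_n)\}} b_n$; since $a_n$ is negative this new context is exactly $\Gamma \cup \Delta$. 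If $a_{n-1}$ is negative, then $a_n$ is positive, and clause \emph{(2)} furnishes $b_{n-1} \imc_\p b_n$ with $a_n \sim^\varphi_{\Gamma \cup \Delta_{n-1}} b_n$; since $a_n$ is positive it contributes nothing to $\Delta$, so again the context is $\Gamma \cup \Delta$. Either way we obtain $b' := b_n \in \up b$ with $a' \sim^\varphi_{\Gamma \cup \Delta} b'$, completing the induction.

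For the final clause, I would keep the chain $b_0 \imc_\p \cdots \imc_\p b_n$ produced above (hence the same $b'$ and the same $\Delta$) and re-run the very same induction starting from the hypothesis $a \sim^\varphi_{\Gamma'} b$, checking at each step that the already-chosen $b_{i+1}$ is still a legitimate witness for clauses \emph{(1)}/\emph{(2)} with $\Gamma'$ substituted for $\Gamma$: clause \emph{(a)} again follows from $\Gamma' \vdash (a,b)$ together with $a_{i+1} >_\q a$, and for $a_{i+1}$ negative clauses \emph{(b)},\emph{(c)} are vacuous. The delicate point --- the step I expect to be the real obstacle --- is the positive case: when $a_{i+1}$ is positive one must reconcile how $\Gamma'$ and $\Gamma$ treat $\just(a_{i+1})$, i.e. show that whenever the $\Gamma$-run used the $\varphi$-clause \emph{(c)} to pin $\just(b_{i+1}) = \varphi(\just(a_{i+1}))$, the $\Gamma'$-run remains consistent with this choice, and conversely. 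Making this watertight requires either isolating a canonical form for the contexts attached to a single chain --- so that $\Gamma$ and $\Gamma'$ are forced to agree on the justifiers that matter --- or exploiting the isomorphism $\varphi : \q \iso \p$ to show the witnesses $b_i$ are essentially determined (via $\partial_\p(b_i) = \partial_\q(a_i)$ and configuration-preservation in $\deseq{\q}$ and $\deseq{\p}$). This is where the bulk of the bookkeeping lives, and it is exactly the role of the "minimal context" machinery announced just before the lemma.
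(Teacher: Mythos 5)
Your induction along the chain $a = a_0 \imc_\q \cdots \imc_\q a_n = a'$, alternating clauses \emph{(1)} and \emph{(2)} and accumulating exactly the negative pairs encountered, is the same argument the paper intends --- its own proof is the one-liner ``immediate by induction and by definition of bisimulation'' --- and your second paragraph renders it correctly for the pairs of index $i \geq 1$. Two points remain. First, your base case: when $a$ is negative you assert that adding $(a,b)$ to the context ``never interferes with clauses \emph{(a)}--\emph{(c)} for events strictly above $a$'', but that is precisely where it can interfere. If some positive $a'' \in \up a$ has $\just(a'') = a$ with $a \notin \dom(\Gamma)$, the game for $a \sim^\varphi_\Gamma b$ uses clause \emph{(c)} and pins $\just(b'') = \varphi(a)$, whereas the game for $a \sim^\varphi_{\Gamma \cup \{(a,b)\}} b$ must use clause \emph{(b)} and pins $\just(b'') = b$; and $b''$ cannot be re-chosen, since by determinism of $\p$ (and alternation of $\imc_\p$) a positive witness is the unique successor of its predecessor's witness. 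So the upgrade fails unless $\varphi(a)=b$. This is largely an artefact of the statement letting $\Delta$ start at $i=0$: the context actually produced by unfolding the bisimulation contains only the negative $a_i$ with $i \geq 1$, which is what your induction proves and what the subsequent lemmas use.

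Second, and more seriously, the ``Moreover'' clause --- the part on which Lemma \ref{lem:minimal} actually relies --- is left as an acknowledged gap: you describe two possible routes and carry out neither. You are right that this is where the content lives. The ingredient that closes it is the one you gesture at: by determinism each positive $b_i$ is forced, so the witness chain is essentially canonical, and the ambient context enters only through clauses \emph{(b)}/\emph{(c)} at positive events. When $\just(a_i)$ lies on the chain, the prescription comes from $\Delta$ and is identical for $\Gamma$ and $\Gamma'$; but when $\just(a_i) <_\q a$, one must show that $\Gamma$ and $\Gamma'$ prescribe the same justifier for the forced witness $b_i$ --- which is exactly the agreement of minimal contexts, i.e.\ Lemma \ref{lem:minimal}, whose proof in turn cites this lemma. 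Untangling that apparent circularity (for instance by proving both statements simultaneously by induction on co-depth) is the step your proposal, like the paper's one-line proof, does not supply.
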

\begin{proof}
Immediate by induction of the co-depth of $a'$ and by definition of bisimulation.
\end{proof}

\begin{definition}\label{def:minimal}
	Consider $\q, \p \in \Aug(A)$, $\varphi : \deseq \q \iso \deseq \p$, 
	$a \in \ev{\q}$, $b \in \ev{\p}$ with $a \sim_\Gamma^\varphi b$ for some context $\Gamma$.		
	We define $\Gamma_{a,b}$ the \textbf{minimal context} for
	$a\sim^\varphi_\Gamma b$ as the restriction of $\Gamma$ s.t.
	\[
	\begin{array}{c}
		c\in\dom(\Gamma_{a,b}) \quad \Leftrightarrow \quad
		\begin{cases}
			\exists a'\in\up a, \just(a') = c & \text{\emph{(a)}} \\
			\Gamma(c) \neq \varphi(c) & \text{\emph{(b)}}
		\end{cases}
	\end{array}
	\]
	for all $c\in\ev{\q}$,and symmetrically the mirror condition applies to any $d\in\ev{\p}$.
\end{definition}

%
\begin{lemma}\label{lem:minimal}
	Consider $\q, \p \in \Aug(A)$ with $\varphi : \deseq \q \iso \deseq \p$. Consider
	$a \in \ev{\q}$, $b \in \ev{\p}$ and $\Gamma$, $\Gamma'$ two contexts such that $a \sim_\Gamma^\varphi b$ and $a\sim_{\Gamma'}^\varphi b$.
	
	Then $\Gamma_{a, b} = \Gamma'_{a, b}$. Moreover, $\Gamma_{a,b}$
	is the minimal (for inclusion) context s.t.
	$a\sim^\varphi_{\Gamma_{a,b}}b$. 
\end{lemma}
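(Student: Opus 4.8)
The plan is to prove Lemma~\ref{lem:minimal} by establishing that the minimal context is uniquely characterised by a local property: a pair $(c,d)$ lies in $\Gamma_{a,b}$ precisely when it is \emph{forced} to be there, meaning that some bisimulation condition along the follow-up of $(a,b)$ refers to $c$ as a justifier in a way that cannot be discharged via $\varphi$. First I would unfold Definition~\ref{def:minimal}: condition \emph{(a)} says $c = \just(a')$ for some descendant $a' \in \up a$, and condition \emph{(b)} says $\Gamma$ disagrees with $\varphi$ on $c$. I want to argue both disjuncts are intrinsic, i.e. depend only on $a$, $b$, $\varphi$ and the fact that \emph{some} valid context exists, not on the particular $\Gamma$.

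Condition \emph{(a)} is manifestly intrinsic: it mentions only $a$ and the augmentation $\q$ (and its mirror only $b$ and $\p$), so it is identical for $\Gamma$ and $\Gamma'$. The content is in condition \emph{(b)}. Here I would use Lemma~\ref{lem:matching_inded_context}: for any $a \sim_\Gamma^\varphi b$ and any descendant $a' \in \up a$, the bisimulation game reaches a matching $b' \in \up b$ with $a' \sim_{\Gamma \cup \Delta}^\varphi b'$, where $\Delta$ collects the negative nodes visited on the path. The key point is that $b'$ itself is determined by $a'$, $b$ and the bisimulation --- more precisely, by clause \emph{(b)}/\emph{(c)} of Definition~\ref{def:main_bisim} the justifier of each positive node along the follow-up is pinned down either by $\Gamma$ (when $\just(a') \in \dom(\Gamma)$) or by $\varphi$ (when $\just(a') \notin \dom(\Gamma)$). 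So whether $c = \just(a')$ must be registered in the context --- rather than left to be handled by $\varphi$ --- is decided by whether $\varphi(c)$ actually is $\just(b')$: if $\varphi(c) = \just(b')$ the pair need not be in the minimal context, and if $\varphi(c) \neq \just(b')$ it must be. Since $b'$ is intrinsic, this dichotomy is the same for $\Gamma$ and $\Gamma'$, giving $\Gamma_{a,b} = \Gamma'_{a,b}$. The mirror argument handles $\cod$.

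For the second assertion --- that $\Gamma_{a,b}$ is the minimal (for inclusion) context with $a \sim^\varphi_{\Gamma_{a,b}} b$ --- I would argue in two directions. That $\Gamma_{a,b}$ \emph{works}, i.e. $a \sim^\varphi_{\Gamma_{a,b}} b$, follows by inspecting each clause of Definition~\ref{def:main_bisim} along the follow-up: clause \emph{(a)}'s side condition $\Gamma \vdash (a,b)$ is preserved under restriction since we only drop pairs; clauses \emph{(b)} and \emph{(c)} still fire correctly because we kept exactly the pairs on which $\Gamma$ disagreed with $\varphi$ and the pairs arising as justifiers of descendants --- a node dropped from the context falls under clause \emph{(c)}, and $\varphi$ then does the right thing by construction of $\Gamma_{a,b}$ (this is where we need that the dropped pair $(c,d)$ satisfied $d = \varphi(c)$). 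Then minimality: if $\Gamma'' \subseteq \Gamma_{a,b}$ also satisfies $a \sim^\varphi_{\Gamma''} b$, apply the first part of the lemma with $\Gamma''$ in place of $\Gamma'$ to get $\Gamma''_{a,b} = \Gamma_{a,b}$; but $\Gamma''_{a,b} \subseteq \Gamma'' \subseteq \Gamma_{a,b}$ forces $\Gamma'' = \Gamma_{a,b}$.

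The main obstacle I expect is the verification that $\Gamma_{a,b}$ is itself a valid context realising $a \sim^\varphi_{\Gamma_{a,b}} b$ --- in particular, checking that restricting $\Gamma$ does not break the side condition ``$\Gamma \vdash (a,b)$'' (trivial, as it only weakens) but more delicately that every positive node encountered in the bisimulation still has its justifier correctly classified: a node whose justifier was in $\Gamma$ but got dropped must genuinely be handleable by clause \emph{(c)}, i.e. its image under the game must coincide with its image under $\varphi$, which is exactly what condition \emph{(b)} of Definition~\ref{def:minimal} (negated) guarantees. Making this induction precise --- tracking, co-depth by co-depth, that the matching node $b'$ produced by the game from $\Gamma_{a,b}$ agrees with the one produced from $\Gamma$ --- is the technical heart; everything else is bookkeeping with the mirror conditions.
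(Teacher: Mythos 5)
Your proposal is correct and follows essentially the same route as the paper's proof: condition \emph{(a)} of Definition \ref{def:minimal} is intrinsic, condition \emph{(b)} and the values $\Gamma(c)$ are pinned down via the context-independent matching of Lemma \ref{lem:matching_inded_context}, validity of the restricted context is obtained by trading clause \emph{(b)} for clause \emph{(c)} of Definition \ref{def:main_bisim} on dropped pairs, and minimality follows by applying the first part to an arbitrary witness. The paper's own argument is a terser version of exactly this.
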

\begin{proof}
	The equality comes from Lemma \ref{lem:matching_inded_context} and the
	definition of $\Gamma_{a,b}$ and $\Gamma'_{a,b}$. By induction,
	$a\sim_{\Gamma_{a,b}}^\varphi b$, since we can safely remove from
	$\dom(\Gamma)$ all $c$ that are never ``used'', \emph{i.e.} such that
	there exists no $a'\in\up a$ having $c$ as pointer; and all $c$ such
	that $\Gamma(c)=\varphi(c)$, because then we can use condition $(c)$ of
	Definition \ref{def:main_bisim} instead of condition $(b)$.  
	Finally, for any context $\Gamma''$ such that $a\sim_{\Gamma''}^\varphi b$, we have $\Gamma_{a,b}=\Gamma''_{a,b}\subseteq\Gamma''$, so $\Gamma_{a,b}$ is minimal for inclusion.
\end{proof}

This lemma allows us to write \emph{the minimal context for $a$, $b$} without mentioning $\Gamma$.

\subsection{Clones (Section \ref{subsec:clones})}\label{app:clones}

A key notion in the proof of positional injectivity is the notion of clones, a variation of bisimulation. Although the added constraint on contexts makes transitivity more challenging, we can still prove a variation of Lemma \ref{lem:bis_eq}. We use the same notation that for the usual bisimulation: for any $a, b$ events of an augmentation $\q$, $a\clone b$ means $a\clone^\id b$.

\begin{lemma}\label{lem:clone_trans}
	Consider $\q, \p, \r \in \Aug(A)$, $\varphi : \deseq \q \iso \deseq \p$
	and $\psi : \deseq \p \iso \deseq \r$, and:
	\[
	a \clone^\varphi b
	\qquad
	\qquad
	b \clone^\psi c
	\]
	for some $a \in \ev{\q}$, $b \in \ev{\p}$, and $c \in \ev{\r}$.
	Then, we also have $a \clone^{\psi \circ \varphi} c$.
\end{lemma}
\begin{proof}
	Consider $\Gamma_1$ and $\Gamma_2$ the minimal contexts such that $a\sim^\varphi_{\Gamma_1} b$ and $b\sim_{\Gamma_2}^\psi c$. 
	If $\cod(\Gamma_1) = \dom(\Gamma_2)$, the result is immediate by Lemma \ref{lem:bis_eq}: we get $a\sim_{\Gamma_2\circ\Gamma_1}^{\psi\circ\varphi}c$ with, for any $e\in\dom(\Gamma_1)=\dom(\Gamma_2 \circ \Gamma_1)$,
	\[
	\psi(\varphi(\just(e))) = \psi(\just(\Gamma_1(e))) = \just(\Gamma_2(\Gamma_1(e)))
	\]
	so $\Gamma_2\circ\Gamma_1$ preserves pointers and $a \clone^{\psi \circ \varphi} c$.
	
	Now, assume there exists $e\in\cod(\Gamma_1)$ such that $e\notin \dom(\Gamma_2)$. Since $\Gamma_1$ is minimal, there exists $b'\in\up b$ such that $\just(b')=e$. By $b\sim^\psi_{\Gamma_2} c$ and Lemma \ref{lem:matching_inded_context}, there exists a matching $c'\in\up c$ such that $b'\sim^\psi_{\Gamma_2\cup\Delta}$, with $\Delta$ the negative moves between $b$ and $b'$, paired with the negative moves between $c$ and $c'$. 
	Since $e\in\cod(\Gamma_1)$, $\neg (e\geq_\p b)$, so $e\notin \dom(\Delta)$. 
	Hence, 
	$\just(c')=\psi(e)$ and $\psi(e)\notin\cod(\Gamma_2)$. 
	So we can write
	\[
	b \sim_{\Gamma_2 \cup \{(e,\psi(e))\}}^\psi c
	\]
	where $\Gamma_2 \cup \{e,\psi(e)\}$ preserves pointers.
	Likewise, for any $e'\in\dom(\Gamma_2)$, $e'\notin\cod(\Gamma_1)$, we have $\Gamma_1 \cup \{\varphi^{-1}(e'),e'\}$ well-defined and pointer-preserving, such that
	\[
		a \sim^\varphi_{\Gamma_1 \cup \{(\varphi^{-1}(e'),e')\}} b\;.
	\]
	
	This allows us to define the following pointer-preserving contexts:
	\[
	\begin{array}{rl}
	\Gamma'_1 & := \Gamma_1 \cup 
		\{(\varphi^{-1}(e'),e')\tq e'\in\dom(\Gamma_2), e'\notin\cod(\Gamma_1)\}\\
	\Gamma'_2 & := \Gamma_2 \cup 
		\{(e,\psi(e)) \tq e\in\cod(\Gamma_1), e\notin \dom(\Gamma_2)\}
	\end{array}
	\]
	
	Then $\Gamma'_2\circ\Gamma'_1$ preserves pointers, and by Lemma \ref{lem:bis_eq} we have $a\sim_{\Gamma'_2\circ\Gamma'_1}^{\psi\circ\varphi}c$, so $a\clone^{\psi\circ\varphi} c$.	
\end{proof}

This allows us to prove equivalence properties for the clone relation.
\begin{lemma}\label{lem:clone_eq}
	Consider $\q, \p, \r \in \Aug(A)$ augmentations, with $\varphi :
	\deseq{\q} \iso \deseq{\p}$ and $\psi : \deseq{\p} \iso \deseq{\r}$ two  isomorphisms, and
	events $a \in \ev{\q}$, $b \in \ev{\p}$, $c \in \ev{\r}$:
	\[
	\begin{array}{rl}
	\text{\emph{reflexivity:}} &
	\text{$a \clone^\id a$,}\\ 
	\text{\emph{transitivity:}} &
	\text{if $a \clone^\varphi b$ and $b \clone^\psi c$, then $a
		\clone^{\psi \circ \varphi} c$,}\\ 
	\text{\emph{symmetry:}} &
	\text{if $a \clone^\varphi b$ then $b
		\clone^{\varphi^{-1}} a$.}
	\end{array}
	\]
\end{lemma}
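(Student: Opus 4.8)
The plan is to obtain all three properties from the matching clauses of Lemma \ref{lem:bis_eq} for the parametrized bisimulation $\sim$, the only extra work being to check that the witnessing contexts can be chosen to preserve pointers, as required by Definition \ref{def:clone}.

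For \emph{reflexivity}, I would simply invoke the reflexivity clause of Lemma \ref{lem:bis_eq} to get $a \sim^{\id}_{\emptyset} a$; the empty context vacuously preserves pointers, so $a \clone^{\id} a$. For \emph{transitivity}, there is in fact nothing new to prove: it is precisely the statement of Lemma \ref{lem:clone_trans}. That lemma is where the real subtlety lies, since the minimal contexts $\Gamma_1$ for $a \sim^{\varphi}_{\Gamma_1} b$ and $\Gamma_2$ for $b \sim^{\psi}_{\Gamma_2} c$ need not satisfy $\cod(\Gamma_1) = \dom(\Gamma_2)$; one must enlarge each — using Lemma \ref{lem:matching_inded_context} — into pointer-preserving contexts $\Gamma'_1, \Gamma'_2$ that do compose and whose composite still preserves pointers. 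So I expect the transitivity case to be the main obstacle in principle, but it is already dispatched by the preceding lemma and can be cited directly here.

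For \emph{symmetry}, suppose $a \clone^{\varphi} b$, witnessed by a pointer-preserving context $\Gamma$ with $a \sim^{\varphi}_{\Gamma} b$. The symmetry clause of Lemma \ref{lem:bis_eq} gives $b \sim^{\varphi^{-1}}_{\Gamma^{-1}} a$, so it remains only to verify that $\Gamma^{-1}$ preserves pointers. Given $b' \in \dom(\Gamma^{-1}) = \cod(\Gamma)$, write $b' = \Gamma(a')$, so that $\Gamma^{-1}(b') = a'$; applying $\varphi^{-1}$ to the identity $\varphi(\just(a')) = \just(\Gamma(a'))$ yields $\varphi^{-1}(\just(b')) = \just(a') = \just(\Gamma^{-1}(b'))$, as needed. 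Hence $b \clone^{\varphi^{-1}} a$. The only care required here is the bookkeeping between $\just$, the isomorphism and the inverted context, which amounts to a single application of $\varphi^{-1}$.
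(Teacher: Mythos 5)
Your proposal is correct and follows exactly the paper's own proof: reflexivity and symmetry via the corresponding clauses of Lemma \ref{lem:bis_eq} (with the observation that the empty context and the inverse of a pointer-preserving context preserve pointers), and transitivity by direct citation of Lemma \ref{lem:clone_trans}. Your explicit verification that $\Gamma^{-1}$ preserves pointers just spells out what the paper states in one line.
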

\begin{proof}
	\emph{Reflexivity.} By Lemma \ref{lem:bis_eq}, $a\sim^\id a$, which implies $a\clone^\id a$.
	
	\emph{Transitivity.} See Lemma \ref{lem:clone_trans}.
	
	\emph{Symmetry.} Immediate by Lemma \ref{lem:bis_eq}: if $\Gamma$ preserves pointers, so does $\Gamma^{-1}$.
\end{proof}

Clones through $\id$ in characteristic expansions will be especially
interesting, because then we can partition equivalence classes of
$\clone^\id$ into successors of forks.

\twinclones*
%
\begin{proof}
	If $X = \{\init(\q)\}$, $a_1=a_2$ and the result is immediate by determinism and reflexivity. Otherwise, we note $b$ the predecessor of both $a_1$ and $a_2$ for $\leq_\q$.
	
	First, we prove that $b_1$ and $b_2$ are bisimilar. Since $\q$ is a $-$-obsessional expansion of $\p$, there exists a (unique, by Lemma \ref{lem:mor_unique}) morphism $\varphi : \q \to \p$. By Lemma \ref{lem:oexp_to_sim},
	\[
	b_1 \sim_{\Gamma(b_1)} \varphi(b_1)
	\qquad \text{and}\qquad
	b_2 \sim_{\Gamma(b_2)} \varphi(b_2)\;.
	\]
	
	By $-$-linearity of $\p$, we know that $\varphi(a_1)=\varphi(a_2)$, which implies $\varphi(b_1)=\varphi(b_2)$ by determinism. So $\cod(\Gamma(b_1))=\cod(\Gamma(b_2))$, and by Lemma \ref{lem:bis_eq},
	\[
	b_1\sim_{\Gamma(b_2)^{-1}\circ\Gamma(b_1)}b_2\;.
	\]
	
	Writing $\Gamma_1:=\Gamma(b_1)$, $\Gamma_2:=\Gamma(b_2)$ and
$\Gamma:=\Gamma_2^{-1} \circ \Gamma_1$, it remains to check
that $\Gamma$ preserves pointers. Consider $c\in \dom(\Gamma) =
[b_1]^-_\q$, either $c = a_1$ or $c\leq_\q b$. If $c = a_1$, then
$\Gamma_1(a_1)=\varphi(a_1)=\Gamma_2(a_2)$,
so $\Gamma(a_1)=a_2$ (and both have the same pointer $b$ by courtesy).
Otherwise, $c\leq_\q b$, so $c\in\dom(\Gamma_2)$ and
$\Gamma_1(c)=\varphi(c)=\Gamma_2(c)$, hence $\Gamma(c)=c$.
In both cases, $\Gamma$ preserves pointers, so finally $b_1 \clone b_2$.
\end{proof}

\subsection{Positional Injectivity (Section \ref{subsec:pos_inj})}
In this section, we prove additional lemmas needed in the proof of Lemma \ref{lem:key}.

\begin{lemma}\label{lem:causal_depth}
	Consider $\q,\p\in\Aug(A)$ two augmentations such that there
exists an isomorphism $\varphi:\deseq\q\iso\deseq\p$. Consider
$a\in\ev{\q}$, $b\in\ev{\p}$ and $\Gamma$ a pointing context s.t. $a\sim_\Gamma^\varphi b$. 

Then $a$ and $b$ have the same co-depth.	
\end{lemma}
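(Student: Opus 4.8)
The plan is to prove Lemma~\ref{lem:causal_depth} by induction on the co-depth of $a$. First I would record that the co-depth of an event depends only on the tree $\tuple{\ev{\q},\leq_\q}$: an event is of co-depth $1$ exactly when it is maximal in $\q$, and otherwise its co-depth equals $1+\max\{\,c\mid c \text{ is the co-depth of some }a' \text{ with }a\imc_\q a'\,\}$. In particular the co-depth strictly decreases along $\imc_\q$, so this induction is well founded ($\q$ being a finite tree). Note that neither the isomorphism $\varphi$ nor the fact that $\Gamma$ is pointing plays any role here; only the bisimulation clauses \emph{(1)} and \emph{(2)} of Definition~\ref{def:main_bisim} are used, together with clause \emph{(a)}, which via $\partial_\q(a)=\partial_\p(b)$ gives $\lambda(a)=\lambda(b)$, so that the same clause applies on both sides.

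For the base case, assume $a$ is maximal in $\q$. If $b$ had a successor $b\imc_\p b'$, then applying the symmetric half of clause \emph{(1)} (when $a,b$ are positive) or of clause \emph{(2)} (when they are negative) would produce $a\imc_\q a'$ with $a'\sim b'$ in the appropriate context, contradicting maximality of $a$. Hence $b$ is maximal in $\p$, and both co-depths are $1$.

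For the inductive step, suppose $a$ has co-depth $k+1>1$. For every $a\imc_\q a'$, clause \emph{(1)} or \emph{(2)} (according to the common polarity of $a$ and $b$) yields some $b\imc_\p b'$ with $a'\sim_{\Gamma'}^\varphi b'$; since the co-depth of $a'$ is at most $k$, the induction hypothesis gives that $a'$ and $b'$ have the same co-depth. Symmetrically, for every $b\imc_\p b'$ there is $a\imc_\q a'$ with $a'\sim b'$, hence again equal co-depths. Therefore the set of co-depths of successors of $a$ and the set of co-depths of successors of $b$ coincide, so they have the same maximum, and $a$ and $b$ have the same co-depth $k+1$.

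I do not expect any genuine obstacle: the only points needing a little care are the polarity case split when invoking clause \emph{(1)} versus clause \emph{(2)}, and checking that the correspondence supplied by the bisimulation runs in both directions, so that the two sets of successor co-depths have the same support. This should be the lightest lemma of the section — indeed it is exactly the ingredient tacitly used in the proof of Lemma~\ref{lem:key} when asserting that the clone relation preserves co-depth.
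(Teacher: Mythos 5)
Your proof is correct and is exactly the induction the paper has in mind: its own proof of Lemma~\ref{lem:causal_depth} is just the one-liner ``straightforward by induction,'' and your argument (base case via maximality transferred by the symmetric halves of clauses \emph{(1)}/\emph{(2)}, inductive step matching successor co-depths in both directions and taking the maximum) is the intended filling-in of that remark. The points you flag as needing care --- the polarity split justified by $\partial_\q(a)=\partial_\p(b)$, and the two-way matching of successors --- are indeed the only delicate ones, and you handle them correctly.
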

\begin{proof}
	Straightforward by induction.
\end{proof}

\begin{lemma}\label{lem:context_add}
	Consider $\q,\p\in\Aug(A)$ two augmentations such that there exists an isomorphism $\varphi:\deseq\q\iso\deseq\p$. Consider $a\in\ev{\q}$, $b\in\ev{\p}$ and $\Gamma$ a pointing context such that $a\sim_\Gamma^\varphi b$. Then for any $c\in\ev{\q}$ such that $c\notin\dom(\Gamma)$, $\varphi(c)\notin\cod(\Gamma)$ and $\neg (c\in\up a)$, $\neg (\varphi(c)\in\up b)$, we have $a\sim_{\Gamma \cup \{(c,\varphi(c))\}}^\varphi b$. Moreover, for any $c\in\ev{\q}$ and $d\in\ev{\p}$ such that 
	\[
	\begin{array}{ccccc}
		c\notin\dom(\Gamma)\;, & \quad & \neg (c\in\up a)\;,
		& \quad & \forall a'\in\up a, \just(a')\neq c\;,\\
		d\notin\cod(\Gamma)\;, & \quad & \neg (d\in\up b)\;,
		& \quad & \forall b'\in\up b, \just(b')\neq d\;,\\
	\end{array}
	\]
	then we also have $a\sim_{\Gamma \cup \{(c,d)\}}^\varphi b$.
\end{lemma}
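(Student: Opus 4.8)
The plan is to establish both halves of the statement by a single induction on the \emph{co-depth} of $a$ in $\q$; since the recursive calls of Definition \ref{def:main_bisim} always pass to a strict descendant of $a$, this measure strictly decreases. Throughout I would treat as implicit the hypotheses that make the conclusion well-typed as a bisimulation over a context, namely that $c$ is negative with $\partial_\q(c) = \partial_\p(\varphi(c))$ in the first half (the latter being automatic from \emph{arena-preserving} for $\varphi$), and that $c, d$ are negative with $\partial_\q(c) = \partial_\p(d)$ in the second. The intuition driving the argument is that the inserted pair — $(c,\varphi(c))$ or $(c,d)$ — is \emph{off to the side} of the region explored by the bisimulation game starting from $(a,b)$: since $c \notin \up a$, the event $c$ never occurs on the $\q$-side of that game nor in the accumulated context; symmetrically $\varphi(c)$ (resp. $d$) never occurs on the $\p$-side; and in the second half $c$ is moreover never the justifier of an event of $\up a$. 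Hence inserting the pair cannot interfere with any clause of Definition \ref{def:main_bisim}.

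Concretely, at the induction step for $a \sim^\varphi_\Gamma b$ I would write $\Gamma'$ for the enlarged context and verify $a \sim^\varphi_{\Gamma'} b$ clause by clause. For \emph{(a)}: $\Gamma \vdash (a,b)$ is inherited, and the two new obligations $\neg(c >_\q a)$ and $\neg(\varphi(c) >_\p b)$ (resp. $\neg(d >_\p b)$) hold because $c \in \up a$ and $\varphi(c) \in \up b$ (resp. $d \in \up b$) are excluded by hypothesis. For \emph{(1)} and \emph{(2)}: these delegate to a child $a'$ of $a$ with context $\Gamma \cup \{(a',b')\}$ (resp. $\Gamma$); since $\up a' \subseteq \up a$ and any added child pair $(a',b')$ has $a' \in \up a$, $b' \in \up b$, all the side-conditions on $c$ and its partner persist, so the induction hypothesis applies verbatim — the symmetric halves of \emph{(1)} and \emph{(2)} are treated identically with $\q,\p$ swapped. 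The only subtle clauses are \emph{(b)}/\emph{(c)}, about $\just(a)$ for $a$ positive: I would split on whether $\just(a) = c$. If $\just(a) \neq c$, then $\Gamma'$ agrees with $\Gamma$ on $\just(a)$ and, by injectivity of $\varphi$, $\just(b) = \varphi(\just(a)) \neq \varphi(c)$ (resp. $\just(b) \neq d$ by the mirror hypothesis $\just(b') \neq d$ for $b' \in \up b$), so the clause is unaffected. If $\just(a) = c$ — impossible in the second half, where $a \in \up a$ and the hypothesis forbids $\just(a') = c$ for $a' \in \up a$ — then in the first half clause \emph{(c)} for $\Gamma$ has already yielded $\just(b) = \varphi(\just(a)) = \varphi(c) = \Gamma'(c)$, so clause \emph{(b)} now holds for $\Gamma'$, effecting the switch.

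I expect the main difficulty to be purely organizational: tracking exactly which side-conditions on $(c,\varphi(c))$ / $(c,d)$ survive the transition from $(a,b,\Gamma)$ to $(a',b',\Gamma\cup\{(a',b')\})$ or $(a',b',\Gamma)$, and remembering that the $\just(a) = c$ case is dispatched by two different mechanisms in the two halves (consistency of $\varphi(c)$ in the first, a standing hypothesis in the second). The cleanest write-up carries the full package of hypotheses on the inserted pair through the induction, checks their stability at each recursive call, and only then invokes the induction hypothesis, after which each of \emph{(a)}, \emph{(b)}/\emph{(c)}, \emph{(1)}, \emph{(2)} reduces to a short verification.
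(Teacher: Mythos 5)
Your proposal is correct and follows essentially the same route as the paper, which dispatches this lemma as ``straightforward by induction'' with exactly your case split: either $c$ is never used as a justifier in the bisimulation game from $(a,b)$ and the added pair is inert, or it is used via clause \emph{(c)} and the enlarged context lets clause \emph{(b)} take over. Your write-up just makes the clause-by-clause verification and the stability of the side-conditions under the recursive calls explicit.
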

\begin{proof}
	Straightforward by induction. Either $c$ is never used in the bisimulation (no one in $\up a$ points to $c$), and we can pair it with any $d$ which is not used either and add $(c,d)$ to $\Gamma$ (as long as we still have $(\Gamma\cup\{(c,d)\})\vdash (a,b)$); or it is used with condition \emph{(c)} of Definition \ref{def:main_bisim} and we can add $(c,\varphi(c))$ to $\Gamma$ and use condition \emph{(b)} instead.
\end{proof}

\begin{lemma}\label{lem:minimal_context_clone}
	Consider $\q\in \Aug(A)$ and $a, b \in \ev{\q}$ such that $a \clone b$.

	Then the minimal context for $a$ and $b$ is either empty or $\Gamma:\{c\}\iso\{b\}$.
\end{lemma}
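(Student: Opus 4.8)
The plan is to bound $\dom(\Gamma_{a,b})$ --- hence, $\Gamma_{a,b}$ being a bijection, also $\cod(\Gamma_{a,b})$ --- by a single element, by a tree argument. Since $a \clone b$, I would first fix a pointer-preserving context $\Gamma$ with $a \sim^{\id}_\Gamma b$. By Lemma \ref{lem:minimal}, $\Gamma_{a,b} \subseteq \Gamma$, so $\Gamma_{a,b}$ is again pointer-preserving, satisfies $a \sim^{\id}_{\Gamma_{a,b}} b$, and, by the construction in the proof of Lemma \ref{lem:minimal}, has no fixed point while each $c \in \dom(\Gamma_{a,b})$ has $\just(a') = c$ for some $a' \in \up a$. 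If $\Gamma_{a,b} = \emptyset$ there is nothing to prove; otherwise I fix an arbitrary $c \in \dom(\Gamma_{a,b})$ and set $d := \Gamma_{a,b}(c)$.

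Next I would record the constraints forced on such a pair $(c,d)$. Since a context pairs only negative moves and $\Gamma_{a,b}$ is fixed-point-free, $c$ and $d$ are negative, $c \neq d$, and $\partial_\q(c) = \partial_\q(d)$; moreover neither is $\init(\q)$, for if $c = \init(\q)$ then $\partial_\q(c) \in \min(A)$ by Lemma \ref{lem:aug_mu_pres_min}, whence $\partial_\q(d) \in \min(A)$ and $d = \init(\q) = c$, absurd. Thus $\pred_\q(c)$ and $\pred_\q(d)$ exist; by \emph{courteous} they coincide with $\just(c)$ and $\just(d)$, which are equal by pointer-preservation, so $\pred_\q(c) = \pred_\q(d) =: e$, and $c, d$ are two distinct $\imc_\q$-children of $e$ carrying the same display. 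Finally, picking $a' \in \up a$ with $\just(a') = c$, \emph{rule-abiding} gives $c \leq_\q a'$ while $a \leq_\q a'$, so $c$ and $a$ are comparable in the tree $\q$; and $\Gamma_{a,b} \vdash (a,b)$ (part of $a \sim^{\id}_{\Gamma_{a,b}} b$) forbids $c >_\q a$, so $c \leq_\q a$ and hence $e <_\q a$. Symmetrically, $d \leq_\q b$ and $e <_\q b$.

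The heart of the argument is then a manipulation in the tree $\q$. Suppose $c_1 \neq c_2$ both lie in $\dom(\Gamma_{a,b})$, with associated $d_i$ and $e_i := \pred_\q(c_i)$ as above. Both $e_1, e_2$ lie strictly below $a$, hence are comparable; they cannot coincide, since two distinct children $c_1 \neq c_2$ of a single node $e_1 = e_2$ cannot both be $\leq_\q a$ in a tree. So, after swapping the indices if needed, assume $e_1 <_\q e_2$. Then $c_1$ is the unique child of $e_1$ lying $\leq_\q a$, so from $e_1 <_\q e_2 \leq_\q a$ we get $c_1 \leq_\q e_2$; since $d_2 \neq c_2$ is another child of $e_2$, this yields $c_1 \leq_\q e_2 <_\q d_2$, whence $c_1 <_\q d_2 \leq_\q b$ and so $c_1 \leq_\q b$. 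But $d_1 \leq_\q b$ too, so $c_1$ and $d_1$ are comparable --- contradicting that they are distinct children of $e_1$. Therefore $\dom(\Gamma_{a,b})$ is a singleton $\{c\}$, and $\Gamma_{a,b}$ is the single pair $\{(c,d)\}$ with $d \neq c$ --- the announced form $\{c\} \iso \{b\}$, with the statement's $b$ naming $d := \Gamma_{a,b}(c)$.

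The step I expect to be the main obstacle is the tree bookkeeping in the third paragraph, especially justifying ``$c_1$ is the unique child of $e_1$ lying $\leq_\q a$'' and the resulting $c_1 \leq_\q e_2$; both follow from the comparabilities of the second paragraph together with $\tuple{\ev{\q}, \leq_\q}$ being a tree, but care is needed to set them up cleanly. Finally, the degenerate possibility $c = a$ --- which can occur when $a$ (hence $b$) is negative --- requires no special treatment, since the argument uses only $c \leq_\q a$, $d \leq_\q b$, and that $c$ and $d$ are distinct siblings.
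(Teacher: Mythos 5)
Your proof is correct and follows essentially the same route as the paper's: assuming two distinct elements in the minimal context, pointer-preservation plus courtesy force each pair $(c_i,\Gamma(c_i))$ to be distinct siblings of a common predecessor, and tree-comparability (everything in play lies below $a$ or below $b$) then contradicts the fixed-point-freeness of the minimal context. The only organisational difference is that you order the predecessors $e_1 <_\q e_2$ rather than the $c_i$ themselves, which spares you the paper's case split on how $\Gamma(c_1)$ and $\Gamma(c_2)$ compare; you also rightly read the statement's $\{c\}\iso\{b\}$ as $\{c\}\iso\{d\}$ with $d=\Gamma_{a,b}(c)$.
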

\begin{proof}
	Assume, seeking a contradiction, that the minimal context $\Gamma$ has at least two distinct elements $c_1, c_2 \in\dom(\Gamma)$. First, we can remark that since $a\clone b$, there exists $\Gamma'$ a pointers-preserving context such that $a\sim_{\Gamma'} b$, and since $\Gamma$ is a restriction of $\Gamma'$, $\Gamma$ also preserves pointers.	
	
	By condition \emph{(a)} of Definition \ref{def:minimal}, $c_1 \leq_\q a$ and $c_2 \leq_\q a$. Therefore, $c_1\leq_\q c_2$ or $c_2 \leq_\q c_1$ --
	assume \emph{w.l.o.g.} that it is the former. By courtesy, $\just(c_1) \leq_\q \just(c_2)$ as well. For the same reason, $\Gamma(c_1) \leq_\q \Gamma(c_2)$ or $\Gamma(c_2) \leq_\q \Gamma(c_1)$. 
	
	If it is the latter, this entails that $\just(\Gamma(c_2)) \leq_\q \just(\Gamma(c_1))$ by courtesy; \emph{i.e.}, since $\Gamma$ preserves pointers, $\just(c_2)\leq_\q \just(c_1)$. So $\just(c_1) = \just(c_2)$, and because $c_1, c_2
	\leq_\q a$, we have $c_1 = c_2$, contradiction. 
	
	So, $\Gamma(c_1) \leq_\q \Gamma(c_2)$, and $\Gamma(c_1)\neq\Gamma(c_2)$ by hypothesis. By courtesy, this entails that $\Gamma(c_1) \leq_\q \just(\Gamma(c_2))$. Likewise, $c_1 \leq_\q c_2$ entails $c_1 \leq_\q \just(c_2)$. Moreover, $\Gamma$ preserves pointers, so $\just(c_2) = \just(\Gamma(c_2))$. Hence, we have both
	\[
	\Gamma(c_1) \leq_\q \just(c_2)
	\qquad
	\qquad
	c_1 \leq_\q \just(c_2)\,,
	\]
	so $c_1$ and $\Gamma(c_1)$ are comparable for $\leq_\q$ since $\q$ is a forest. But they are negative, so they have the same antecedent by courtesy. This implies $c_1 = \Gamma(c_1)$, which contradicts condition \emph{(b)} of Definition \ref{def:minimal}.
\end{proof}


\begin{lemma}\label{lem:clone_lift}
	Consider $\q,\p\in\Aug(A)$, $\varphi:\deseq\q\iso\deseq\p$.
	Consider also $a^+\in\ev{\q}$ s.t. $\suc(a)=\bigcup_{i\in I} G_i$,
	where $I\subseteq\N$ and for $i\in I$, $G_i = \{b_{i,1},\ldots,
	b_{i,2^i} \}\in\twin(\q)$ with $\sharp G_i = 2^i$.
	
	Then we have $a\clone^\varphi\varphi(a)$, provided the two conditions
	hold:
	\begin{align}
		\text{if $b_{i,j}\imc_\q c_{i,j}$, then $\varphi(b_{i,j})\imc_\p d_{i,j}$ and $c_{i,j}\clone^\varphi d_{i,j}$}\;, \label{eq:lift} \\
		\text{if $\varphi(b_{i,j})\imc_\p d_{i,j}$, then
			$b_{i,j}\imc_\q c_{i,j}$ and $c_{i,j}\clone^\varphi d_{i,j}$}\;. \label{eq:lift_sym}
	\end{align}
\end{lemma}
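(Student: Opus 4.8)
The plan is to produce a pointer-preserving context $\Gamma$ with $a \sim^\varphi_\Gamma \varphi(a)$, which by Definition~\ref{def:clone} is exactly $a \clone^\varphi \varphi(a)$. First I would record the local geometry: since $a$ is positive each $b_{i,j}$ is negative, so by \emph{courteous} (in $\q$ and in $\p$) together with the fact that $\varphi$ is a configuration isomorphism commuting with $\just$, one gets $\varphi(a) \imc_\p \varphi(b_{i,j})$, and $\{\varphi(b_{i,j})\}_{i,j}$ is precisely the set of $\leq_\p$-successors of $\varphi(a)$. For each non-maximal $b_{i,j}$, \emph{deterministic} gives it a unique positive successor $c_{i,j}$, and \eqref{eq:lift} supplies $d_{i,j}$ with $\varphi(b_{i,j}) \imc_\p d_{i,j}$ and $c_{i,j} \clone^\varphi d_{i,j}$; symmetrically \eqref{eq:lift_sym} shows $b_{i,j}$ is maximal iff $\varphi(b_{i,j})$ is, and (by determinism in $\p$) that $d_{i,j}$ is the unique positive successor of $\varphi(b_{i,j})$. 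Let $\Delta_{i,j}$ be the minimal context for $c_{i,j} \clone^\varphi d_{i,j}$ given by Lemma~\ref{lem:minimal}; being a restriction of a pointer-preserving context, it is pointer-preserving, and its domain lies in $[c_{i,j}]^-_\q = [a]^-_\q \cup \{b_{i,j}\}$ (codomain in $[d_{i,j}]^-_\p = [\varphi(a)]^-_\p \cup \{\varphi(b_{i,j})\}$).

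The heart of the argument is a structural analysis of the $\Delta_{i,j}$: using pointer-preservation, \emph{courteous}, the forest structure, and the minimality of $\Delta_{i,j}$ — in the spirit of Lemma~\ref{lem:minimal_context_clone} — one shows that each $\Delta_{i,j}$ acts as $\varphi$ on its domain (in particular $\Delta_{i,j}(b_{i,j}) = \varphi(b_{i,j})$ whenever $b_{i,j}$ occurs, forced by $\just(b_{i,j}) = a$), so the restrictions $\Delta_{i,j} \cap ([a]^-_\q \times [\varphi(a)]^-_\p)$ all coincide with a single pointer-preserving context $\Gamma_0 \subseteq [a]^-_\q \times [\varphi(a)]^-_\p$, and each $\Delta_{i,j} \subseteq \Gamma_0 \cup \{(b_{i,j},\varphi(b_{i,j}))\}$. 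I then set $\Gamma := \Gamma_0$. Note $\Gamma \vdash (a,\varphi(a))$ holds automatically since $\dom(\Gamma)$ (resp. $\cod(\Gamma)$) consists of events $<_\q a$ (resp. $<_\p \varphi(a)$), and clause~\emph{(b)}/\emph{(c)} of Definition~\ref{def:main_bisim} at $(a,\varphi(a))$ holds because $\Gamma$ acts as $\varphi$ and $\varphi$ commutes with $\just$.

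It then remains to run the bisimulation game. Clause~\emph{(1)} at $(a,\varphi(a))$: for $a \imc_\q b_{i,j}$ pick the successor $\varphi(b_{i,j})$ of $\varphi(a)$, so the running context becomes $\Gamma \cup \{(b_{i,j},\varphi(b_{i,j}))\}$, which still satisfies $\vdash (b_{i,j},\varphi(b_{i,j}))$ as $\Gamma$ lives strictly below; the symmetric direction is identical, using that $\{\varphi(b_{i,j})\}_{i,j}$ exhausts the successors of $\varphi(a)$. At the negative node $(b_{i,j},\varphi(b_{i,j}))$ clause~\emph{(1)} is vacuous and clause~\emph{(2)} asks, for $b_{i,j} \imc_\q c_{i,j}$, for a successor $d$ of $\varphi(b_{i,j})$ with $c_{i,j} \sim^\varphi_{\Gamma \cup \{(b_{i,j},\varphi(b_{i,j}))\}} d$ (and symmetrically, handled by \eqref{eq:lift_sym} and determinism in $\p$); take $d = d_{i,j}$ and upgrade $c_{i,j} \sim^\varphi_{\Delta_{i,j}} d_{i,j}$ to the context $\Gamma \cup \{(b_{i,j},\varphi(b_{i,j}))\}$ via Lemma~\ref{lem:context_add}: the additional pairs $(\Gamma \cup \{(b_{i,j},\varphi(b_{i,j}))\}) \setminus \Delta_{i,j}$ are of the form $(e,\varphi(e))$ with $e \notin \up c_{i,j}$, and by minimality of $\Delta_{i,j}$ they are not used by the bisimulation from $c_{i,j}$, so Lemma~\ref{lem:context_add} applies. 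When $b_{i,j}$ is maximal, clause~\emph{(2)} is vacuous by \eqref{eq:lift_sym}. This closes the bisimulation, yielding $a \sim^\varphi_\Gamma \varphi(a)$ with $\Gamma$ pointer-preserving, hence $a \clone^\varphi \varphi(a)$.

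The step I expect to be the main obstacle is the structural analysis of the minimal contexts $\Delta_{i,j}$ in the second paragraph: pinning down their shape precisely enough that they are mutually compatible below $a$ and that the discrepancy with the context the bisimulation naturally carries along consists only of harmless (unused, $\varphi$-consistent) pairs, so that Lemma~\ref{lem:context_add} can reconcile them. Since \emph{clone} is defined by an existential over contexts, all the bookkeeping — canonical/minimal forms, gluing, and extension — has to be done by hand, combining pointer-preservation with \emph{courteous} and the forest structure as in Lemma~\ref{lem:minimal_context_clone}.
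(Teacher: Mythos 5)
Your overall architecture (take the minimal contexts of the inductively-known clone relations, glue them into one pointer-preserving context below $a$, then run two steps of the bisimulation game) is the paper's architecture. But the step you yourself flag as the main obstacle is resolved incorrectly, and this is a genuine gap. You claim that each minimal context $\Delta_{i,j}$ ``acts as $\varphi$ on its domain'', so that the glued context $\Gamma_0$ is $\varphi$-consistent. This cannot be right: by clause \emph{(b)} of Definition~\ref{def:minimal}, the minimal context retains \emph{only} pairs $(e,f)$ with $f \neq \varphi(e)$ (pairs agreeing with $\varphi$ are discarded because clause \emph{(c)} of Definition~\ref{def:main_bisim} handles them), so under your claim every $\Delta_{i,j}$ would be empty and $\clone^\varphi$ would collapse to $\sim^\varphi$. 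That is false in general, and it is the entire point of introducing clones: in Figures~\ref{fig:ex_bisim} and~\ref{fig:fig_ex_clones}, sibling copies triggered by a fork have their pointers permuted relative to $\varphi$, so the witnessing context genuinely pairs an $e_{i,j} \in [b_{i,j}]^-_\q$ with some $f_{i,j} \neq \varphi(e_{i,j})$. Consequently your final step also breaks: you cannot use Lemma~\ref{lem:context_add} to pass from $\Delta_{i,j}$ to $\Gamma \cup \{(b_{i,j},\varphi(b_{i,j}))\}$ if doing so drops or replaces an essential $\varphi$-inconsistent pair, since that lemma only permits \emph{adding} unused or $\varphi$-consistent pairs, never removing ones the bisimulation actually exploits.

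What is actually needed, and what the paper proves, is the following. By Lemma~\ref{lem:minimal_context_clone} each minimal context is empty or a singleton $\{(e_{i,j},f_{i,j})\}$, and one checks $f_{i,j} \in [\varphi(a)]^-_\p$ with $f_{i,j} \neq \varphi(b_{i,j})$. The crux is \emph{compatibility}: if a descendant of another branch $b_{k,l}$ also points to $e_{i,j}$, then its $\varphi$-side counterpart is forced to point to the \emph{same} $f_{i,j}$. This is where the real work lies — one derives $\just(\just(d'_{k,l})) = \just(f_{i,j})$ from pointer-preservation, then uses \emph{courteous} and the forest structure of $\p$ to promote equality of justifiers-of-justifiers to equality of justifiers (the paper's Figure~\ref{fig:just_clones}). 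Only then is $\Gamma = \bigcup_{i,j}\Gamma_{i,j}$ a well-defined pointer-preserving bijection, generally \emph{not} acting as $\varphi$, for which Lemma~\ref{lem:context_add} yields $b_{i,j} \sim^\varphi_\Gamma \varphi(b_{i,j})$ and hence $a \sim^\varphi_\Gamma \varphi(a)$. Your proof is missing this compatibility argument entirely, and the shortcut you substitute for it is false.
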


\begin{proof}
	First, remark that $\partial_{\q}(a)=\partial_\p(\varphi(a))$ and $\varphi(\just(a))=\just(\varphi(a))$.
	
	For any $i\in I$, $1\leq j \leq 2^i$, let $\Gamma_{i,j}$ be the
	minimal context for $b_{i,j}$ and $\varphi(b_{i,j})$. Such a context
	exists since either $b_{i,j}$ has no successors, and by
	\eqref{eq:lift_sym} neither does $\varphi(b_{i,j})$, either $b_{i,j}$ has
	only one (by determinism) and $c_{i,j}\clone^\varphi d_{i,j}$ by
	\eqref{eq:lift}. In both cases, $b_{i,j}\clone^\varphi \varphi(b_{i,j})$.
	
	We wish to take the union of all $\Gamma_{i,j}$ as the context for $a$
	and $\varphi(a)$, but this is only possible if they are
	\emph{compatible}. More precisely, we must ensure
	that for all $e\in\q, i,k\in I, 1\leq j \leq 2^i$
	and $1\leq l \leq 2^k$, if there are $c'_{i,j}\in\up b_{i,j}$ and
	$c'_{k,l}\in\up b_{k,l}$ having both $e$ as justifier, then their
	matching $d'_{i,j}\in\up\varphi(b_{i,j})$ and
	$d'_{k,l}\in\up\varphi(b_{k,l})$ also have the same justifier. This can
	only be a problem if $e$ appears in $\dom(\Gamma_{i,j})$ or in
	$\dom(\Gamma_{k,l})$ as otherwise both justifiers are $\varphi(e)$.
	
	For all $i,j$, $\Gamma_{i,j}$ has either one or zero element by Lemma
	\ref{lem:minimal_context_clone}. If all $\Gamma_{i,j}$ are empty, we can
	directly lift the clone relation to $a$.
	Otherwise, consider $i,j$ s.t. $\Gamma_{i,j}:\{e_{i,j}\}\iso\{f_{i,j}\}$.	
	From Definition \ref{def:minimal}, $e_{i,j} \in [b_{i,j}]^-_\q$ and
	$f_{i,j} \in [\varphi(b_{i,j})]^-_\p$. Actually we have
	$f_{i,j}\in [\varphi(a)]^-_\p$: indeed $f_{i,j}\neq
	\varphi(b_{i,j})$, since $e_{i,j}$ and $f_{i,j}$ have the same
	justifier through $\varphi$ and the only $e\in[b_{i,j}]^-_{\q}$ s.t.
	$\varphi(\just(e))= \just(\varphi(b_{i,j}))$ is $b_{i,j}$, which
	contradicts Definition \ref{def:minimal}. 
	
	Now, assume that for some $k,l$, there exists $c'_{k,l}\in\up b_{k,l}$
	s.t. $\just(c'_{k,l})=e_{i,j}$. Since $b_{k,l}\clone^\varphi
	\varphi(b_{k,l})$, there is a matching $d'_{k,l}\in\up
	\varphi(b_{k,l})$ such that 
	\[
	\varphi(\just(e_{i,j}))=\just(\just(d'_{k,l}))\,.
	\]

	For $b_{i,j}\sim^\varphi_{\Gamma_{i,j}} \varphi(b_{i,j})$ and
	$b_{k,l}\sim^\varphi_{\Gamma_{k,l}} \varphi(b_{k,l})$ to be compatible,
	we need $\just(d'_{k,l})=f_{i,j}$. But since $\Gamma_{i,j}$ preserves
	pointers, 
	\[
	\varphi(\just(e_{i,j}))=\just(f_{i,j})\,.
	\]

	Putting both equalities together, we obtain 
	\[
	\just(\just(d'_{k,l})) =
	\just(f_{i,j}) \,,
	\]
	where $\just(d'_{k,l}) \in [d'_{k,l}]^-_\p$ and
	$f_{i,j} \in [\varphi(a)]^-_\p$. But
	$[\varphi(a)]^-_\p\subseteq[d'_{k,l}]^-_\p$, which is a fully ordered
	set for $\leq_\p$, so $\just(d'_{k,l})$ and $f_{i,j}$ are comparable.
	Moreover, they are negative, so by courtesy
	\[
	\just(\just(d'_{k,l})) =
	\just(f_{i,j})$ iff $\pred(\just(d'_{k,l})) = \pred(f_{i,j}) \,
	\]
	where
	$\pred$ is the predecessor for $\leq_\p$. Hence,
	$\just(d'_{k,l})=f_{i,j}$ (see Figure \ref{fig:just_clones}, where
	$\imc$ represents $\imc_\q$, $\cdots$ represents $\imc_{\deseq{\q}}$,
	and $\rightarrow$ represents $\leq_\q$ (and the same applies for $\p$)).
	
	\begin{figure}
		\begin{minipage}{0.45\textwidth}
			\scalebox{1}{
				\[
				\xymatrix@R=2pt@C=5pt{
					& \just(e_{i,j})^+ \ar@{-|>}[d] & \\
					& e_{i,j} \ar@{.}@/_2.5pc/[dddddl] \ar@{.}@/^2.5pc/[dddddr] 
					\ar@{->}[dd] & \\
					& & \\
					& a^+ \ar@{-|>}[dl] \ar@{-|>}[dr] & \\
					b_{i,j} \ar@{->}[dd] & & b_{k,l}  \ar@{->}[dd] \\
					& & \\
					c'_{i,j} & & c'_{k,l}
				}
				\]
			}
		\end{minipage}
		\hfill
		\begin{minipage}{0.45\textwidth}
			\scalebox{1}{
				\[
				\xymatrix@R=2pt@C=5pt{
					& \just(f_{i,j})^+ \ar@{-|>}[dl] \ar@{-|>}[dr] & \\
					f_{i,j}\ar@{.}@/_1.7pc/[ddddd] \ar@{->}[ddr] &
					& \just(d'_{k,l}) \ar@{.}@/^1.7pc/[ddddd] \ar@{->}[ddl] \\
					&  & \\
					& \varphi(a)^+ \ar@{-|>}[dl] \ar@{-|>}[dr] & \\
					\varphi(b_{i,j}) \ar@{->}[dd] & & \varphi(b_{k,l}) \ar@{->}[dd] \\
					& & \\
					d'_{i,j} & & d'_{k,l}
				}
				\]
			}
		\end{minipage}
		\caption{Justifiers in $\q$ and $\p$}
		\label{fig:just_clones}
	\end{figure}
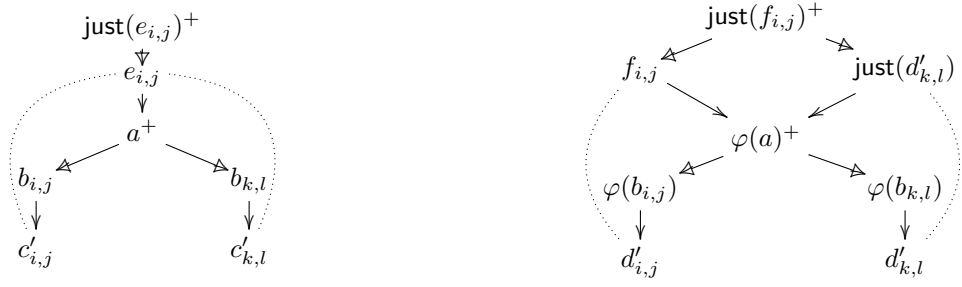
	
	So all contexts $\Gamma_{i,j}$ are compatible. Writing $\Gamma =
	\cup_{i,j}\Gamma_{i,j}$ it follows that $b_{i,j}\sim^\varphi_{\Gamma}
	\varphi(b_{i,j})$ via Lemma \ref{lem:context_add}; which entails that $a
	\sim^\varphi_{\Gamma} \varphi(a)$ by two steps of the bisimulation game.
	This implies $a\clone^\varphi \varphi(a)$ since all $\Gamma_{i,j}$
	preserve pointers. 
\end{proof}

\section{Beyond Total Finite Strategies: Proofs from Section \ref{sec:beyond_total_finite}}
\label{app:partial}

We now give the proof of Theorem \ref{th:partial}. 

Consider $\sigma_1, \sigma_2 : A$ finite (but not necessarily total)
innocent strategies. If they are empty, there is nothing to prove.
Otherwise, let $2n+2$ be the length of $s$ the longest P-view among them.
\emph{W.l.o.g.}, assume that $s\in\pviews{\sigma_1}$.
\begin{figure}[t]
\[
\raisebox{78pt}{$
\xymatrix@R=5pt@C=5pt{
&&&&\bq^-_0
	\ar@{-|>}[d]\\
&&&&\bq^+_0
	\ar@{.}[dll]
	\ar@{-|>}@/_.1pc/[dll]
	\ar@{.}[d]
	\ar@{-|>}@/_.1pc/[d]
	\ar@{.}[drr]
	\ar@{-|>}@/_.1pc/[drr]\\
\text{$n$ copies}
&&\bq^-_1
	\ar@{-|>}[d]&\dots
&\bq^-_1\ar@{-|>}[d]&\dots
&\bq^-_1\ar@{-|>}[d]\\
&&&&\bq^+_1
        \ar@{.}[dll]
        \ar@{-|>}@/_.1pc/[dll]
        \ar@{.}[d]
        \ar@{-|>}@/_.1pc/[d]
        \ar@{.}[drr]
        \ar@{-|>}@/_.1pc/[drr]&&~\\
\text{$n-1$ copies}
&&\bq^-_2
	\ar@{-|>}[d]&\dots
&\bq^-_2\ar@{-|>}[d]&\dots
&\bq^-_2\ar@{-|>}[d]\\
&&&&\dots	\ar@{-|>}[d]&&~\\
&&&&\bq^+_{n-1}
	\ar@{.}[d]
	\ar@{-|>}@/_.1pc/[d]\\
\text{$1$ copy}&&&&\bq^-_n
	\ar@{-|>}[d]\\
&&&&\bq^+_n
}$}
\qquad
\in 
\qquad
\exp\left(
\raisebox{78pt}{$
\xymatrix@R=5pt{
\bq^-_0	\ar@{-|>}[d]\\
\bq^+_0	\ar@{.}[d]
	\ar@{-|>}@/_.1pc/[d]\\
\bq^-_1	\ar@{-|>}[d]\\
\bq^+_1	\ar@{.}[d]
	\ar@{-|>}@/_.1pc/[d]\\
\bq^-_2	\ar@{-|>}[d]\\
\dots	\ar@{-|>}[d]\\
\bq^+_{n-1}
	\ar@{.}[d]
	\ar@{-|>}@/_.1pc/[d]\\
\bq^-_n	\ar@{-|>}[d]\\
\bq^+_n
}$}
\right)
\]
\caption{Wide expansion of a P-view}
\label{fig:wide_exp}
\end{figure}
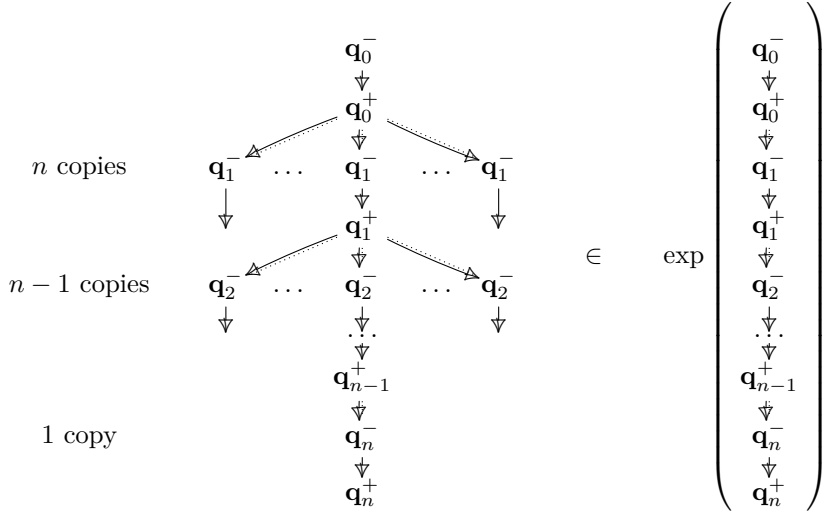

Consider $\p_1$ the sub-augmentation of $\caus{\sigma_1}$ restricted to
prefixes of $s$ -- it is a linear augmentation of length $2n+2$,
as shown on the right hand side of Figure \ref{fig:wide_exp}.
We build the \textbf{wide expansion} $\q_1 \in \exp(\p_1)$ as in
the left hand side of Figure \ref{fig:wide_exp}. It is the unique
$-$-obsessional and $+$-obsessional expansion of $\p_1$ s.t. each
fork of co-depth $2k$ has cardinality $k$ (except for the initial
move). So for any $1\leq k \leq n$, they are $\frac{n!}{(n-k)!}$ copies of $\bq^+_k$.

Now, since $\positions{\sigma_1} = \positions{\sigma_2}$, Proposition
\ref{prop:pos_sig_pos_caus} entails
$\positions{\caus{\sigma_1}} = \positions{\caus{\sigma_2}}$. 
Therefore, there is $\q_2 \in
\exp(\caus{\sigma_2})$ along with some isomorphism $\varphi : \deseq{\q_1} \iso \deseq{\q_2}$.
By abuse of notation, we keep referring to events of $\ev{\q_2}$ with
the same naming convention as in Figure \ref{fig:wide_exp}, this is
justified by the isomorphism $\varphi$.

Now, we study the shape of $\q_2$. It is a tree starting with the
unique initial move $\bq^-_0$, and by courtesy it cannot break the causal
links from positives to negatives; so we may regard it as a tree whose
nodes are the $\bq^+_k$'s. For each $0 \leq k \leq n$, it has exactly
$n!/k!$ nodes of arity $k$ (by \emph{arity}, we mean the number of
children in the tree structure) and by hypothesis its depth is
bounded by $n+1$. 
The essence of the situation is captured by the
following simplified setting:

Fix $n \in \mathbb{N}$. \textbf{Simple trees} are finite trees made
of nodes $\circled{k}$ of arity $k$ for $0\leq k \leq n$. We set
$T_0 = \circled{0}$, and for $k>0$, $T_{k}$ is the tree with root
$\circled{k}$ and $k$ copies of $T_{k-1}$ as children. 
If $t$ is a simple tree, its \textbf{size} $\sharp t$ is its
number of nodes, and its \textbf{depth} is the maximal number of nodes
reached in a path. For instance, the depth of $T_k$ is $k+1$ and 
\[
\sharp T_k = k! \sum_{i=0}^k \frac{1}{i!}\,.
\]

Now, let us consider the set $\Trees(n)$ of simple trees of depth $\leq
n+1$, and having, for $2 \leq k \leq n$, $\frac{n!}{k!}$ nodes
$\circled{k}$, and arbitrarily many nodes $\circled{1}$ and
$\circled{0}$. We prove:

\begin{lemma}\label{lem:trees}
Let $t \in \Trees(n)$ of maximal size. Then, $t = T_n$.
\end{lemma}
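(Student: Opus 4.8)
The plan is to strip the statement down to a clean optimisation problem and then solve it by counting and exchange arguments.

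First I would record the two easy reductions. One checks directly that $T_n \in \Trees(n)$: it has depth $n+1$, and $\circled{k}$ occurs $\frac{n!}{k!}$ times in $T_n$ for every $0 \le k \le n$, so the required counts for $2\le k\le n$ hold. Then, for an arbitrary $t \in \Trees(n)$, writing $a_k$ for its number of $\circled{k}$-nodes, the identities $\#\{\text{edges}\} = \sharp t - 1$ and $\#\{\text{edges}\} = \sum_k k\,a_k$ give $a_0 = 1 + \sum_{k\ge 2}(k-1)a_k$; for $t \in \Trees(n)$ the sum telescopes and this forces $a_0 = n!$. Since the counts $a_k$ for $k\ge 2$ are also fixed, $\sharp t = n! + a_1 + M$ with $M = \sum_{k=2}^{n}\frac{n!}{k!}$ constant, so $\sharp T_n = 2\cdot n! + M = n!\sum_{i=0}^n \frac1{i!}$ (corresponding to $a_1 = n!$), and \emph{maximising $\sharp t$ is exactly maximising the number $a_1$ of unary nodes}.

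Next I would contract away the unary nodes of $t$: this yields a tree $t'$ whose non-leaf nodes are exactly the arity-$\ge 2$ nodes of $t$ (with the same arities) and whose leaves are the $n!$ $\circled{0}$-nodes, the unary nodes of $t$ sitting on the edges of $t'$ (or above its root). For a leaf $\ell$ of $t'$ at $t'$-depth $\delta'(\ell)$, its depth in $t$ is $\delta'(\ell)$ plus its number of unary ancestors, which must be $\le n+1$. From $a_1 = \#\{\text{unary nodes}\}\le \sum_{u\text{ unary}}\#\{\text{leaves below }u\} = \sum_\ell \#\{\text{unary ancestors of }\ell\}\le \sum_\ell\bigl(n+1-\delta'(\ell)\bigr)$, together with the fact that this bound is attained by placing $n+1-\delta'(\ell)$ unary nodes directly above each leaf $\ell$, one obtains
\[
\max_{t\in\Trees(n)} a_1(t)\;=\;(n+1)\cdot n!\;-\;\min_{t'}\sum_\ell \delta'(\ell),
\]
the minimum ranging over trees $t'$ with the prescribed multiset of internal arities, $n!$ leaves, and depth $\le n+1$; moreover, equality throughout forces each unary node to lie directly above a single leaf and each leaf to have $t$-depth exactly $n+1$.

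It then remains to minimise the total leaf-depth $\sum_\ell \delta'(\ell)$ over such $t'$. An exchange argument handles the shape of a minimiser: if a leaf sat at depth $d$ while an internal node $v$ sat at depth $d'>d$, then swapping the subtree of $v$ (which contains $\ge 2$ leaves, since $v$ has arity $\ge 2$) into the leaf's slot, and the leaf into $v$'s slot, would strictly decrease $\sum_\ell\delta'(\ell)$; hence in a minimiser all internal nodes occupy the top levels and all leaves sit at the deepest one or two internal levels. A counting argument pins down the shape: by a greedy/exchange argument, at most $\sum_{i=1}^{d}\frac{n!}{(n-i+1)!}$ internal nodes fit in the first $d$ levels (the optimum putting the $\circled{n-i+1}$-nodes at level $i$), so since $\sum_{i=1}^{n-2}\frac{n!}{(n-i+1)!} = M - \frac{n!}{2} < M$ the internal nodes cannot all fit above level $n-1$, while fitting all $M$ of them into exactly $n-1$ levels forces the layered tree $B_n$ ($\circled{n}$ at depth $1$, $\circled{n-1}$ at depth $2$, \dots, $\circled{2}$ at depth $n-1$), in which every non-leaf slot is filled and all $n!$ leaves lie at depth $n$. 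Hence $\min_{t'}\sum_\ell\delta'(\ell) = n\cdot n!$, uniquely at $t'=B_n$, so $\max a_1 = n!$, uniquely at $t = T_n$; since $\Trees(n)$ is finite and nonempty the maximum is attained, and therefore equals $T_n$.

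The step I expect to be the real obstacle is the last counting/exchange argument, specifically its \emph{uniqueness} half: showing that the only packing of all $M$ internal nodes into $n-1$ levels is the fully-saturated layered arrangement $B_n$. This forces one to exploit the precise multiplicities of the multiset level by level — there is exactly one $\circled{n}$, which is thus forced to the root; this provides exactly $n$ child slots, which must be saturated by the $n$ copies of $\circled{n-1}$ (leaving no room for smaller-arity nodes there), and so on inductively, each level's slots being exactly consumed by the nodes of the next smaller arity. The remaining ingredients — the edge count, the bound for a fixed shape $t'$, and the leaf/internal-node swap — are routine.
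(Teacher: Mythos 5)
Your proof is correct in its essentials, but it takes a genuinely different route from the paper's. The paper argues directly by contradiction: it compares a maximal $t$ with $T_n$ top-down, locates the shallowest row where they disagree, shows the offending node must have arity $p<k$ strictly smaller than the arity $k$ prescribed there by $T_n$, finds an occurrence of $\circled{k}$ strictly deeper, and performs a single local swap (given by a figure) producing a strictly larger member of $\Trees(n)$. You instead start from the edge-count identity $a_0=1+\sum_{k\ge2}(k-1)a_k$, which telescopes to force $a_0=n!$ and hence reduces maximising $\sharp t$ to maximising the number $a_1$ of unary nodes; you then contract the unary nodes and convert the problem into minimising total leaf depth of the contracted tree, solved by a leaf/subtree exchange plus a level-by-level packing bound. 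What your route buys is a decomposition into independently checkable steps and some by-products (the forced values $a_0=n!$ and $\max a_1=n!$, and the identification of the contracted optimum $B_n$); what it costs is length, and in particular the packing lemma --- at most $\sum_{i=1}^{d} n!/(n-i+1)!$ internal nodes fit in the first $d$ levels, with equality only for the fully layered arrangement --- which you rightly flag as the real work. It does go through: writing $F(c)$ for the sum of the $c$ largest available arities, one has $c_{i+1}\le 1+F(c_i)$ and $N_{i+1}=1+F(N_i)$, so the bound follows by induction and the equality case forces saturation level by level, exactly as you sketch. The paper's proof is shorter but leans on the figure for the key transformation and leaves implicit the checks that the swap preserves the depth bound and strictly increases size; both proofs are, at bottom, exchange arguments applied to different normal forms of the tree.
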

\begin{proof}
Seeking a contradiction, assume $t$ is distinct from $T_n$. Consider a
minimal node where they differ, \emph{i.e.} closest to the root -- say
$t$ has some $\circled{p}$ at the row corresponding to $\circled{k}$'s
in $T_n$. If $k=0$ then $p>0$ and this contradicts that the depth of $t$
is less than $n$. So, $k\geq 1$. If $p>k$, then $p\geq 2$. But by
minimality, $t$ is the same as $T_n$ for all rows closer to the root, so
all $\circled{p}$ for $p>k$ are exhausted. Hence, $p<k$. If $k=1$ and
$p=0$, then we may replace $\circled{p}$ with $T_1$, yielding $t' \in
\Trees(n)$ of size strictly greater than $\sharp t$, contradicting
maximality. Otherwise, $k\geq 2$. Then the number of
nodes $\circled{k}$ is fixed, there are fewer of those on this row as
for $T_n$, and they cannot occur on rows closer to the root. Therefore,
there is an occurrence of $\circled{k}$ strictly deeper in $t$. 

We then perform the transformation as in the diagram:
\[
\includegraphics[scale=.6]{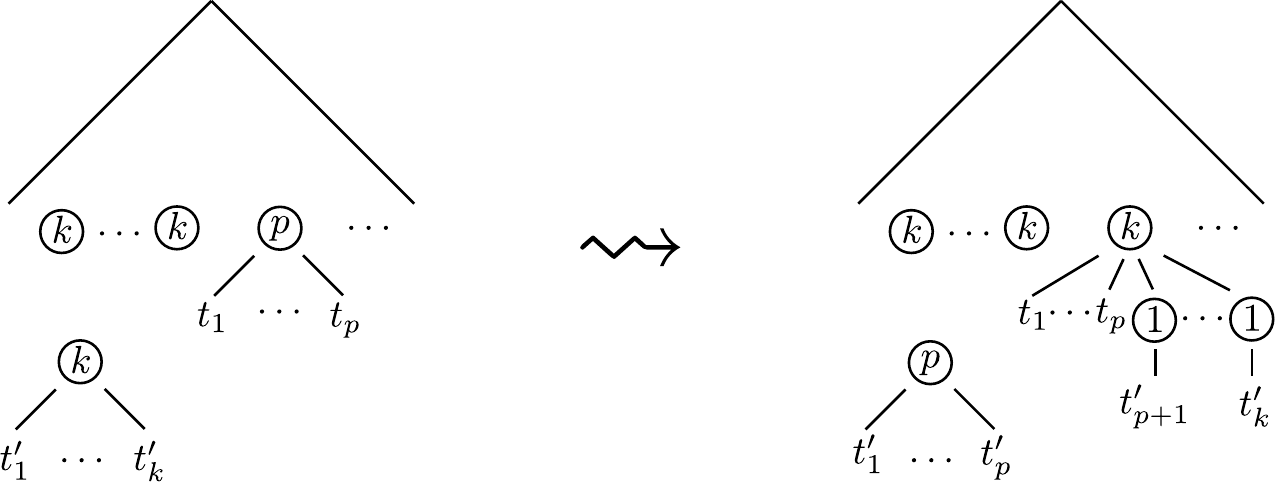}
\]

This yields $t' \in \Trees(n)$. But $\sharp t' > \sharp
t$, contradicting the maximality of $t$. 
\end{proof}

Now, from $\q_2$ we extract a simple tree $t(\q_2) \in \Trees(n)$ as
follows. For each $0 \leq k \leq n$, to each $\bq_{n-k}^+$ we associate
a node $\circled{k}$, with edges as in $\q_2$. Because all P-views in
$\sigma_2$ have length lesser or equal to $2n+2$ and $\q_2 \in
\exp(\caus{\sigma_2})$, $t(\q_2)$ has depth $\leq n+1$. The constraints
on the number of each node are ensured by the isomorphism $\varphi :
\deseq{\q_1} \iso \deseq{\q_2}$. Therefore $t(\q_2) \in \Trees(n)$, and
by Lemma \ref{lem:trees}, $t(\q_2) = T_n$.

This induces directly an isomorphism $\psi$ between $(\q_1,
\leq_{\q_1})$ and $(\q_2, \leq_{\q_2})$. Note that there is a priori no
reason why $\psi$ and $\varphi$ would coincide, so to conclude, we must
still check that $\psi$ preserves $\imc_{\deseq{\q_1}}$, \emph{i.e.}
justification pointers. Assume $\bq^-_j \imc_{\deseq{\q_1}} \bq^+_i$.
Then, $\bq^+_i$ has arity $n-i$, and $\just(\just(\bq^+_i)) = \bq^+_j$
of arity $n-j$. But then, by construction, it follows that 
for any move $a^+\in \ev{\q_1}$ of arity $n-i$, $\just(\just(a))$ has
arity $n-j$. This is transported by the isomorphism $\varphi$, so this
property also holds for $\q_2$. Now, consider $\psi(\bq^+_i) \in
\ev{\q_2}$. Its justifier is some $b^- \in \ev{\q_2}$ such that
$\just(b^-)$ has arity $n-j$. But as arity is preserved by $\psi$, there
is only one move with this property in the causal history of
$\psi(\bq^+_i)$, namely $\psi(\bq^-_j)$. So, $\psi$ preserves pointers.

Finally, $\psi$ also preserves the image in the arena: 
by construction of $\q_1$, all positive moves with the same arity have
the same image, and all negative moves whose justifiers have the same
arity also have the same image. Hence, the image only depends on the
arity, which is a property of $\deseq{\q_1}$; and since $\deseq{\q_1}$
and $\deseq{\q_2}$ are isomorphic, the same holds for $\q_2$. Since
$\psi$ preserves arity and justifiers, it also preserves the image in
the arena. 

We have constructed an isomorphism
$\psi : \q_1 \iso \q_2$. Consider a maximal branch 
\[
\bq^-_0 \imc_{\q_2} \bq^+_0 \imc_{\q_2} \dots \imc_{\q_2} \bq^+_n
\]
of $\q_2$. Since $\q_2 \in \exp(\caus{\sigma_2})$, we have a
morphism $\nu : \q_2 \to \caus{\sigma_2}$. Its image by $\nu$ is
\[
s_1 \imc_{\caus{\sigma_2}} s_1 s_2 \imc_{\caus{\sigma_2}} \dots
\imc_{\caus{\sigma_2}} s_1 \dots s_{2n+2}
\]
a sequence of prefixes of a P-view $s_1 \dots s_{2n+2}$, which, using that
$\q_1$ is an expansion of $\p_1$ and $\psi : \q_1 \iso \q_2$, is
immediately seen to be exactly $s$. Hence, $s \in \sigma_2$ as claimed.

\end{document}